\documentclass[12pt]{article}

\usepackage[T1]{fontenc}
\usepackage[english]{babel}
\usepackage{csquotes}
\usepackage[
style=apa, 
backend=biber, 
natbib=true, 
maxcitenames=2, 
mincitenames=1,
uniquename=false,
uniquelist=false,
sortcites=true,
sorting=ynt]
{biblatex}
\bibliography{orderstat}

\usepackage[hidelinks]{hyperref}
\hypersetup{colorlinks=true, allcolors=blue}

\usepackage{titlesec} 
\usepackage{bookmark}
\usepackage{graphicx}
\usepackage{amsmath}
\usepackage{amsfonts}
\usepackage{amssymb}
\usepackage[mathscr]{eucal}
\usepackage{bm}
\usepackage[normalem]{ulem}
\usepackage{url}
\usepackage{tablefootnote}
\usepackage{placeins}
\usepackage{bbm}
\usepackage{mathtools}
\usepackage{enumitem}
\usepackage{authblk}
\usepackage{pdflscape}

\usepackage{algorithm}
\usepackage{algpseudocode}

\usepackage{xcolor}
\newcommand{\dsphere}{\mathbb{S}^{d-1}}
\newcommand{\RR}{\mathbb{R}}
\newcommand{\E}{\mathbb{E}}
\newcommand{\Var}{\operatorname{Var}}
\newcommand{\Cov}{\operatorname{Cov}}
\newcommand{\sgn}{\operatorname{sgn}}
\newcommand{\dd}{\,\mathrm{d}}

\usepackage{amsthm}
\newtheorem{theorem}{Theorem}[section]
\newtheorem{corollary}{Corollary}[theorem]
\newtheorem{proposition}[theorem]{Proposition}
\newtheorem{lemma}{Lemma}

\author[1]{E. Mackay}
\author[2]{J. Richards}
\author[3]{P. Jonathan}
\affil[1]{Department of Engineering, University of Exeter, Penryn, TR10 9FE, UK. e.mackay@exeter.ac.uk. ORCID: 0000-0001-7121-4231}
\affil[2]{School of Mathematics and Maxwell Institute for Mathematical Sciences,  University of Edinburgh, Edinburgh, EH9 3FD, UK. jordan.richards@ed.ac.uk. ORCID: 0000-0002-0697-2551}
\affil[3]{School of Mathematical Sciences, Lancaster University, Lancaster, LA1 4YF, UK. p.jonathan@lancaster.ac.uk. ORCID: 0000-0001-7651-9181}

\title{Diagnostic Tools for Extreme Value Regression Models}
\date{\today}

\begin{document}
\bookmarksetup{startatroot} 
\currentpdfbookmark{Main Paper}{main_paper_bookmark}

\maketitle    

\begin{abstract}
Visual and quantitative goodness-of-fit diagnostics are an important tool in the practitioner's toolbox. The need for convincing and reliable diagnostics is particularly clear when fitting extreme value regression models, which are used for extrapolation far beyond the observable range of the response variable, and often evaluated at unobserved covariate values. Despite this, few diagnostics have been developed for extreme value regression models, and those available often suffer in terms of interpretability or scalability on low-dimensional or non-Euclidean covariate domains, often encountered in modern applications. Moreover, existing methods tend to offer a global perspective on model fit; that is, they quantify goodness-of-fit across the entire dataset, without offering insight into regions of the covariate space where the model fit may be poor. We propose two novel visual diagnostics for extreme value regression models: the \textit{standardised tail plot} and the \textit{normalised residual plot}. By considering the asymptotic distribution of normalised exceedance probabilities, we show that uncertainty bounds for our plots are approximately independent of the sample size used in their construction. This allows us to propose visual diagnostics which can efficiently and consistently compare goodness-of-fit at both a global and regional level, despite varying sample sizes over regions of the covariate domain. Following a discussion of summary statistics for global and regional goodness-of-fit, we provide two applications of extreme value regression models that illustrate how our diagnostics can be used to perform model comparison (across thousands of candidate models) and provide actionable findings that support model design.
\end{abstract}

\noindent\textbf{Keywords:} Covariates; Goodness-of-fit;  Model testing; Non-stationarity; Summary statistics\\

\noindent\textbf{AMS 2000 Subject Classifications:} Primary 62G32; Secondary 62J20

%%%%%%%%%%%%%%%%%%%%%%%%%%%%%%%%%%%%%%%%%%%%%%%%%%%%%%%
\section{Introduction} \label{sec:intro}
Extreme value regression involves modelling the extremes of a response variable conditional on covariates. Such models have been applied in a wide range of settings, including hydrology \citep{lee2020application, anzolin2024nonstationary}, environmental risk assessment \citep{hundecha2008nonstationary, krock2022nonstationary, le2022non, majumder2025semi}, climate \citep{chavez2005, renard2012bayesian, cheng2014non, vasiliades2015nonstationary, robin2020nonstationary}, coastal protection \citep{dixon1999effect, razmi2017non, ragno2019generalized, baldan2022importance}, offshore engineering \citep{Randell2016, hansen2020directional, Zanini2020, Barlow2023, tendijck2024practical}, surrogate models for structural responses \citep{gramstad2020sequential, zhao2024surrogate}, finance and insurance \citep{mcneil2000estimation, chavez2016extreme, hambuckers2018understanding}, public health \citep{nadarajah2023extreme}, life expectancy \citep{einmahl2019limits}, energy demand forecasting \citep{sigauke2017modelling}, and sports analytics \citep{pauli2001penalized}.  

Beyond the standard regression setting with a univariate response variable, extreme value regression models have recently been applied to the estimation of multivariate extremes \citep[see, e.g.,][]{murphy2024deep, murphy2024inference, simpson2024estimating, majumder2023semiparametric, wadsworth2024statistical, demonte2025generative, mackay2026spar, murphy2026exploring}. After a transformation to pseudo-polar coordinates, the joint extremes of a random vector can be modelled in terms of the extremes of a radial variable conditioned on pseudo-angle \citep{mackay2025spar}. In this context, the pseudo-angle can be treated as a covariate in a regression model for the radial variable. In the following, without loss of generality, we will consider the problem of the regression of a univariate response variable conditioned on covariates. 

A key step in statistical modelling is the assessment of the fit of a model to observed data. Assessing model fit is particularly pertinent in the context of extreme value modelling; if a model does not agree with the observed distribution of sub-asymptotic extreme events, then this gives little confidence when using the model to extrapolate outside the range of the observations, to make inference on truly extreme events. The fit of an extreme value model can be assessed using quantitative diagnostics and statistical tests, but various visual diagnostics have also been proposed. These can serve a range of purposes. In the present work, we focus on the following: i) assessing how well the model fits to data; ii) identifying regions of covariate space where the model fit can be improved; iii) facilitating comparison between candidate models. The first objective is to assess whether differences between the model and observations are within the range expected due to sampling effects. Here, we are interested in testing the null hypothesis that an estimated distribution is equal to the true data-generating distribution. To identify parts of the covariate domain where a model is performing poorly requires the use of targeted \textit{local} or \textit{regional} diagnostics, that describe model performance for a particular value of a covariate or region of the covariate domain. Moreover, diagnostics that describe the performance of a model \textit{globally}, that is, for all observed covariate-response pairs, can fail to identify models that perform poorly at a local or regional level \citep{zhao2021diagnostics}. Regarding the third objective, regression models can involve the specification of various hyperparameters used to control the flexibility and smoothness of the covariate dependencies, particularly in a semi-parametric or fully non-parametric setting. The availability of diagnostics for assessing the quality of model fit across the covariate domain aids in comparing various candidate models and selecting optimal hyperparameters. Selecting between candidate models based on comparison of a loss function (e.g., a log-likelihood) or an information criterion provides a relative comparison, but does not provide a quantification of whether the selected model is a good fit to the data. More generally, the adequacy of information criteria in selecting appropriate model complexity in an extreme value regression setting is the topic of some debate \citep[see e.g.][]{gelman2014understanding, zhang2023information}.

In this work, we introduce several new visual and quantitative diagnostic tools for extreme value regression models, designed to meet the three objectives given above. The goodness-of-fit diagnostics proposed here can be be applied to any type of parametric, semi-parametric, or non-parametric extreme value regression model, whether this is a threshold exceedance (or peaks-over-threshold) model, a block maxima regression model, or otherwise; the only criteria is that the approach provides estimates of conditional distribution functions. The key feature of the proposed diagnostics is that they are designed to assess the fit of a model in the upper tail of the conditional distribution. For threshold exceedance regression models, a (potentially non-stationary) threshold must also be estimated. In the present work, we do not consider diagnostics for threshold selection. Instead, the focus is on assessing the fit of a model after a threshold has been selected. For further discussion of threshold selection, see, for example, \citet{scarrott2012review, murphy2025}.

A key feature of our proposed diagnostics is that they can be applied to situations where the there is large variation in the response over the covariate domain. In these cases, partitioning the domain into regions where the response distribution is approximately stationary would result in a large number of regions, making it infeasible to manually inspect diagnostic plots for every region. The visual diagnostics introduced in this work enable an assessment of the model performance on a standardised scale, so that goodness-of-fit information from many regions can be combined and displayed in a single plot. This provides a fast visual assessment of the model performance over the entire covariate domain. 

The paper is organised as follows. In Section~\ref{sec:existing}, we briefly review common diagnostics for stationary univariate extreme value models, and discuss the challenges that arise when extending these approaches to the regression setting. We also discuss related diagnostic plots for models for conditional distributions. Section~\ref{sec:standard_regional} introduces two novel diagnostic plots; the \textit{standardised tail plot} and the \textit{normalised residual plot}. The standardised tail plot provides a consistent visualisation of the fit of an extreme value regression model across all regions of the covariate space. The plot makes use of an asymptotic property of the sampling distribution of exceedance probabilities of extreme order statistics, which we demonstrate provides a good approximation even with relatively small sample sizes. In contrast, the normalised residual plot does not rely on asymptotic results, and shows the significance of the deviation between the model and observations as a function of rank, with the deviation on a standard normal scale. This type of plot can provide an indication of the model fit at all probability levels, rather than just at the extremes.Section~\ref{sec:summary} discusses how the information in the standardised tail and normalised residual plots can be summarised, to give a quantification of global model performance. As well as considering the use of standard goodness-of-fit statistics, we introduce two new summary statistics that are defined in terms of the residuals used to construct our diagnostic plots. We derive the asymptotic distributions of these statistics and consider their sensitivity for detecting lack of fit in the upper tails of a distribution, compared to standard goodness-of-fit tests. The proposed diagnostics are illustrated in Section~\ref{sec:example} using two examples involving both simulated data and surrogate model output, with covariate dimension $d=4$ and $d=3$, respectively. Finally, conclusions are presented in Section~\ref{sec:conclusion}. For readers wishing to apply the diagnostics, R and MATLAB code is available from \url{https://github.com/edmackay/Diagnostics-for-extreme-value-regression}.

%%%%%%%%%%%%%%%%%%%%%%%%%%%%%%%%%%%%%%%%%%%%
\section{Existing model diagnostics} \label{sec:existing}
\subsection{Diagnostic plots for univariate extreme value models} \label{sec:stationary_plots}

Suppose we have a sample of observations $\{y_1,\dots,y_n\}$ of a random variable $Y\in\mathcal{Y}\subseteq\RR$ with continuous distribution function $F_Y$. These samples are used to estimate a model distribution function, denoted by $\hat{F}_Y$. For univariate extreme value models, commonly-used diagnostic plots include probability-probability (PP) plots and quantile-quantile (QQ) plots \citep{Coles2001}. These plots make use of the fact that the probability integral transform (PIT) of $Y$, denoted by $P = F_Y(Y)$, is uniformly distributed on $[0,1]$. Therefore, if $\hat{F}_Y = F_Y$, the values $\{\hat{F}_Y (y_1),\dots,\hat{F}_Y (y_n)\}$ are a sample from a uniform distribution. Let $y_{(1)}\le \cdots \le y_{(n)}$ denote the ordered sample, where $k$ is the (ascending) rank associated with $y_{(k)}$. Each ordered observation $y_{(k)}$ is assigned an empirical non-exceedance probability $p_k\in[0,1]$. The definition of $p_k$ varies between practitioners, with $p_k = (k-0.5)/n$ or $p_k = k/(n+1)$ being common choices. The value assigned to $p_k$ is known as the plotting position, and the most appropriate choice depends on the type of diagnostic plot \citep[see, e.g.,][]{cunnane1978unbiased, leon1984another, arnell1986unbiased}.

A PP plot consists of the pairs
\begin{equation*}
    \left\{ \left(p_k,\, \hat{F}_Y(y_{(k)})\right) : k=1,\dots,n \right\},
\end{equation*}
and a QQ plot consists of the pairs
\begin{equation*}
    \left\{ \left( \hat{F}_Y^{-1}(p_{k}),\, y_{(k)}\right) : k=1,\dots,n \right\},
\end{equation*}
where $\hat{F}_Y^{-1}$ is the model's estimated quantile function. For both PP and QQ plots, if $\hat{F}_Y$ is a reasonable model, then the points should lie close to the unit diagonal. Some works \citep[e.g.,][]{beirlant2004} advocate the use of QQ plots after transformation to some standard distribution $F_0$, so that differences between a model and observations are invariant to the data-generating distribution. A transformed QQ plot consists of the pairs
\begin{equation} \label{eq:transform_QQ}
    \left\{ \left(F_0^{-1}(p_k),\, F_0^{-1}\big(\hat{F}_Y(y_{(k)})\big)\right) : k=1,\dots,n \right\}.
\end{equation}
In the extreme value literature, $F_0$ is often taken as the standard exponential distribution\footnote{In the context of regression models, $-\log\big(1-\hat{F}_Y(y_{(k)})\big)$ is known as the Cox-Snell residual \citep{cox1968general}. See e.g. \citet{heffernan2001extreme} for application to extreme value regression models.} as its use accentuates differences in the upper tail of $F_Y$; in more general statistical applications, where deviations in the bulk of data are more of interest, $F_0$ is often taken as the standard normal distribution. 

As an alternative to PP and QQ plots, exceedance probability (EP) plots are popular in the engineering literature \citep[see, e.g.,][]{Randell2016, hansen2020directional}. They consist of two sets of points overlaid on the same plot, with one set of points corresponding to the observations and the other corresponding to the fitted model:
\begin{align*}
    \mathrm{Observations:} & \quad \left\{ \left(y_{(k)}, 1-p_k \right) : k=1,\dots,n \right\},\\
    \mathrm{Model:} & \quad \left\{ \left(y, 1 - \hat{F}_Y(y)\right) : y\in [a,b] \right\}.
\end{align*}
The exceedance probabilities are shown on a logarithmic scale to provide a better visualisation of the upper tail of the  distribution. The range of values $[a,b]$ used for the model may exceed the observed range $[y_{(1)},y_{(n)}]$ in order to illustrate how the model extrapolates from observations. In some contexts, the return period is shown instead of the exceedance probability, but since these are reciprocal quantities, return level plots provide the same diagnostic information. While we do not focus on EP plots hereafter, we present their details here to illustrate the connection between PP, QQ, and EP plots.

%%%%%%%%%%%%%%%%%%%%%%%%%%%%%%%%%%%%%%%%%%%%
\subsection{Challenges for regression models} \label{sec:challenges}
For regression models, the underlying probability model assumes that we have a response variable $Y\in\mathcal{Y}\subseteq\RR$ whose distribution is conditional on a covariate $\bm{X} \in \mathcal{X}\subseteq \RR^d$, $d\in\mathbb{N}_{>0}$, with conditional distribution function $F_{Y|\bm{X}}$.  Data used for estimating $F_{Y|\bm{X}}$ typically take the form of samples of pairs $\{(\bm{x}_1,y_1),\dots,(\bm{x}_n,y_n)\}$. Defining an empirical estimate of the conditional distribution function is more challenging than for the unconditional $F_Y$, as is constructing useful visual diagnostics. To see this, we note that the covariate vector $\bm{X}$ may be random or deterministic, and may take either discrete or continuous values (or a combination thereof). If $\bm{X}$ has a continuous density, then the expected number of observations at any given value $\bm{X}=\bm{x}$ is zero, and estimating $F_{Y|\bm{X}}(y \, | \, \bm{x})$ by empirical ranking at discrete values of $\bm{x}$ is not possible. To circumvent this issue, one option is to consider an aggregate distribution $\Pr(Y\le y \,|\, \bm{X} \in \mathcal{R}(\bm{x}))$, where $\mathcal{R}(\bm{x})$ is some region around $\bm{x}$, then compare the model and observations in this region. The downside of this approach is that it loses information about any variation in the distribution of $Y$ over the region. See \citet{Randell2016, hansen2020directional} for examples of the use of this type of diagnostic.

If there is large variation in the conditional distribution $Y|\bm{X}$ over the covariate domain, then assessing the model performance will require partitioning the covariate domain into many regions. As the number of regions grows, it becomes infeasible to manually inspect separate diagnostic plots for each region. Moreover, each region may contain a different number of observations, and the tail shape and scale of the conditional distribution $Y|\bm{X}$ may vary across regions, making it cumbersome to visually assess all regional diagnostics in a consistent manner.

Given a regional partitioning of the covariate domain $\mathcal{X}$, we aim to design visual tools for diagnosing regional goodness-of-fit which satisfy the following properties:
\begin{enumerate}[label=(\alph*), noitemsep]
    \item Summarise information about the model fit across multiple regions of the covariate domain, in a consistent manner;
    \item Preserve information about the non-stationary distribution;
    \item Are invariant to differences in the shape and scale of the tails of  $Y|\bm{X}$ across regions;
    \item Are invariant to the sample size in each region;
    \item Are applicable regardless of the dimension $d$ of the covariate set; and
    \item Provide a visual assessment of the significance of deviations between the estimated model and observations.
\end{enumerate}

%%%%%%%%%%%%%%%%%%%%%%%%%%%%%%%%%%%%%%%%%%%%
\subsection{Diagnostics for conditional distribution models} \label{sec:conditional_dist}
Relatively few diagnostics have been developed specifically for extreme value regression models. However, diagnostics have been proposed for the closely-related topics of quantile regression (QR) \citep{Koenker2005} and more general modelling of conditional distributions. While QR models are often estimated non-parametrically, without an explicit model for $F_{Y|\bm{X}}$, QR can be used to infer $F_{Y|\bm{X}}$ by interpolating between estimated quantile functions at different probability levels. So, in both the case of QR and extreme value regression, we require an assessment of a model for $F_{Y|\bm{X}}$; we denote this estimate by $\hat{F}_{Y|\bm{X}}$. 

A common visual diagnostic for QR models is a plot of conditional quantiles of the response variable against the covariates, typically with $d=1$, as shown in \autoref{fig:nonstat_problem}. These plots are used to assess whether, broadly, the quantiles ``track'' the data cloud, with approximately the correct proportion of the data falling between each quantile level. These can be useful when the covariate is one-dimensional ($d=1$), but require subjective assessment. Moreover, it is infeasible to produce such plots in the presence of higher-dimensional covariates. 

\begin{figure}[t]
	\centering
	\includegraphics[width=\textwidth]{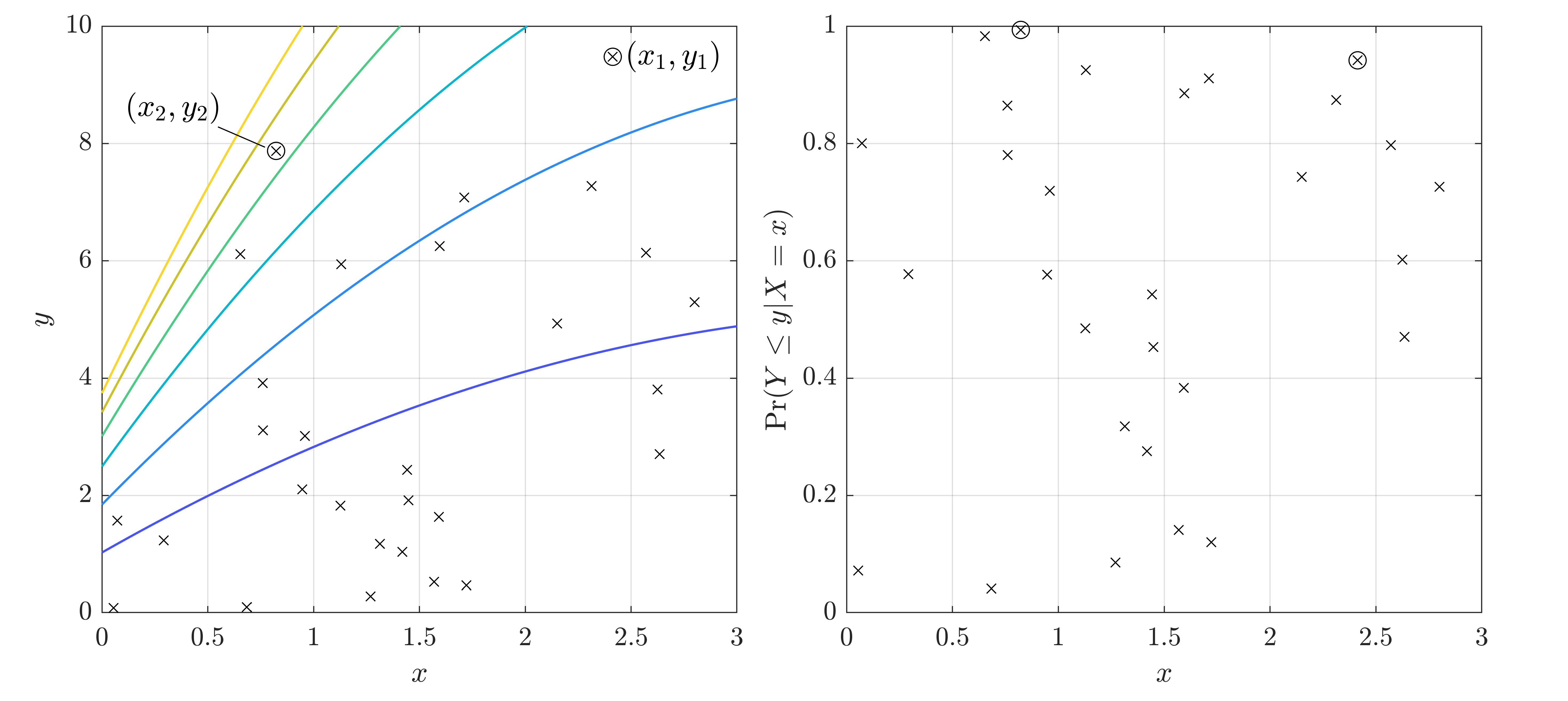}\\
	\caption{Left: Illustration of the problem of defining an ordering in non-stationary settings. Coloured lines show conditional quantiles of $Y|(X=x)$ for the data-generating distribution at exceedance probabilities $10^{-3},10^{-2.5},\dots,10^{-0.5}$. Black crosses show a random sample from the joint distribution. Right: Conditional non-exceedance probabilities of the sample shown on the left. These values are uniformly distributed and independent of the covariate. In both plots, the two circled points are the two largest values of the response variable with coordinates $(x_1, y_1)$ and $(x_2, y_2)$. The response on the right is larger than the response on the left ($y_1>y_2$), but is less `extreme' locally, in that it has a lower conditional non-exceedance probability, i.e., $\Pr(Y\le y_1 | X=x_1) < \Pr(Y\le y_2 | X=x_2)$.}
	\label{fig:nonstat_problem}
\end{figure}

Diagnostics for conditional distribution models have been proposed by, for example, \citet{fasiolo2020scalable}. These include visualisations of differences between model and empirical estimates, such as heat maps of differences in conditional densities plotted against covariate and quantile level. While these plots can identify regions of poor model fit and facilitate comparison between models, they do not provide a visual assessment of the significance of the deviations, do not scale well with dimension $d$, and are not invariant to the sample sizes or differences in the tails of conditional distribution functions. 

To design useful visual diagnostics for extreme value regression models that satisfy the desired properties outlined in Section~\ref{sec:challenges}, we consider extensions of univariate visual diagnostics to a regression setting. This follows by noting that, for any covariate value $\bm{x}\in \mathcal{X}$, the conditional \textit{local} PIT, $P_{\bm{x}}\coloneq F_{Y|\bm{X}}(Y|\bm{x}),$ of $Y|(\bm{X}=\bm{x})$ is uniformly distributed on $[0,1]$ \citep{oreilly1973conditional}. In the stationary case, ordering the observations also induces an ordering of the corresponding estimates of non-exceedance probabilities: for an ordered sample $y_{(1)} \le \cdots \le y_{(n)},$ we have $F_Y (y_{(1)}) \le \cdots \le F_Y (y_{(n)})$. However, this is not true in the conditional setting. For a sample $\{(\bm{x}_1,y_1),\dots,(\bm{x}_n,y_n)\}$, ordering the response variables does not induce an ordering of the conditional PITs, $F_{Y|\bm{X}} (y_k | \bm{x}_k)$; see \autoref{fig:nonstat_problem}. Therefore, QQ and EP plots cannot be produced for regression settings without the regional pooling, described in Section~\ref{sec:challenges}, where an aggregate distribution over a region is considered and ordering of the observations induces an ordering of the PITs. However, PP plots and transformed QQ plots can be produced by ordering the estimated model probabilities $\hat{F}_{Y|\bm{X}}(y_k | \bm{x}_k)$. In this case, information about the non-stationary distribution is preserved. Transformed QQ plots on exponential margins have been used as diagnostics for extreme value regression models \citep[see, e.g.,][]{heffernan2001extreme, richards2022modelling, murphy2024inference}. We discuss these plots in more detail in Section~\ref{sec:exp-plot}. 

Conditional PIT values can be used to construct regional and local diagnostics. Suppose now that the covariate domain $\mathcal{X}$ is partitioned into $B$ non-overlapping regions, or `bins', $\mathcal{B}_1,\dots,\mathcal{B}_B$ such that $\cup_{b=1}^B\mathcal{B}_b=\mathcal{X}$ and $\cap_{b=1}^B\mathcal{B}_b=\emptyset$. Denote the random covariate in bin $\mathcal{B}_b$ as ${\bm{X}_b \coloneq \bm{X} | (\bm{X}\in \mathcal{B}_b)}$. We then define global, regional, and local PIT residuals as\break ${P\coloneq F_{Y|\bm{X}}(Y|\bm{X})}$, $P_b\coloneq F_{Y|\bm{X}}(Y|\bm{X}_b),$ and $P_{\bm{x}}\coloneq F_{Y|\bm{X}}(Y|\bm{x})$, respectively, and the corresponding model-based estimates as $\hat{P}$, $\hat{P}_b,$ and $\hat{P}_{\bm{x}}$. Diagnostic plots can be produced for either the global, regional, or local PITs, and used to assess overall goodness of fit. The need for local or regional diagnostics was emphasised by \citet{zhao2021diagnostics}, who showed that diagnostics for global model performance can fail to identify models that perform poorly at a local level. For example, if $\hat{F}_{Y|\bm{X}} = F_{Y|\bm{X}},$ then $\hat{P} \sim \mbox{U}(0,1)$, but the reverse implication is not true. Instead, they noted that $\hat{F}_{Y|\bm{X}} = F_{Y|\bm{X}}$ if and only if $\hat{P}_{\bm{x}} \sim \mbox{U}(0,1)$ for each $\bm{x}\in\mathcal{X}$. An equivalent condition is $\hat{P}\sim \mbox{U}(0,1)$ and $\hat{P} \perp \!\!\! \perp \bm{X}$, that is, global PITs are uniform and independent of the covariate. 

Three strategies for PIT-based model checking have been applied in the literature: (i) testing uniformity of $\hat{P}$ and independence from $\bm{X}$; (ii) testing the uniformity of $\hat{P}_{\bm{x}}$ at each value $\bm{x} \in \mathcal{X}$; and (iii) testing uniformity of $\hat{P}_b$ in each bin. Whilst strategy (i) is perhaps the most straightforward, it has drawbacks. For covariates with low dimension $d\leq2$, graphical methods can be applied to visually check for dependence. For example, \citet{diebold1998} applied this strategy in the context of probabilistic forecast evaluation. In moderate to high dimensions, modern statistical tests of independence, such as distance correlation \citep{szekely2007measuring} or the Hilbert-Schmidt Independence Criterion \citep{gretton2005}, can be used to test independence of $\hat{P}$ and $\bm{X}$. However, these tests tend to have high power in detecting dependence, but low descriptive power in explaining the type of dependence detected. This means that when the diagnostics indicate the model fit needs to be improved, i.e., $\hat{P}$ and $\bm{X}$ are dependent, it is then difficult to determine how this should be done. In this respect, strategies (ii) and (iii) can be more illuminating. 

Implementation of strategy (ii) is more challenging. As noted above, if $\bm{X}$ has a continuous density, the expected number of observations at any given $\bm{x}$ is zero, so the distribution of $\hat{P}_{\bm{x}}$ cannot be estimated by empirical ranking. \citet{zhao2021diagnostics} proposed a solution to this by fitting an additional regression model to estimate $\hat{P}_{\bm{x}}$ at each $\bm{x}\in\mathcal{X}$. This adds an additional layer of model complexity to the production of the diagnostics. It would be preferable to have simple diagnostics that can quantify model performance without substantial additional computations. 

A compromise is to follow strategy (iii) and consider regional diagnostics. Although a uniform distribution of $\hat{P}_b$ in each bin does not guarantee model consistency, if the bins $\mathcal{B}_b$ are sufficiently small then regional diagnostics provide a reasonable indication of good fit. The uniformity of $\hat{P}_b$ can be assessed using the standard visual diagnostics (PP and transformed QQ plots) that are used for univariate extreme value models. In a non-extremal setting, regional PIT histograms have been used to check for uniformity and assess regional performance by, e.g., \citet{hamill2001interpretation, gneiting2007probabilistic}. For QR models, a commonly used regional diagnostic is the `\textit{worm plot}' \citep{buuren2001worm, buuren2007worm}. This consists of a series of detrended transformed QQ plots for covariate bins, with the transformation $F_0 = \Phi$ for $\Phi$ the standard normal distribution (see Eq.~\eqref{eq:transform_QQ}). Denote the $n_b$ observations of the covariate $\bm{X}$ in bin $\mathcal{B}_b$ by $\bm{x}_i^{(b)},$ for $i=1,\dots,n_b$ with $\sum_{b=1}^Bn_b=n$, and the corresponding observed responses by $\{y^{(b)}_i\}_{i=1}^{n_b}$. Denote the estimated non-exceedance probabilities at the observed values in bin $\mathcal{B}_b$ as $\hat{p}^{(b)}_{i} = \hat{F}_{Y|\bm{X}}(y^{(b)}_{i}|\bm{x}_{i}^{(b)})$, $\bm{x}^{(b)}_i\in\mathcal{B}_b$, $i=1,\dots,n_b$. The ordered model probabilities are denoted $\hat{p}^{(b)}_{(1)} \le \cdots \le \hat{p}^{(b)}_{(n_b)}$. The worm plot consists of the points
\begin{equation*}
    \left\{ \left(\Phi^{-1} \left(p_{k}^{(b)}\right), \Phi^{-1} \left(\hat{p}^{(b)}_{(k)}\right) - \Phi^{-1} \left(p_{k}^{(b)}\right)\right) : k=1,\dots,n_b \right\},
\end{equation*}
where $p_{k}^{(b)}$ are reference probability levels, usually defined as $p_{k}^{(b)}=(k-0.5)/n_b$. The abscissa is a reference normal quantile, and the ordinate is the deviation of the normalised quantile residuals from the reference levels\footnote{When the model distribution function is continuous, the quantity $\Phi^{-1} \left(\hat{p}^{(b)}_{(k)}\right)$ is the randomised quantile residual, defined by \citet{dunn1996randomized}.}. Confidence bounds for the deviations are often included, based on the assumption that deviations are normally distributed with zero mean and variance of the $k$-th ordinate being $p_{k}^{(b)} (1-p_{k}^{(b)})/\left(n_b \phi^2\left( \Phi^{-1}(p_{k}^{(b)})\right)\right)$, where $\phi$ is the standard normal density function. This assumption is asymptotically exact for the central quantiles \citep[see Thm. 10.3 of][]{David2003}, but is less accurate for extremes, making worm plots a poor diagnostic tool for extreme value analyses; in the limit as $n_b \to \infty$, the sampling distribution of the largest normal order statistic converges to a Gumbel distribution, for which the normal approximation is not accurate. 

Visual assessment of diagnostic plots across a small number of covariate bins is feasible, but manually checking separate plots to assess uniformity across all bins $\mathcal{B}_1,\dots,\mathcal{B}_B$ becomes impractical as $B$ grows. While one could overlay transformed QQ plots for each bin $\mathcal{B}_b$ on a single plot, without some prior standardisation, the varying sample size $n_b$ impacts the sampling distributions for the bin-wise order statistics. For example, in the case of the worm plot, the variance of the central order statistics is $O(1/n_b)$, and the variance of the largest order statistic is $O(1/\log(n_b))$, making visual assessment of the significance of deviations for different sample sizes problematic.

The use of standardised diagnostics has been considered in the non-extremal setting. The `\textit{stabilized probability plot}' \citep{michael1983stabilized} uses the sine-squared distribution, with distribution function $F_S(s) = \sin^2(\pi s/2)$, $s\in[0,1]$. This distribution has the property that, as the sample size tends to infinity, the central order statistics have equal variance: if $S_{(1)}\le \cdots \le S_{(n)}$ are an ordered sample from the sine-squared distribution, then, as $n\to\infty$ and $k/n\to p\in(0,1)$, the variance of $nS_{(k)}$ is $1/\pi^2$. However, the result does not hold for the upper or lower extremes, making this type of plot less useful as a diagnostic for extreme value models. 

Another issue to consider when overlaying diagnostic plots for multiple regions is the plotting positions. If we wish to see if there is any systematic bias in the model at a given rank or quantile, then we can calculate the mean value of the diagnostic over each region. When doing this, it is important not to conflate bias from the model with bias from the choice of plotting positions. A further disadvantage of the worm plot is that there is no closed form solution for the expected values of normal order statistics \citep{royston1982}, so any choice of plotting position $p_{k}^{(b)}$ will introduce some bias.

The worm plots and stabilized probability plots motivate a similar development for extremal regression models. The diagnostics we propose below aim to overcome some of the drawbacks with existing approaches, outlined above.

%%%%%%%%%%%%%%%%%%%%%%%%%%%%%%%%
\section{Standardised regional diagnostic plots} \label{sec:standard_regional}
\subsection{Outline} \label{sec:plot_outline}
We now propose diagnostics which aim to meet objectives (a)-(f) in Section~\ref{sec:challenges}. Our approach follows the third strategy for assessment of models for conditional distributions, described in Section~\ref{sec:conditional_dist}, based on a regional uniformity assessment of conditional PIT values. The two key limitations with existing approaches are that sampling properties vary with sample size and plotting positions are often biased, depending on the type of plot used. We propose two transformations, to ensure that diagnostics for multiple regions can be overlaid on a single plot whilst maintaining (near-)constant sampling properties; in this way, the significance of deviations can be visually assessed. The first diagnostic, referred to as the \textit{standardised tail plot}, is a detrended QQ plot, similar to the worm plot, but based on a transformation to standard exponential margins, placing greater emphasis on the upper tail of the distribution. The standardised tail plot also exploits unbiased plotting positions for the exponential distribution, discussed in Section~\ref{sec:exp-plot}. Exponential QQ plots are discussed in Section~\ref{sec:expQQ}, where we show that it is difficult to assess the significance of deviations for different sample sizes using this type of plot. To standardise for different sample sizes, we make use of an asymptotic property of the sampling distributions of exceedance probabilities, discussed in Section~\ref{sec:sampling_dist}. The standardised tail plot is then defined in Section~\ref{sec:standard_tail_plot}. The other diagnostic proposed here, which we term the \textit{normalised residual plot}, is introduced in Section~\ref{sec:normalised_residual}. The normalised residual plot provides a visualisation of the deviations between the model and observations over all probability levels, rather than just the upper tail.

The null hypothesis we wish to test is that the estimated model is equal to the true data-generating distribution: $\hat{F}_{Y|\bm{X}}(y|\bm{x}) = F_{Y|\bm{X}}(y|\bm{x})$ for every $y\in\mathcal{Y}$ and $\bm{x}\in\mathcal{X}$. Throughout this section, we assume that diagnostics are evaluated for an independent hold-out sample, that is not used for model estimation. Then, under the null hypothesis, any differences between the model and observations are due to random sampling effects only. This enables uncertainty bounds to be calculated in terms of the sampling properties of the observations. Confidence intervals for diagnostics applied to training data are smaller than those for out-of-sample data, since parameter optimisation finds the model that is `closest' to the observations, in the sense defined by the loss function, effectively `absorbing' some of the random variation. If an independent hold-out sample is not available, and the model is compared to the data used for parameter estimation, then alternative methods for calculating confidence intervals will be needed, such as a parametric bootstrap; however, this is not considered further here.

%%%%%%%%%%%%%%%%%%%%%%%%%%%%%%%%%
\subsection{Unbiased plotting positions for exponential quantiles}\label{sec:exp-plot}
In Section~\ref{sec:stationary_plots}, we discussed various ways of defining empirical exceedance probabilities as plotting positions for visual diagnostics. Here, we consider unbiased plotting positions for the exponential distribution. In the diagnostic plots proposed below, information for multiple regions of covariate space is overlaid in a single plot. It is therefore important that plotting bias is not conflated with model bias. 

Observations in bin $\mathcal{B}_b$ are assumed to be realisations of a sequence of independent and identically distributed (iid) pairs $\{(\bm{X}_i^{(b)},Y_i^{(b)})\}_{i=1}^{n_b}$. 
Define PIT residuals $P_i^{(b)} = F_{Y|\bm{X}}(Y_i^{(b)}| \bm{X}_i^{(b)})$ with $Q_i^{(b)} = 1 - P_i^{(b)}$, and $Z_i^{(b)} = -\log (Q_i^{(b)})$, for $i=1,\dots,n_b$. Since $P_i^{(b)} \sim \mbox{U}(0,1)$, it follows that $Z_i^{(b)} \sim \mbox{Exp}(1)$. Denote the ordered exceedance probabilities by $Q^{(b)}_{(1)} \le \cdots \le Q^{(b)}_{(n_b)}$ and corresponding exponential order statistics by $Z_{(1)}^{(b)} \ge \cdots \ge Z_{(n_b)}^{(b)}$. Note that we rank from most extreme to least extreme, so that $Z_{(1)}^{(b)} = -\log\big(Q_{(1)}^{(b)}\big)$ corresponds to the observation with lowest conditional exceedance probability in bin $\mathcal{B}_b$.

Since $Q_{(k)}^{(b)}$ is a random variable and, in typical applications, the data-generating distribution, $F_{Y|\bm{X}}$, is unknown, we must make some `best guess' of the sample probabilities. It is well-known that $Q_{(k)}^{(b)} \sim \mbox{Beta}(k,n_b-k+1)$\footnote{Strictly, as the distribution of $Q_{(k)}^{(b)}$ is dependent on both $k$ and $n_b$, we could use notation $Q_{(k,n_b)}^{(b)}$ to make this explicit. However, for simplicity we opt for the notation $Q_{(k)}^{(b)}$ and assume that the dependence on $n_b$ is understood implicitly.} \citep[see, e.g.,][]{David2003}. The expected values of the ranked exceedance probabilities are therefore $q_k^{(b)}\coloneq \mathbb{E}\left[Q_{(k)}^{(b)}\right] = k/(n_b+1)$. However, due to the nonlinear transformation {$Z_{(i)}^{(b)} = -\log (Q_{(i)}^{(b)})$}, we have that $\mathbb{E}\left[Z_{(k)}^{(b)}\right] \ne -\log\left(\mathbb{E}\left[Q_{(k)}^{(b)}\right]\right)$. A nice feature of the exponential distribution is that the expected values of the order statistics have the simple closed form expression \citep{David2003}
\begin{equation} \label{eq:EZk}
    z_k^{(b)} \coloneq \mathbb{E}\left[Z_{(k)}^{(b)}\right] = H_{n_b} - H_{k-1},
\end{equation}
where $H_k = \sum_{j=1}^k \frac{1}{j}$ is the $k$-th harmonic number, and we define $H_0 = 0$. Harmonic numbers have the well-known property
\begin{equation} \label{eq:Hn_approx}
    H_n = \log(n) + \gamma + O(1/n), 
    \end{equation}
where $\gamma\approx 0.57721$ is the Euler-Mascheroni constant. Therefore, the difference between the unbiased plotting position
$z_1^{(b)}$ and commonly-used plotting position $-\log\left(q_1^{(b)}\right)$ tends to $\gamma$ as $n_b\to\infty$. For example, for $n_b=100$ we have $z_1^{(b)} \approx 5.19$, whereas $-\log \left( q_1^{(b)} \right) \approx 4.62$. So, if diagnostics for multiple regions are overlaid, then biases caused by using $-\log \left( q_k^{(b)} \right)$, rather than $z_k^{(b)}$, as a plotting position may be wrongly-interpreted as model bias.

%%%%%%%%%%%%%%%%%%%%%%%%%%%%%%%%%%%%%%%%
\subsection{Exponential QQ plots revisited} \label{sec:expQQ}
For a model estimate of the conditional distribution function, $\hat{F}_{Y|\bm{X}}$, we denote estimated exceedance probabilities in bin $\mathcal{B}_b$ as $\hat{q}_k^{(b)} = 1 - \hat{F}_{Y|\bm{X}}(y_k^{(b)}|\bm{x}_k^{(b)})$. Denote the ordered values $\hat{q}_{(1)}^{(b)}\le \cdots \le \hat{q}_{(n_b)}^{(b)}$ and exponential order statistics from the model as $\hat{z}_k^{(b)} = -\log(\hat{q}_{(k)}^{(b)})$, $k=1,\dots,n_b$. We define the \textit{regional exponential QQ plot} to consist of lines joining the pairs 
\begin{equation}
    \{(\hat{z}_k^{(b)}, z_k^{(b)}) : k=1,\dots,n_b\},
\end{equation} 
for each bin $\mathcal{B}_1,\dots,\mathcal{B}_B$. Compared to the regionally-aggregated QQ plots described in Section~\ref{sec:challenges}, the exponential QQ plots are independent of the shape and scale of the upper tails of the underlying conditional distributions, and preserve information about non-stationarity. However, they are not invariant to the sample size, $n_b$, which may differ between bins. 

To illustrate this, the left plot in \autoref{fig:exp_QQ} shows a two-sided 95\% confidence interval (CI) on exponential order statistics for sample sizes of $n_b=10,10^2,10^3,10^4$. It is evident that what constitutes a large deviation in the exponential QQ plot for a small sample size is different to that for a larger sample size. The right hand plot in \autoref{fig:exp_QQ} shows an example with simulated data. In this case, data for $B=100$ regions have been generated, with each region having sample size $n_b$ realised from a random variable $N$, where $\log_{10}(N)$ is uniformly distributed in $[1,4]$. For each sample, the model values $\{\hat{q}_k\}_{k=1}^{n_b}$ are sampled from a uniform distribution over $[0,1]$, representing a perfectly calibrated model. Although there are no errors in the model, the individual QQ plots deviate from the 1:1 line (diagonal) and, due to the variably in the regional sample size $n_b$, it is not possible to visually assess where these deviations are significant. 

\begin{figure}[h!]
	\centering
	\includegraphics[width=0.49\textwidth]{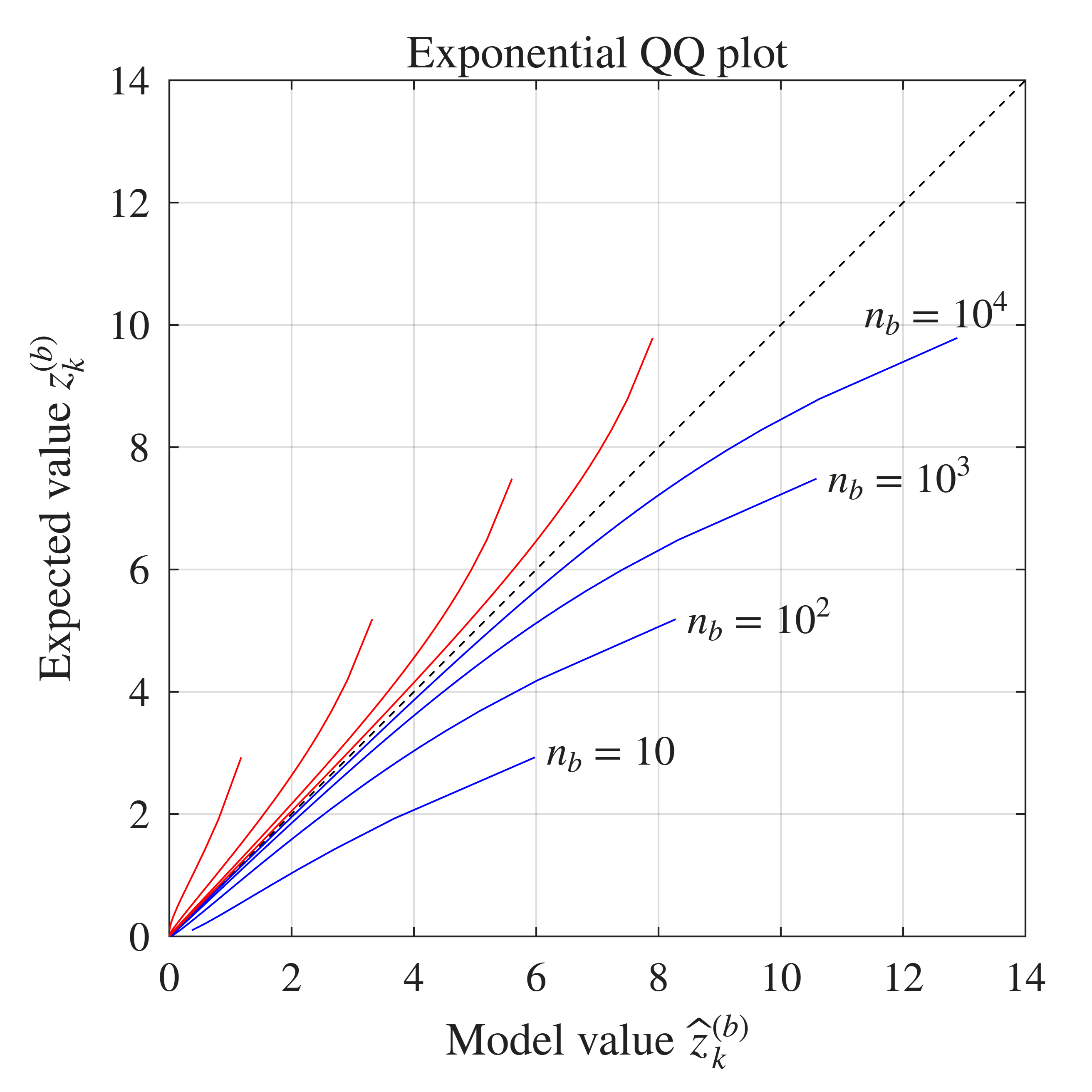}
    \includegraphics[width=0.49\textwidth]{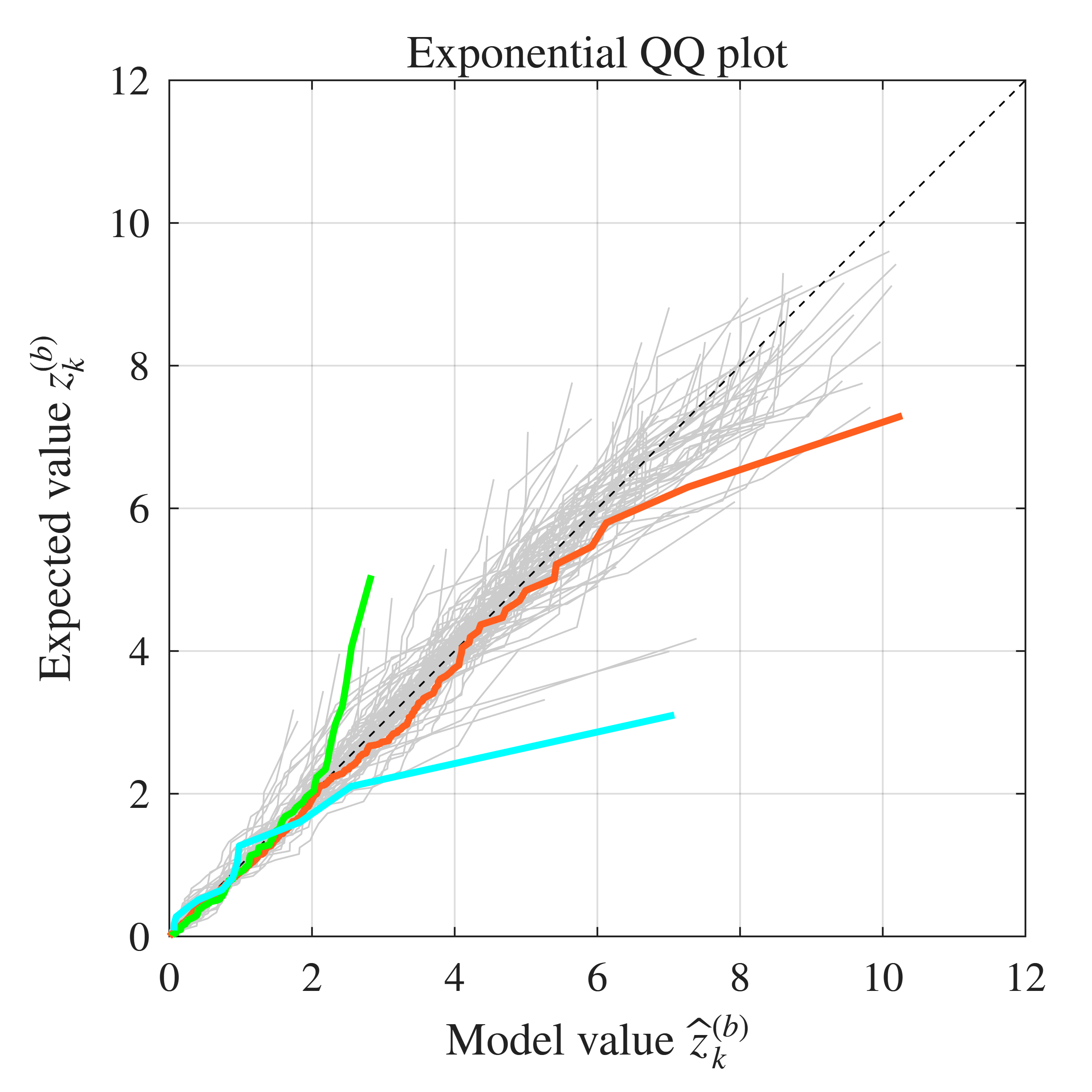}\\
    \includegraphics[width=0.49\textwidth]{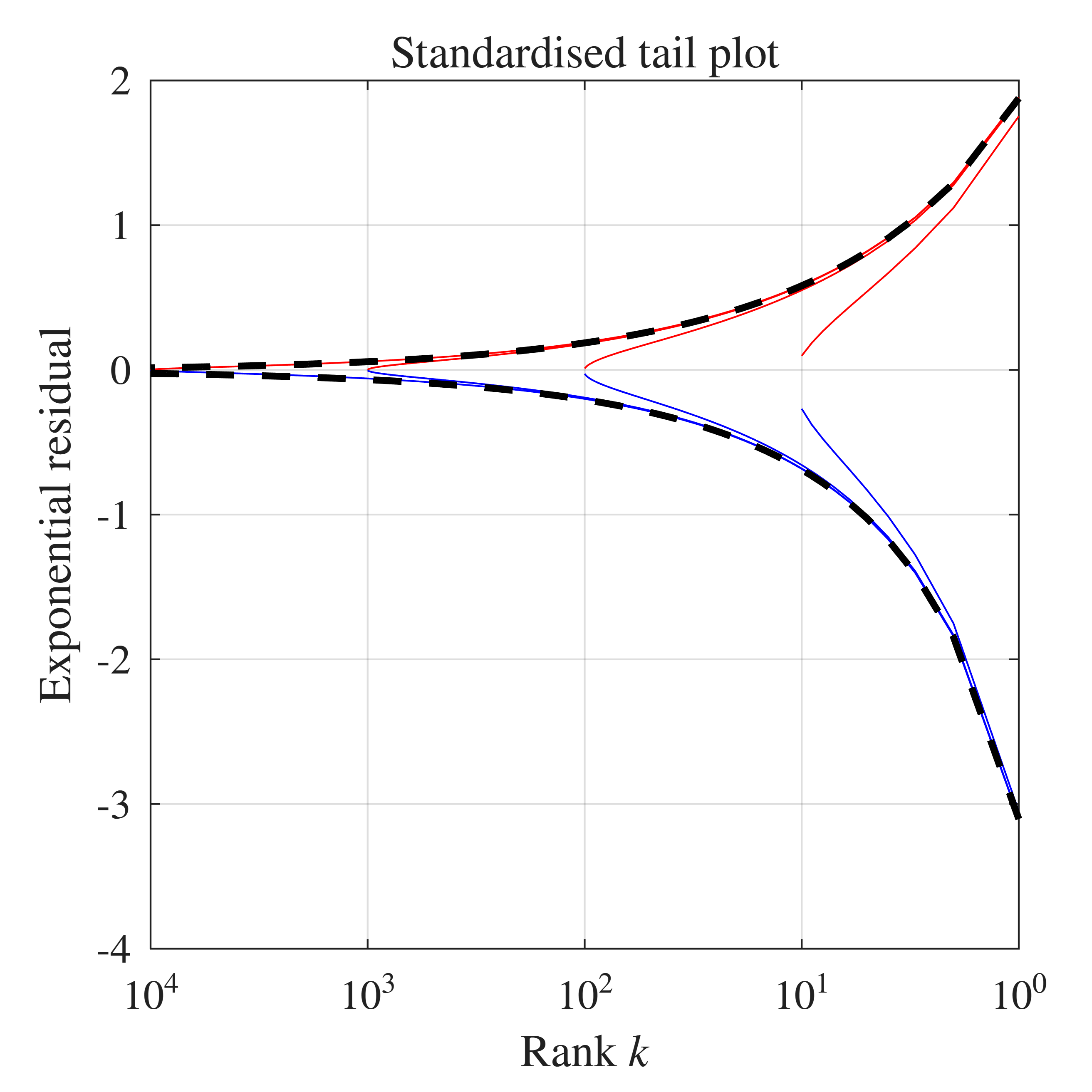}
    \includegraphics[width=0.49\textwidth]{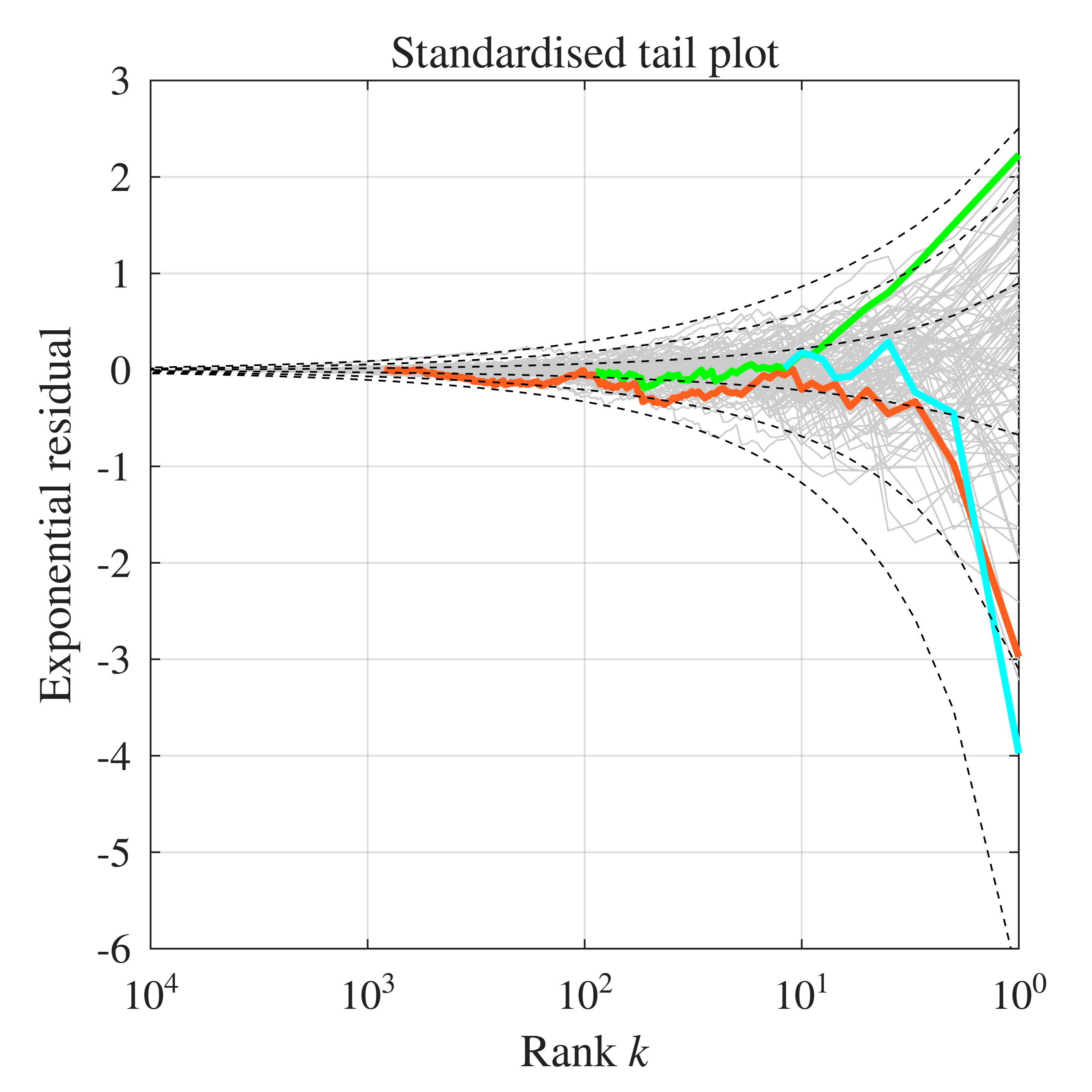}
	\caption{Top left: Two-sided 95\% CI on exponential quantiles for sample sizes of $n_b=10,10^2,10^3,10^4$. Bottom left: Two-sided 95\% CI on exponential residuals as a function of rank for the same sample sizes, together with 95\% CI for the asymptotic distribution (black dashed lines). Top right: Exponential QQ plot for simulated data for 100 samples of random size $N$, where $\log_{10}(N)\sim \mbox{U}\,(1,4)$. Bottom right: Standardised tail plot for the same data, together with quantiles of the asymptotic distribution at non-exceedance probabilities 0.001, 0.025, 0.25, 0.75, 0.975, and 0.999 (dashed lines). Highlighted lines correspond to the same samples in right hand column.}
	\label{fig:exp_QQ}
\end{figure}

%%%%%%%%%%%%%%%%%%%%%%%%%%%%%%%%%%%
\subsection{Asymptotic sampling distributions} \label{sec:sampling_dist}
To standardise the exponential QQ plots and enable a visual assessment of the significance of deviations, we make use of an asymptotic property of exceedance probabilities. The following theorem shows that the asymptotic distribution of the normalised exceedance probabilities associated with the most extreme observations is independent of the sample size, and depends only on the rank $k$.

\begin{theorem} \label{thm:normalised_exceedance}
Let $Q_{(1)} \le \cdots \le Q_{(n)}$ be the order statistics of iid $\mbox{U}(0,1)$ variables. For any fixed $k\in \mathbb{N}_{>0}$, the normalised variable $n Q_{(k)}$ converges in distribution, as $n\to\infty$, to a gamma random variable with shape parameter $k$ and unit scale. 
\end{theorem}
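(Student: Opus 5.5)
The plan is to prove convergence of the distribution function of $nQ_{(k)}$ pointwise. The route is through the binomial representation of uniform order statistics, which reduces the claim to the Poisson limit of the binomial.

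Fix $t>0$. For $n>t$, the event $\{nQ_{(k)}\le t\}$ is exactly the event that at least $k$ of the $n$ iid uniforms fall in $[0,t/n]$. The count $N_n$ of uniforms in $[0,t/n]$ is $\mathrm{Binomial}(n,t/n)$, so
\begin{equation*}
\Pr\left(nQ_{(k)}\le t\right)=\Pr(N_n\ge k)=1-\sum_{j=0}^{k-1}\binom{n}{j}\left(\frac{t}{n}\right)^j\left(1-\frac{t}{n}\right)^{n-j}.
\end{equation*}

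Next I take the limit $n\to\infty$ of each of the finitely many terms. Since $k$ is fixed, the sum has a fixed number of terms, and
\[
\binom{n}{j}n^{-j}\to 1/j!, \qquad (1-t/n)^{n-j}\to e^{-t}.
\]
The limit is therefore
\begin{equation*}
1-e^{-t}\sum_{j=0}^{k-1}\frac{t^j}{j!}.
\end{equation*}
This is the classical Poisson--gamma identity: the expression equals the distribution function of a $\mathrm{Gamma}(k,1)$ variable at $t$. I would verify this by differentiating in $t$. The sum telescopes to $e^{-t}t^{k-1}/(k-1)!$, which is the gamma density, and both sides vanish at $t=0$.

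For $t\le 0$ both distribution functions are zero. The gamma distribution function is continuous, so pointwise convergence at every $t$ gives convergence in distribution.

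As an alternative check, one could start from the fact stated in Section~\ref{sec:exp-plot} that $Q_{(k)}\sim\mathrm{Beta}(k,n-k+1)$. Change variables to $u=nq$ in the density, giving
\[
\frac{n!}{(k-1)!\,(n-k)!}\,n^{-k}\,u^{k-1}(1-u/n)^{n-k}\to \frac{u^{k-1}e^{-u}}{(k-1)!}.
\]
Scheffé's lemma then gives the result.

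There is no serious obstacle. The only points needing care are the finite-sum limit and correctly identifying the Poisson tail sum with the gamma distribution function, which the differentiation argument settles.
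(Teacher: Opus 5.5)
Your proof is correct, but your main argument takes a different route from the paper's. The paper works with densities. It writes the density of $nQ_{(k)}$ from the $\mathrm{Beta}(k,n-k+1)$ law and shows $(1-a/n)^{n-k}\to e^{-a}$. It then uses Stirling's formula to show that the normalising factor $n^{-k}\Gamma(n+1)/\Gamma(n-k+1)\to 1$, and concludes by Scheff\'e's lemma. That is exactly your ``alternative check''; your treatment of the constant as the finite product $\prod_{i=0}^{k-1}(1-i/n)\to 1$ is simpler than the paper's Stirling argument.

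Your primary argument works with distribution functions instead. The identity $\Pr(nQ_{(k)}\le t)=\Pr(N_n\ge k)$, with $N_n\sim\mathrm{Bin}(n,t/n)$, reduces the claim to the Poisson limit of finitely many binomial terms plus the Poisson--gamma identity. So you need neither Scheff\'e's lemma nor Stirling. It also makes visible the Poisson-process picture behind the $k$-largest order statistic model the paper cites: $nQ_{(k)}$ is asymptotically the $k$-th arrival time of a unit-rate Poisson process.

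The paper's density route gives a slightly stronger conclusion: pointwise convergence of densities, hence convergence in total variation. That fits naturally with the paper's density-level comparisons for $D_k$ in Corollary~\ref{cor:Z_difference} and the accompanying figure, although only convergence in distribution is actually needed there.
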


Theorem \ref{thm:normalised_exceedance} can be used to show that the sampling distribution of the difference between the $k$-th exponential order statistic and its expected value also converges to an asymptotic form that is independent of sample size, as stated in the corollary below.

\begin{corollary} \label{cor:Z_difference}
Let $Z_{(1)} \ge \cdots \ge Z_{(n)}$ be the order statistics of iid $\mbox{Exp}(1)$ variables. For any fixed $k\in \mathbb{N}_{>0}$, the difference $D_k = \mathbb{E}\big[Z_{(k)}\big] - Z_{(k)}$, between the expected value of the $k$-th exponential order statistic and the observed value, converges in distribution, as $n\to\infty$, to a log-gamma random variable, $D_{k,\infty}$, with shape parameter $k$, unit scale, and location $\mu_k=\gamma - H_{k-1}$. The density function of $D_{k,\infty}$ is
\begin{equation}
    f_{D_{k,\infty}} (x) = \frac{1}{\Gamma(k)} \exp\left[ k \left(x - \mu_k\right) - \exp\left(x - \mu_k\right)\right], \quad x\in\RR,
\end{equation}
where $\Gamma$ is the gamma function.
\end{corollary}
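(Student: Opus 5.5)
We will derive the corollary from Theorem~\ref{thm:normalised_exceedance} using the continuous mapping theorem and Slutsky's lemma. The starting point is the identity $Z_{(k)} = -\log Q_{(k)}$, where $Q_{(k)} = e^{-Z_{(k)}}$ is the $k$-th smallest of the iid uniforms $Q_i = e^{-Z_i}$; the map $z\mapsto e^{-z}$ is decreasing, so ranks from largest $Z$ correspond to ranks from smallest $Q$. By \eqref{eq:EZk}, $\mathbb{E}[Z_{(k)}] = H_n - H_{k-1}$. Hence
\begin{equation*}
D_k = H_n - H_{k-1} + \log Q_{(k)} = \bigl(H_n - \log n\bigr) - H_{k-1} + \log\bigl(nQ_{(k)}\bigr).
\end{equation*}

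The proof then proceeds in three steps.

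\textbf{Step 1.} By Theorem~\ref{thm:normalised_exceedance}, $nQ_{(k)} \to G_k$ in distribution, where $G_k\sim\mathrm{Gamma}(k,1)$. Since $\log$ is continuous on $(0,\infty)$ and $\Pr(G_k>0)=1$, the continuous mapping theorem gives $\log(nQ_{(k)}) \to \log G_k$ in distribution.

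\textbf{Step 2.} By \eqref{eq:Hn_approx}, the deterministic sequence $H_n - \log n - H_{k-1}$ converges to $\gamma - H_{k-1} = \mu_k$. Slutsky's lemma then yields $D_k \to D_{k,\infty} := \mu_k + \log G_k$ in distribution.

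\textbf{Step 3.} It remains to identify the law of $D_{k,\infty}$. We compute its density by change of variables. Setting $x = \mu_k + \log g$, we have $g = e^{x-\mu_k}$ and $dg/dx = e^{x-\mu_k}$. Substituting into the gamma density $g^{k-1}e^{-g}/\Gamma(k)$ gives
\begin{equation*}
\frac{1}{\Gamma(k)} e^{(k-1)(x-\mu_k)} e^{-e^{x-\mu_k}} e^{x-\mu_k} = \frac{1}{\Gamma(k)}\exp\bigl[k(x-\mu_k) - e^{x-\mu_k}\bigr],
\end{equation*}
which is the stated log-gamma density with shape $k$, unit scale and location $\mu_k$.

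No step is a genuine obstacle; the only point requiring care is the bookkeeping. First, the descending ranking of the $Z$'s must correspond to the ascending ranking of the $Q$'s. Second, the expectation must be split so that the divergent $\log n$ terms cancel exactly, leaving the normalised quantity $nQ_{(k)}$ of Theorem~\ref{thm:normalised_exceedance}. If one also wanted the theorem itself, it follows directly from $\Pr(nQ_{(k)} > t) = \Pr(\mathrm{Bin}(n,t/n) < k) \to \Pr(\mathrm{Poisson}(t) < k)$, which is the gamma$(k,1)$ survival function.
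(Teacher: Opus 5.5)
Your proof is correct and follows essentially the same route as the paper. Both write $D_k = \log(nQ_{(k)}) - H_{k-1} + (H_n - \log n)$, apply Theorem~\ref{thm:normalised_exceedance} with the continuous mapping theorem, and obtain the log-gamma density by change of variables; your explicit Slutsky step for the deterministic term $H_n - \log n \to \gamma$ just makes precise what the paper writes as an $O(1/n)$ remainder.
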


Both Theorem \ref{thm:normalised_exceedance} and Corollary \ref{cor:Z_difference} are special cases of the $k$-largest order statistic model \citep{weissman1978estimation, smith1986extreme}. Empirical assessment of the rate of convergence of $D_k$ is shown in \autoref{fig:density} for ranks $k=1$, 2, 10, and 100, and sample sizes $n/k=5$ and $n/k=10$. The density converges quickly to the asymptotic limit, showing that the asymptotic model is a reasonable approximation for relatively small values of $n/k$. We use this result to produce a diagnostic plot that is approximately independent of sample size, and thus can be used to compare model performance across regions with varying sample size $n_b$. The proofs of Theorem~\ref{thm:normalised_exceedance} and Corollary \ref{cor:Z_difference} are provided in Appendix~\ref{app:proof_thm:normalised_exceedance} and \ref{app:proof_cor:Z_difference}, respectively.

\begin{figure}[t]
	\centering
	\includegraphics[width=\textwidth]{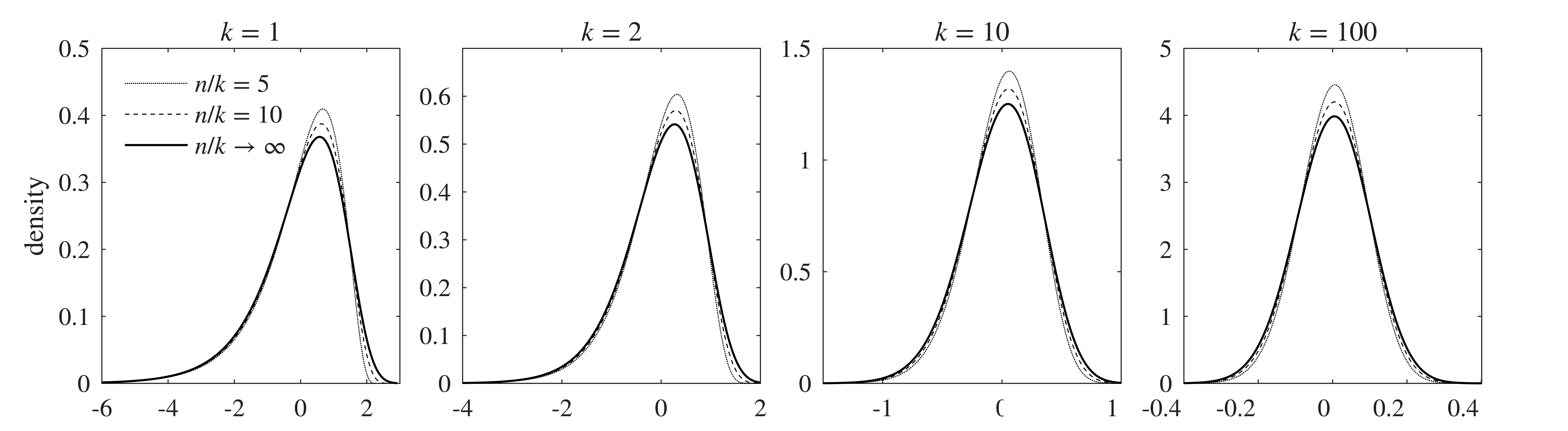}
	\caption{Densities of standardised exponential order statistics $D_k =\mathbb{E} \left[ Z_{(k)} \right] - Z_{(k)}$, for various ratios of ranks $k$ to sample sizes $n$.}
	\label{fig:density}
\end{figure}

%%%%%%%%%%%%%%%%%%%%%%%%%%%%%%%
\subsection{Standardised tail plots} \label{sec:standard_tail_plot}
Corollary \ref{cor:Z_difference} can be used to standardise the exponential QQ plots to have near-constant sampling variability across bins. Since the samples in each region are independent, the sampling distribution of $D_k=\mathbb{E}\left[Z_{(k)}^{(b)}\right] - Z_{(k)}^{(b)}$ is, asymptotically, dependent on rank only. Thus if we plot the difference $z_k^{(b)} - \hat{z}_k^{(b)}$ against rank $k$, under the null hypothesis, for a given $k$, the expected difference will be zero and, as $n_b\to\infty$, the differences will follow a log-gamma distribution. We refer to the difference $z_k^{(b)} - \hat{z}_k^{(b)}$ as the \textit{exponential residual}. We define the \textit{standardised tail plot} to be a plot of the exponential residuals against rank, consisting of lines joining the pairs
\begin{equation}
    \{(k, z_k^{(b)} - \hat{z}_k^{(b)}) : k=1,\dots,n_b\},
\end{equation}
for each bin $\mathcal{B}_1,\dots,\mathcal{B}_B$ in the covariate domain. The ranks are shown in reverse order, so that rank $k=1$ is on the right side of the plot, with rank increasing towards the left. In this way, the observations become more `extreme' towards the right, in terms of decreasing conditional exceedance probability within each bin. 

Asymptotic confidence intervals can be added to the standardised tail plots using the quantiles of the log-gamma random variable $D_{k,\infty}$. The quantile of $D_{k,\infty}$ at non-exceedance probability $\alpha\in[0,1]$, denoted $d_{k,\alpha}$, can be computed in terms of the quantiles of the gamma distribution as
\begin{equation*}
    d_{k,\alpha} = \log\left(a_{k,\alpha}\right) - H_{k-1} + \gamma,
\end{equation*}
where $a_{k,\alpha}$ is the $\alpha$-quantile of the gamma distribution with shape $k$ and unit scale. 

The lower left plot in \autoref{fig:exp_QQ} shows a two-sided 95\% CI for the exponential residuals for sample sizes $n_b=10,10^2,10^3,10^4$, together with a 95\% CI for the asymptotic log-gamma distribution. The finite sample CI lies within the asymptotic bounds. However, for ranks with $n/k\ge 10$ the agreement is close. The asymptotic confidence bounds are therefore a useful visual guide for what constitutes a significant deviation for the most extreme observations. The lower right plot in \autoref{fig:exp_QQ} shows an example of a standardised tail plot. To illustrate the transformation, the three samples highlighted in \autoref{fig:exp_QQ}, which have `large differences' in the upper tails in the exponential QQ plots, are also highlighted in the standardised tail plot. On the standardised scale, it can be seen that these differences in the upper tails are within the 99.8\% CI. Overall, although there are some samples with points outside the 95\% CI, all observations are within the 99.8\% CI. So, the immediate inference from this plot is that for any given sample, deviations between the model and expected values are within the range expected from sampling effects (as would be expected). However, this does not tell us about the overall performance of the model, that is whether it has a tendency to over- or under-predict on aggregate. This is discussed further in Section~\ref{sec:summary}.

%%%%%%%%%%%%%%%%%%%%%%%%%%%%%%%%
\subsection{Normalised residual plots} \label{sec:normalised_residual}

The standardised tail plot has the desirable feature that it is directly analogous to the commonly-used exponential QQ plots, whilst standardising the sampling properties over different sample sizes. However, as with exponential QQ plots, model discrepancy for non-extreme observations is difficult to assess visually. Another option for standardising the sampling properties of the order statistics is to apply a probability integral transform based on the exact sampling distribution of the exceedance probabilities. For a sample in bin $\mathcal{B}_b$, with ordered PITs $\hat{q}_{(1)}^{(b)} \le \cdots \le \hat{q}_{(n_b)}^{(b)}$, the p-value of the PIT $\hat{q}_{(k)}^{(b)}$ is $F_{\beta_k} \left( \hat{q}_{(k)}^{(b)} \right)$, where $F_{\beta_k}$ is the distribution function of $Q_{(k)}^{(b)} \sim \mbox{Beta}(k,n_b-k+1)$. Define  transformed p-values $\nu_{k}^{(b)} = \Phi^{-1} \left( F_{\beta_k} \left( \hat{q}_{(k)}^{(b)} \right)\right)$, $k=1,\dots,n_b$, where $\Phi$ is the standard normal CDF. We refer to $\nu_{k}^{(b)}$ as the \textit{normalised residual}, and define the \textit{normalised residual plot} to consist of lines joining the pairs
\begin{equation}
    \left\lbrace\left(k, \nu_k^{(b)}\right) : k=1,\dots,n_b\right\rbrace,
\end{equation}
for each bin $\mathcal{B}_1,\dots,\mathcal{B}_B$ in the covariate domain. The transformation to a standard normal scale enables a better visualisation of the significance of the differences than looking at the p-values themselves. Also, as we later discuss in Section~\ref{sec:RMS_normres}, under this transformation, the mean-square value of the normalised residuals, $\nu_1^{(b)}, \ldots, \nu_{n_b}^{(b)}$, is closely related to the Anderson-Darling goodness-of-fit statistic. 

The normalised residual plot shows the significance of the model deviations from the expected values (which are zero in the case of the standard normal distribution) as a function of rank. This is analogous to a PP plot (see Section \ref{sec:stationary_plots}), but gives an improved visualisation of the significance of the deviation at all probability levels, especially for large sample sizes. An alternative visualisation would be to plot $\nu_k^{(b)}$ against $k/(n_b+1)$, where the latter is the expected regional exceedance probability. The advantage to the alternative approach is that the horizontal scale is the same for any sample size, making it more akin to a probability plot. However, for consistency with the standardised tail plots, we use the rank on the horizontal axis.

\begin{figure}[t]
	\centering
    \includegraphics[width=0.5\textwidth]{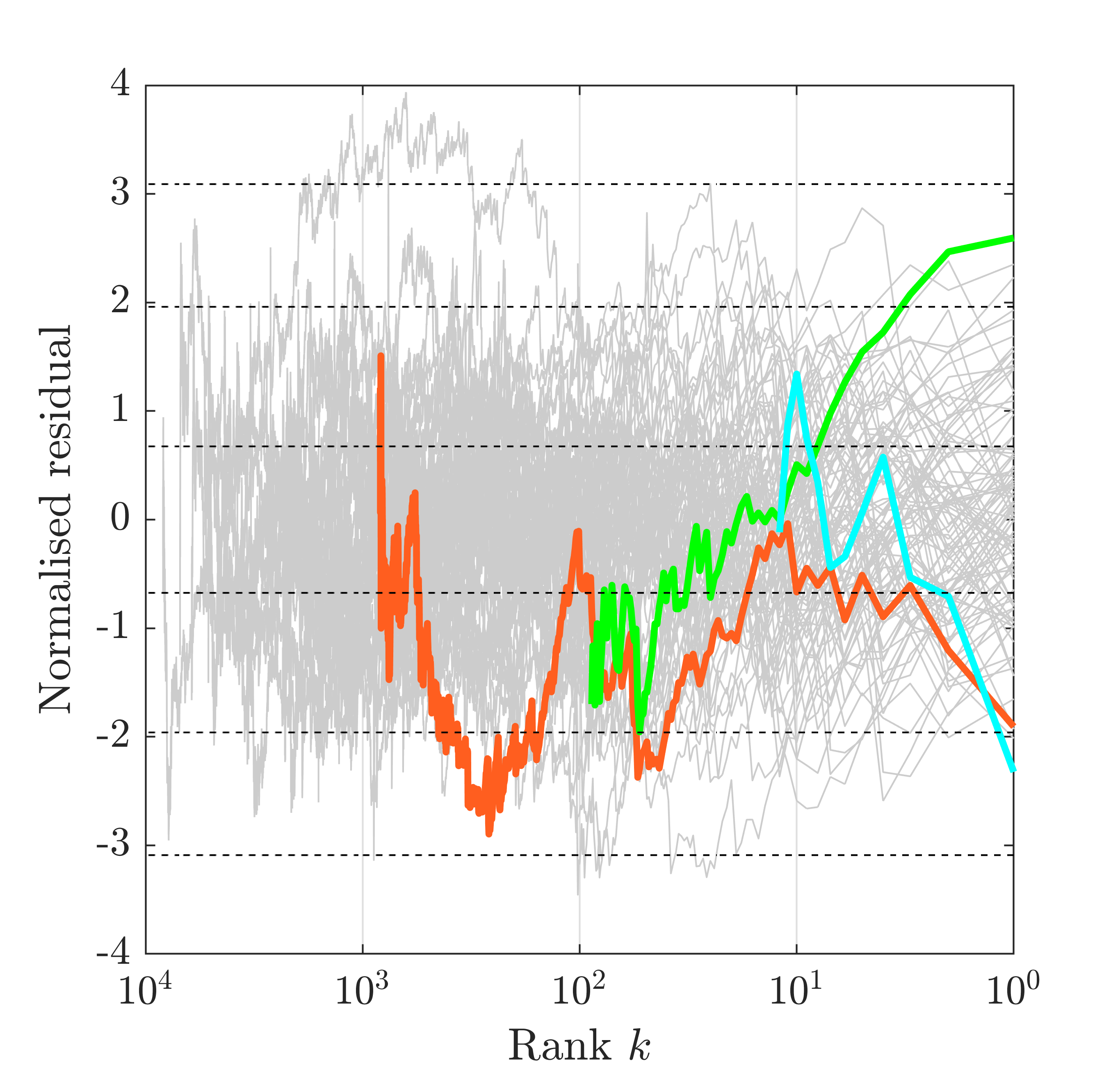}
	\caption{Normalised residual plot for the data shown in lower plots of \autoref{fig:exp_QQ}. Each line grey corresponds to an individual region. Dashed lines indicate normal quantiles at non-exceedance probabilities 0.001, 0.025, 0.25, 0.75, 0.975, and 0.999. Colours of highlighted lines correspond to the highlighted samples in \autoref{fig:exp_QQ}.}
	\label{fig:norm_res_plot}
\end{figure}

An example normalised residual plot is shown in \autoref{fig:norm_res_plot}, using the same simulated data used in \autoref{fig:exp_QQ}. There are some differences in excess of the 99.8\% CI which were not visible in the standardised tail plot (\autoref{fig:exp_QQ}), as they occur at higher ranks which have narrow sampling confidence intervals for the exponential quantiles. The normalised residual plot does not require any asymptotic assumptions, so they are applicable for any ratio $n/k$ of sample size to rank, whereas the asymptotic confidence bounds in the standardised tail plot are only a good approximation for $n/k \gtrsim 10$. However, the standardised tail plot places more emphasis on the tails of the distribution in each bin, which is perhaps more in the spirit of diagnostics for extreme value models.
    
%%%%%%%%%%%%%%%%%%%%%%%%%%%%%%%%
\section{Summarising regional and global model performance} \label{sec:summary}
\subsection{Outline}
The diagnostic plots introduced in Section \ref{sec:standard_regional} provide a visual check on model performance in each region. To assess the overall model performance, it is useful to aggregate and summarise this information. The information in the regional diagnostic plots can be summarised either by considering the performance of the model in each region (across all ranks), or by considering the model performance across all regions at a fixed rank $k$. For the regional summaries, we can apply standard tests for goodness-of-fit. As these summary statistics form an integral part of our proposed diagnostics, we start in Section~\ref{sec:GOF} by considering the suitability of these statistics for detecting lack of fit in the upper tail of the distribution. Of the many goodness-of-fit tests, we consider the widely-used Cram\'{e}r-von Mises (CvM) family (Section~\ref{sec:CvM}). To complement our new visual diagnostics, we also introduce two new test statistics, which are defined via the deviations visualised in the standardised tail plot (Section~\ref{sec:EMAD}) and normalised residual plots (Section~\ref{sec:RMS_normres}), and derive their asymptotic properties. The statistic defined in terms of the normalised residuals is shown to be asymptotically equivalent to the Anderson-Darling statistic. Section~\ref{sec:sensitivity} considers the asymptotic sensitivity of test statistics to deviations in various parts of the estimated distribution, and shows that the new statistic we propose in Section~\ref{sec:EMAD} has a greater sensitivity in the upper tail than the other test statistics considered. In Section~\ref{sec:summary_plot} we discuss how these summary statistics can be used to form regional and rank-based summary plots. Finally, we discuss global summary statistics in Section~\ref{sec:global_summary}. As in Section \ref{sec:standard_regional}, we assume that summary statistics are calculated for an independent hold-out sample, not used for model estimation. In this case, p-values for the summary statistics can be calculated quickly from pre-computed lookup tables. However, if an independent hold-out sample is not available, then p-values can be estimated using, e.g., a parametric bootstrap.

%%%%%%%%%%%%%%%%%%%%%%%%%%%%%%%%%%%%%
\subsection{Goodness-of-fit statistics} \label{sec:GOF}
\subsubsection{Cram\'{e}r-von Mises family} \label{sec:CvM}
The Cram\'{e}r-von Mises (CvM) family of goodness-of-fit statistics give a weighted measure of the uniformity of the PIT values in terms of their empirical distribution function (EDF) \citep{stephens1986}. For notational simplicity, we here consider suppose that $\hat{q}_{(1)}\le \cdots \le \hat{q}_{(n)}$ are the model global PIT values, but note that the following definitions can be adapted for regional PIT values. The EDF of the PITs is defined as 
\begin{equation} \label{eq:PIT_EDF}
    \mathbb{F}_{n}(u) = \frac{1}{n} \sum_{i=1}^{n} \mathbf{1}\left(\hat{q}_{(i)}\le u\right), \quad u\in[0,1].
\end{equation}
Then, the CvM family of test statistics is defined as
\begin{equation*}
    T_n^2 = n \int_{0}^1 \psi(u) \left(\mathbb{F}_n(u) - u\right)^2 \dd u,
\end{equation*}
where $\psi:[0,1]\to [0,\infty)$ is some non-negative weight function. When $\psi(u) = 1$, the statistic $T_n^2$ is the Cram\'{e}r-von Mises statistic, denoted $W_n^2$. When $\psi(u) = (u(1-u))^{-1},$ the statistic is the Anderson-Darling (AD) statistic, denoted $A_n^2$. When $\psi(u) = u^{-1}$, the statistic is the right-tail-weighted Anderson-Darling (ADR) statistic \citep{sinclair1990modified}, denoted $A_{R,n}^2$. Noting that $\mathbb{F}_{n}(u)=i/n$, $k=1,\dots,n-1$, on the interval $u \in \left[\hat{q}_{(i)},\hat{q}_{(i+1)}\right)$, the integrals can be evaluated explicitly in terms of $\hat{q}_{(i)}$, and the CvM, AD, and ADR test statistics can be computed as
\begin{align*}
    \text{CvM:} \quad & W^2_{n} = \frac{1}{12n} + \sum_{k=1}^{n} \left(q_{k} - \hat{q}_{(k)}\right)^2,\\
    \text{AD:} \quad & A_{n}^2 = -{n} - 2 \sum_{k=1}^{n} \left[q_{k}\,\log\left(\hat{q}_{(k)}\right) + \left(1-q_{k}\right)\,\log\left(1-\hat{q}_{(k)}\right)\right],\\
    \text{ADR:} \quad & A_{R,{n}}^2 = -\frac{3n}{2} - 2 \sum_{k=1}^{n} \left[q_{k}\,\log\left(\hat{q}_{(k)}\right) -\hat{q}_{(k)}\right],
\end{align*}
where $q_{k} = (k-0.5)/n$. To calculate the p-value of these statistics, we need to know their distributions under the null hypothesis. These do not admit a simple closed form solution, but can be computed via Monte Carlo simulation, by replacing $\hat{q}_{(k)}$ by $\text{U}(0,1)$ random variables (see Section SM1.1 of the Supplementary Material for details). Under the null hypothesis, the CvM family of statistics converge to an asymptotic distribution as $n\to\infty$. The scaled empirical process $\sqrt{n} (\mathbb{F}_n(u) - u)$ converges in distribution to $B(u)$, a Brownian bridge on $[0,1]$ \citep{shorack1986}. A Brownian bridge is a continuous Gaussian process with mean zero and covariance function $\Cov(B(s), B(t)) = \min(s,t)-st$, $s,t\in[0,1]$. Hence, under the null hypothesis,
\begin{equation} \label{eq:CvM_family_asymptotic}
    T_n^2 \xrightarrow{\hspace{2mm}d\hspace{2mm}} T^2 = \int_{0}^1 \psi(u) \left(B(u)\right)^2 \dd u.
\end{equation}
As discussed below, this can be used to compute an asymptotic sensitivity to deviations in the PIT distribution at various quantile levels $u$. 

%%%%%%%%%%%%%%%%%%%%%%%%%%%%%%%%%%%%%%%%%%%%
\subsubsection{MAD of exponential order statistics} \label{sec:EMAD}
We also consider a goodness-of-fit statistic based on the mean absolute deviation (MAD) of the exponential order statistics, defined as
\begin{equation*}
    S_{n} = \frac{1}{\sqrt{n}} \sum_{k=1}^{n} \left\lvert z_{k} - \hat{z}_k\right\rvert,
\end{equation*}
where $z_k = H_n-H_{k-1}$ and $\hat{z}_k=-\log(\hat{q}_{(k)}).$ Although the sum is normalised by $\sqrt{n}$ rather than $n$, $S_n$ can be viewed as the MAD of the scaled terms $\sqrt{n}\lvert z_{k} - \hat{z}_k\rvert$. We refer to $S_{n}$ as the MAD of the exponential order statistics, abbreviated as EMAD. The EMAD statistic can be interpreted as the MAD of the regional trajectories shown in the standardised tail plots. This and similar statistics have been used previously for threshold and model selection for extreme value regression analyses \citep[see, e.g.,][]{varty2021inference, richards2026regression}, although with the sum normalised by $n$ rather than $\sqrt{n}$. With the scaling $\sqrt{n}$, under the null hypothesis, the EMAD statistic converges to a similar asymptotic form to the CvM statistics, as given in the following proposition.

\begin{proposition}[Asymptotic distribution of EMAD] \label{prop:EMAD_limit}
Let $S_n$ be defined as above. Under the null hypothesis
\begin{equation*}
	S_n \xrightarrow{\hspace{2mm}d\hspace{2mm}} S = \int_0^1 \frac{|B(u)|}{u} \dd u, \quad n\to \infty,
\end{equation*}
where $B$ is a Brownian bridge on $[0,1]$. The asymptotic expectation and variance are $\E[S] = \sqrt{\pi/2}$ and $\Var(S) = 4 \log(2)-\tfrac{\pi}{2} - 1$, respectively.
\end{proposition}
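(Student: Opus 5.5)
The plan is to write each exponential residual as a logarithmic ratio, linearise it, and recognise the sum as a Riemann sum of the weighted uniform quantile process.

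Under the null hypothesis, $\hat q_{(k)} = Q_{(k)}$ are uniform order statistics. Write
\begin{equation*}
z_k - \hat z_k = \bigl(H_n - H_{k-1} + \log(k/n)\bigr) + \log\bigl(Q_{(k)}/(k/n)\bigr).
\end{equation*}
By \eqref{eq:Hn_approx}, the first bracket is $O(1/k)$. Its contribution to $S_n$ is therefore at most $n^{-1/2}\sum_{k\le n} O(1/k) = O(\log n/\sqrt{n}) \to 0$.

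For the second term, linearise $\log(1+x)\approx x$ to get $\log\bigl(Q_{(k)}/(k/n)\bigr) \approx (Q_{(k)}-k/n)/(k/n)$. This gives
\begin{equation*}
S_n \approx \sum_{k=1}^n \frac{1}{n}\,\frac{\bigl|\sqrt{n}\,(Q_{(k)} - k/n)\bigr|}{k/n}.
\end{equation*}
This is a Riemann sum of $|\beta_n(u)|/u$, where $\beta_n(u)=\sqrt n\,(\mathbb{G}_n^{-1}(u)-u)$ is the uniform quantile process. Since $\beta_n$ converges to a Brownian bridge $B$ (the sign flip relative to the empirical process is irrelevant under $|\cdot|$), the continuous mapping theorem on $[\varepsilon,1]$ yields convergence of the truncated sum over $k>\varepsilon n$ to $\int_\varepsilon^1 |B(u)|/u\,\dd u$. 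On $[\varepsilon,1]$ the linearisation error is uniformly $O_p(n^{-1/2})$ per term, so it vanishes after the $n^{-1/2}$ scaling. Near $u=1$ there is no singularity, because the weight $1/u$ is bounded there.

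The main obstacle is the left endpoint: the weight $1/u$ is singular at $0$, and for small $k$ the linearisation fails. I will handle this with a direct moment bound instead of a weighted-approximation theorem. By Theorem~\ref{thm:normalised_exceedance} and Corollary~\ref{cor:Z_difference}, or more precisely the exact representation $Z_{(k)} = \sum_{j\ge k}^{n} E_j/j$ via R\'enyi's representation, we have
\begin{equation*}
\E\,|z_k-\hat z_k| \le \sqrt{\Var(Z_{(k)})} = \Bigl(\sum_{j=k}^n j^{-2}\Bigr)^{1/2} \le C/\sqrt{k}.
\end{equation*}
Hence
\begin{equation*}
\E\, n^{-1/2}\sum_{k\le \varepsilon n}|z_k-\hat z_k| \le C n^{-1/2}\sqrt{\varepsilon n} = C\sqrt{\varepsilon},
\end{equation*}
uniformly in $n$. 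Likewise $\E\int_0^\varepsilon |B(u)|/u\,\dd u \le C\sqrt\varepsilon$, since $\E|B(u)|\le\sqrt{u}$. A standard $\varepsilon$-approximation argument (e.g.\ Billingsley's Theorem 3.2) then gives $S_n \xrightarrow{d} S$. If uniform integrability of the moments is also wanted, the same $L^2$ bounds supply it.

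For the moments, use $\E|B(u)| = \sqrt{2u(1-u)/\pi}$. Then
\begin{equation*}
\E[S] = \sqrt{2/\pi}\int_0^1 u^{-1/2}(1-u)^{1/2}\,\dd u = \sqrt{2/\pi}\;\mathrm{Beta}(1/2,3/2) = \sqrt{2/\pi}\cdot\pi/2 = \sqrt{\pi/2}.
\end{equation*}
For the second moment, the bivariate normal identity gives
\begin{equation*}
\E|B(s)B(t)| = \tfrac{2}{\pi}\sigma_s\sigma_t\bigl(\sqrt{1-\rho^2}+\rho\arcsin\rho\bigr),
\end{equation*}
with $\sigma_u^2=u(1-u)$ and $\rho = \sqrt{s(1-t)/(t(1-s))}$ for $s<t$. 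Then
\begin{equation*}
\E[S^2] = 2\int_{0<s<t<1} \frac{\E|B(s)B(t)|}{st}\,\dd s\,\dd t.
\end{equation*}
Evaluating this double integral is the computational crux. I would substitute $a=s/(1-s)$ and $b=t/(1-t)$, so that $\rho=\sqrt{a/b}$, and then set $\rho=\sin\theta$ to reduce it to one-dimensional integrals of elementary and $\theta$-weighted trigonometric terms. I expect this to produce $\E[S^2] = 4\log 2 - 1$, whence $\Var(S) = 4\log 2 - 1 - \pi/2$. I would double-check the result numerically.
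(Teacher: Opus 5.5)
\textbf{Distributional limit and mean.} These parts are correct and follow the same plan as the paper:
\begin{itemize}
\item the R\'enyi bound $\E|Z_{(k)}-z_k|\le\sqrt{\Var(Z_{(k)})}\le C/\sqrt{k}$ disposes of the block $k\le\varepsilon n$ at cost $O(\sqrt{\varepsilon})$;
\item a Brownian-bridge limit handles the remaining terms;
\item the mean follows from the folded-normal formula.
\end{itemize}
Where you depart from the paper, your version is tighter:
\begin{itemize}
\item The uniform quantile process gives the uniformity in $u$ that the Riemann-sum step needs. The paper only has a pointwise Bahadur--Kiefer expansion at fixed $p$.
\item You bound the centring error $H_n-H_{k-1}+\log(k/n)=O(1/k)$ explicitly.
\item You rightly note that no separate block near $u=1$ is needed, since $1/u$ is bounded there.
\item Billingsley's Theorem~3.2 is the correct tool for the double limit in $n$ and $\varepsilon$. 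The paper passes over this with an appeal to Slutsky.
\end{itemize}
One small correction to your aside on uniform integrability. Minkowski with your bounds gives $\sup_n\E[S_n^2]\le 8$, so $\E[S_n]\to\E[S]$. Convergence of $\Var(S_n)$ would need a higher moment, for example an $L^4$ bound from the same R\'enyi representation. The statement only concerns the moments of $S$, so this is not essential.

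\textbf{The gap: the variance.} The statement asserts $\Var(S)=4\log 2-\tfrac{\pi}{2}-1$. You only set up $\E[S^2]$ and say you \emph{expect} it to equal $4\log 2-1$. That evaluation is the whole content of this part of the claim; the paper gives it a supplementary section, splitting into $p<q$ and $p>q$ and evaluating four arcsine/logarithmic integrals. As written, the variance is unproved.

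Your substitution does finish it quickly, provided you integrate over $\theta$ first for fixed $b$. We have $\sigma_s\sigma_t/(st)=1/\sqrt{ab}$ and $\dd s=\dd a/(1+a)^2$, and $a=b\sin^2\theta$ gives $\dd a/\sqrt{ab}=2\cos\theta\,\dd\theta$. The inner integral is then
\begin{equation*}
\int_0^{\pi/2}\frac{2\cos\theta\,(\cos\theta+\theta\sin\theta)}{(1+b\sin^2\theta)^2}\,\dd\theta=\frac{\pi}{2}\,\frac{(1+b)^{3/2}-1}{b(1+b)}.
\end{equation*}
The $\cos^2\theta$ term is done with $t=\tan\theta$. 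The $\theta\sin\theta\cos\theta$ term is done by parts, since $-1/\{b(1+b\sin^2\theta)\}$ is an antiderivative of $2\sin\theta\cos\theta/(1+b\sin^2\theta)^2$. With $y=\sqrt{1+b}$,
\begin{equation*}
\E[S^2]=\frac{4}{\pi}\int_0^\infty\frac{\pi}{2}\,\frac{(1+b)^{3/2}-1}{b(1+b)^3}\,\dd b=4\int_1^\infty\Bigl(\frac{1}{y^4}+\frac{1}{y^5(y+1)}\Bigr)\dd y=4\Bigl(\frac13+\log 2-\frac{7}{12}\Bigr)=4\log 2-1.
\end{equation*}
Subtracting $(\E[S])^2=\pi/2$ gives the stated variance. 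Written out, this is noticeably shorter than the paper's four-integral computation.
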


As with the CvM test statistics, the EMAD does not have a simple closed form distribution for finite sample sizes, but the distribution under the null can be computed by Monte Carlo simulation in the same way, by replacing $\hat{z}_k^{(b)}$ by iid $\mbox{Exp}(1)$ random variables (see Section~SM1.1 of the Supplementary Material for details). The proof of Proposition~\ref{prop:EMAD_limit} is provided in Appendix~\ref{app:proof_prop:EMAD_limit}.

%%%%%%%%%%%%%%%%%%%%%%%%%%%%%%%%%%%%%%%%%%%
\subsubsection{Mean square of normalised residuals} \label{sec:RMS_normres}
The EMAD statistic is directly related to the trajectories shown in the standardised tail plots. Another goodness-of-fit statistic can be defined in terms of the trajectories shown in the normalised residual plots. Let $\nu_k$ be the normalised residuals defined in Section~\ref{sec:normalised_residual}, and define 
\begin{equation*}
    \mathcal{A}_{n}^2 = \frac{1}{n} \sum_{k=1}^{n} \nu_k^2.
\end{equation*}
The following proposition shows that, under the null hypothesis, this statistic has the same asymptotic distribution as the Anderson-Darling statistic, given in \eqref{eq:CvM_family_asymptotic}.

\begin{proposition} \label{prop:AD_equiv}
Let $\mathcal{A}_n^2$ be defined as above. Under the null hypothesis
\begin{equation*}
	\mathcal{A}_n^2 \xrightarrow{\hspace{2mm}d\hspace{2mm}}  \int_0^1 \frac{(B(u))^2}{u(1-u)} \dd u, \quad n\to \infty,
\end{equation*}
where $B$ is a Brownian bridge on $[0,1]$. 
\end{proposition}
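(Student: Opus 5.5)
The plan is to show that $\mathcal{A}_n^2$ is, up to an $o_p(1)$ error, a Riemann sum for $\int_0^1 \bigl(\sqrt{n}(\mathbb{G}_n(u)-u)\bigr)^2/(u(1-u))\dd u$, and then to invoke the weighted empirical/quantile process convergence that underlies \eqref{eq:CvM_family_asymptotic}. Under the null hypothesis the $\hat q_{(k)}$ are uniform order statistics $Q_{(k)}\sim\mbox{Beta}(k,n-k+1)$, so $\nu_k=\Phi^{-1}(F_{\beta_k}(Q_{(k)}))$.

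\textbf{Step 1: central ranks.} Take $k/n\to u\in(0,1)$. The Beta$(k,n-k+1)$ law has mean $k/(n+1)$ and variance $u(1-u)/n$ asymptotically, and it is asymptotically normal with a Berry--Esseen-type error of order $n^{-1/2}$ uniformly over $k$ in $[\delta n,(1-\delta)n]$. Hence
\begin{equation*}
\nu_k=\frac{\sqrt{n}\,\bigl(Q_{(k)}-k/(n+1)\bigr)}{\sqrt{u_k(1-u_k)}}+o_p(1),\qquad u_k=k/(n+1),
\end{equation*}
uniformly on that range. The uniform quantile process $\sqrt{n}\,(Q_{(\lceil nu\rceil)}-u)$ converges to $B(u)$ in $\ell^\infty[\delta,1-\delta]$; this is the same Brownian bridge as in the empirical process, up to a sign. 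By the continuous mapping theorem,
\begin{equation*}
\frac{1}{n}\sum_{\delta n\le k\le(1-\delta)n}\nu_k^2\xrightarrow{\hspace{2mm}d\hspace{2mm}}\int_\delta^{1-\delta}\frac{B(u)^2}{u(1-u)}\dd u .
\end{equation*}

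\textbf{Step 2: the tails, which is the main obstacle.} Here the normal approximation to the Beta fails; for fixed $k$, Theorem~\ref{thm:normalised_exceedance} gives a gamma limit instead. However, each $\nu_k$ is exactly standard normal marginally, so $\E[\nu_k^2]=1$ for every $k$. Therefore
\begin{equation*}
\E\Bigl[\frac{1}{n}\sum_{k<\delta n}\nu_k^2\Bigr]\le\delta+1/n,
\end{equation*}
and Markov's inequality makes the tail contributions uniformly small in probability as $\delta\to0$. The same exact-normality argument handles the ranks $k>(1-\delta)n$. On the limit side, $\E\int_0^\delta B^2/(u(1-u))\dd u=\int_0^\delta 1\dd u=\delta\to0$.

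\textbf{Step 3: conclusion.} A standard approximation argument (Billingsley, Thm.~3.2) combines the central convergence with the uniformly negligible tails and yields the stated limit, which is the Anderson--Darling limit in \eqref{eq:CvM_family_asymptotic}. The delicate point is justifying the uniform linearization in Step 1. I would handle it via the Bahadur-type representation of uniform order statistics, together with the fact that $F_{\beta_k}$ is uniformly close to $\Phi\bigl((x-u_k)\sqrt{n/(u_k(1-u_k))}\bigr)$ on $[\delta,1-\delta]$, for example by Stirling's formula applied to the Beta density.
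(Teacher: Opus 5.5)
Your proposal is correct and follows essentially the paper's proof: the same $\varepsilon$-split into extreme and central ranks, the same exact-normality argument ($\E[\nu_k^2]=1$) with Markov's inequality for the extreme ranks, and, for the central ranks, a normal approximation of the Beta cdf combined with Brownian-bridge convergence of the order statistics and a Riemann-sum limit. The differences are only technical. The paper gets the Beta normal approximation from the identity $F_k(x)=\Pr(\mathrm{Bin}(n,x)\ge k)$ and the binomial CLT, and handles $U_{(k)}$ via Bahadur--Kiefer plus Donsker; you instead use the uniform quantile process with a uniform Berry--Esseen/Stirling bound and Billingsley's approximation theorem, which makes explicit the uniformity that the paper's plug-in of the random $U_{(k)}$ and its Riemann-sum step leave implicit.
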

The proof of Proposition~\ref{prop:AD_equiv} is provided in Appendix~\ref{app:proof_prop:AD_equiv}. 
Monte Carlo simulations show that the distribution of $\mathcal{A}_n^2$, for finite sample sizes, is also close to that of the AD statistic $A_n^2$ (see SM1.1 for details). Due to the similarity to the AD statistic, we do not consider $\mathcal{A}_n^2$ further. However, the relation between the two statistics provides a direct interpretation of the normalised residual plots as a visualisation of the deviations contributing to the AD statistic.

%%%%%%%%%%%%%%%%%%%%%%%%%%%%%%%%%%%%%%%%%%%
\subsubsection{Asymptotic sensitivity of test statistics} \label{sec:sensitivity}
Under the null hypothesis, the PIT values are uniformly distributed on $[0,1]$, with distribution function $F_0(u)=u$, $u \in[0,1]$. Suppose we test the distribution $F_0$ under the null hypothesis against a sequence of local alternatives that differ by $O(n^{-1/2})$, defined as $F_{\theta}(u) = u + \theta h(u)$, where $\theta =\lambda/\sqrt{n}$, $\lambda>0,$ and $h(u)$ is a local deviation function. Under the local alternatives, the empirical process $\sqrt{n}(\mathbb{F}_n(u)-u)$ acquires a deterministic drift, and converges in distribution to $B(u)+\lambda h(u)$ as $n\to\infty$ \citep{shorack1986}. Denote the CvM family test statistic under the null hypothesis as $T_{n,0}^2$ and the statistic under the local alternative $F_{\theta}$ as $T_{n,\theta}^2$. Then, from \eqref{eq:CvM_family_asymptotic}, the asymptotic bias converges, as $n\to\infty$, to 
\begin{equation}
\begin{aligned} \label{eq:CvM_sensitivity}
    \E[{T_{n,\theta}^2}] - \E[{T_{n,0}^2}] &\to \int_{0}^1 \psi(u) \E\left[\left(B(u)- \lambda h(u) \right)^2\right] \dd u - \int_{0}^1 \psi(u) \E\left[\left(B(u)\right)^2\right] \dd u\\
    &= \lambda^2 \int_{0}^1 \psi(u) h^2(u) \dd u.
\end{aligned}
\end{equation}
So, the influence of any deterministic local deviation $\lambda h(u)$ in the uniform distribution of the PIT values is weighted by $\psi(u)$. Therefore, the quantity 
\begin{equation*}
    \delta(u) \coloneq \frac{\psi(u)}{\sqrt{\Var\big(T_{0}^2\big)}},
\end{equation*}
defines a local signal-to-noise ratio for perturbations to the PIT distribution at exceedance probability $u$. For the CvM statistic, $\Var\big(W_{0}^2\big) = 1/45$, and $\delta(u) = 3\sqrt{5} \approx 6.71$ is constant with respect to $u$. For the AD statistic, $\Var\big(A_{0}^2\big) = 2\pi^2/3-6$ and $\delta(u) \approx 1.31/(u(1-u))$, whereas for the ADR statistic $\Var\big(A_{R,0}^2\big) = 1/6$ and $\delta(u) \approx 2.45/u$. So, although the AD and ADR statistics both have the same weights in the upper tail (asymptotically), the lower variance of the ADR statistic results in a greater sensitivity to perturbations in the upper tail of the distribution (at the expense of reduced sensitivity in the lower tail). 

A similar sensitivity analysis can be conducted for the EMAD statistic. We consider the same sequence of local alternatives as above, and denote the EMAD statistic under the null hypothesis as $S_{n,0}$ and the local alternative as $S_{n,\theta}$. The asymptotic sensitivity is given in the following proposition. 

\begin{proposition}[Asymptotic sensitivity of EMAD] \label{prop:EMAD_sensitivity}
Under the assumptions above, for small perturbations with $\lambda\to 0$, the asymptotic bias of the EMAD statistic for local alternatives converges, as $n\to\infty$, to
\begin{align*}
    \E[{S_{n,\theta}}] - \E[{S_{n,0}}] &\to \lambda^2 \int_{0}^1 \frac{h^2(u)}{u \sqrt{2\pi u (1-u)}} \dd u.
\end{align*}
\end{proposition}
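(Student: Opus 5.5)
The plan is to first identify the limiting distribution of $S_{n,\theta}$ under the local alternatives, and then compute its expectation in closed form pointwise in $u$ before expanding for small $\lambda$. For the limit, I would reuse the argument behind Proposition~\ref{prop:EMAD_limit}. That argument writes $\sqrt{n}\,(z_k-\hat z_k)$, at $u\approx k/n$, as a functional of the (quantile) empirical process, so that $S_n$ is asymptotically $\int_0^1 |\sqrt{n}(\mathbb{F}_n(u)-u)|/u \dd u$. The reason is that $z_k-\hat z_k\approx \log(\hat q_{(k)}/q_k)\approx(\hat q_{(k)}-q_k)/q_k$, and the uniform quantile process has the same Brownian-bridge limit in absolute value as the empirical process.

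Under $F_\theta$ the empirical process converges to $B(u)+\lambda h(u)$, as stated before the proposition. Substituting this drifted limit into the same functional should give
\begin{equation*}
S_{n,\theta}\xrightarrow{\hspace{2mm}d\hspace{2mm}} S_\lambda=\int_0^1\frac{|B(u)+\lambda h(u)|}{u}\dd u .
\end{equation*}
The sign convention for the drift does not matter, since only $|\cdot|$ and $h^2$ enter. To pass to expectations, I would establish uniform integrability of $S_{n,\theta}$, for example by a uniform second-moment bound from the same estimates used for $S_n$. Then $\E[S_{n,\theta}]-\E[S_{n,0}]\to\E[S_\lambda]-\E[S_0]$.

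Next, by Fubini, $\E[S_\lambda]=\int_0^1 u^{-1}\E|X_u|\dd u$, where $X_u\sim N(m,\sigma^2)$ with $m=\lambda h(u)$ and $\sigma^2=u(1-u)$. The folded-normal mean is
\begin{equation*}
\E|X_u|=\sigma\sqrt{2/\pi}\,e^{-m^2/(2\sigma^2)}+m\bigl(1-2\Phi(-m/\sigma)\bigr).
\end{equation*}
This is smooth and even in $m$, with second derivative $2\phi(0)/\sigma$ at $m=0$. Hence $\E|X_u|=\sigma\sqrt{2/\pi}+m^2/(\sigma\sqrt{2\pi})+O(m^4/\sigma^3)$. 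Subtracting the null term $\sigma\sqrt{2/\pi}$, which recovers $\E[S_0]=\sqrt{\pi/2}$, dividing by $u$ and integrating yields the leading term
\begin{equation*}
\lambda^2\int_0^1 \frac{h^2(u)}{u\sqrt{2\pi u(1-u)}}\dd u ,
\end{equation*}
with a remainder that is $o(\lambda^2)$ as $\lambda\to0$. An equivalent route is the exact identity $\E|X_u|-\sigma\sqrt{2/\pi}=\int_0^{m}(1-2\Phi(-t/\sigma))\dd t$, followed by bounding this integral.

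The main obstacle is justifying these interchanges near the endpoints, particularly $u\to0$, where the weight $u^{-1}$ and the shrinking $\sigma$ interact. Concretely, I would try the following.
\begin{itemize}
\item For the $o(\lambda^2)$ remainder, use the bound $0\le \E|X_u|-\sigma\sqrt{2/\pi}\le \min\bigl(|m|,\,m^2/(\sigma\sqrt{2\pi})\bigr)$. Together with the assumption that $h^2(u)/(u^{3/2}(1-u)^{1/2})$ is integrable, this gives a dominating function, so dominated convergence applies to $\lambda^{-2}\times$(difference).
\item Such an integrability condition on $h$ is implicitly required for the stated limit to be finite. It holds, for example, if $h(u)=O(u^{1/4+\epsilon})$ near $0$ and $h$ is bounded; I would state it as a regularity assumption.
\item For the process-level step, I would rely on weighted approximations of the uniform empirical process (Shorack--Wellner type results, with weights $u^{-1/2+\epsilon}$) to handle extreme ranks. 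For the finitely many smallest ranks, I would control them directly through Theorem~\ref{thm:normalised_exceedance} and Corollary~\ref{cor:Z_difference}: their contribution is $O(n^{-1/2}\log n)$ after the $\sqrt{n}$ normalisation, and hence negligible.
\end{itemize}
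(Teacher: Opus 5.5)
Your proposal is correct and follows the same route as the paper: identify the limit $\int_0^1 |B(u)+\lambda h(u)|/u \dd u$ via the drifted empirical-process version of the argument for Proposition~\ref{prop:EMAD_limit}, then apply the folded-normal mean of Lemma~\ref{lem:norm_abs_mom}(a) pointwise in $u$ and expand to second order in $\lambda$. Your extra steps make rigorous two points the paper's proof takes for granted: uniform integrability to pass from convergence in distribution to convergence of expectations, and the bound $0\le \E|X_u|-\sigma\sqrt{2/\pi}\le \min(|m|, m^2/(\sigma\sqrt{2\pi}))$ combined with dominated convergence to control the $o(\lambda^2)$ remainder near $u=0$.
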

The proof of Proposition~\ref{prop:EMAD_sensitivity} is provided in Appendix~\ref{app:proof_prop:EMAD_sensitivity}. 
The sensitivity for the EMAD statistic is of the same asymptotic form as the sensitivity for the CvM family \eqref{eq:CvM_sensitivity}, with a quadratic dependence on the perturbation $\lambda h (u)$, but with weighting function $\left(u \sqrt{2\pi u (1-u)}\right)^{-1}$. We can thus define a directly comparable asymptotic signal-to-noise ratio for the EMAD statistic as $\delta(u) = \left(u \sqrt{2\pi \Var(S_0) u (1-u)}\right)^{-1}$. So, the EMAD statistic has sensitivity $O(u^{-3/2})$ as exceedance probability $u\to0$, whereas the ADR statistic has sensitivity $O(u^{-1})$. The asymptotic signal-to-noise ratios are illustrated in \autoref{fig:test_stat_sensitivity}. The EMAD statistic is more sensitive than the ADR statistic for exceedance probabilities less than $\approx0.155$. This suggests that the EMAD may be more appropriate than the ADR as a goodness-of-fit statistic for models targetting the upper tails of the distribution.

For extreme value models the sample size is often small due to the scarcity of extreme observations, so this asymptotic analysis can only be taken as an indication. In Section~SM1.3 of the supplementary material, we present the results of a simulation study which shows that for finite sample sizes $n=25$, 50, and 100, the EMAD statistic is more sensitive than the ADR to perturbations in the upper tail of the distribution. 

\begin{figure}[t]
	\centering
    \includegraphics[width=0.6\textwidth]{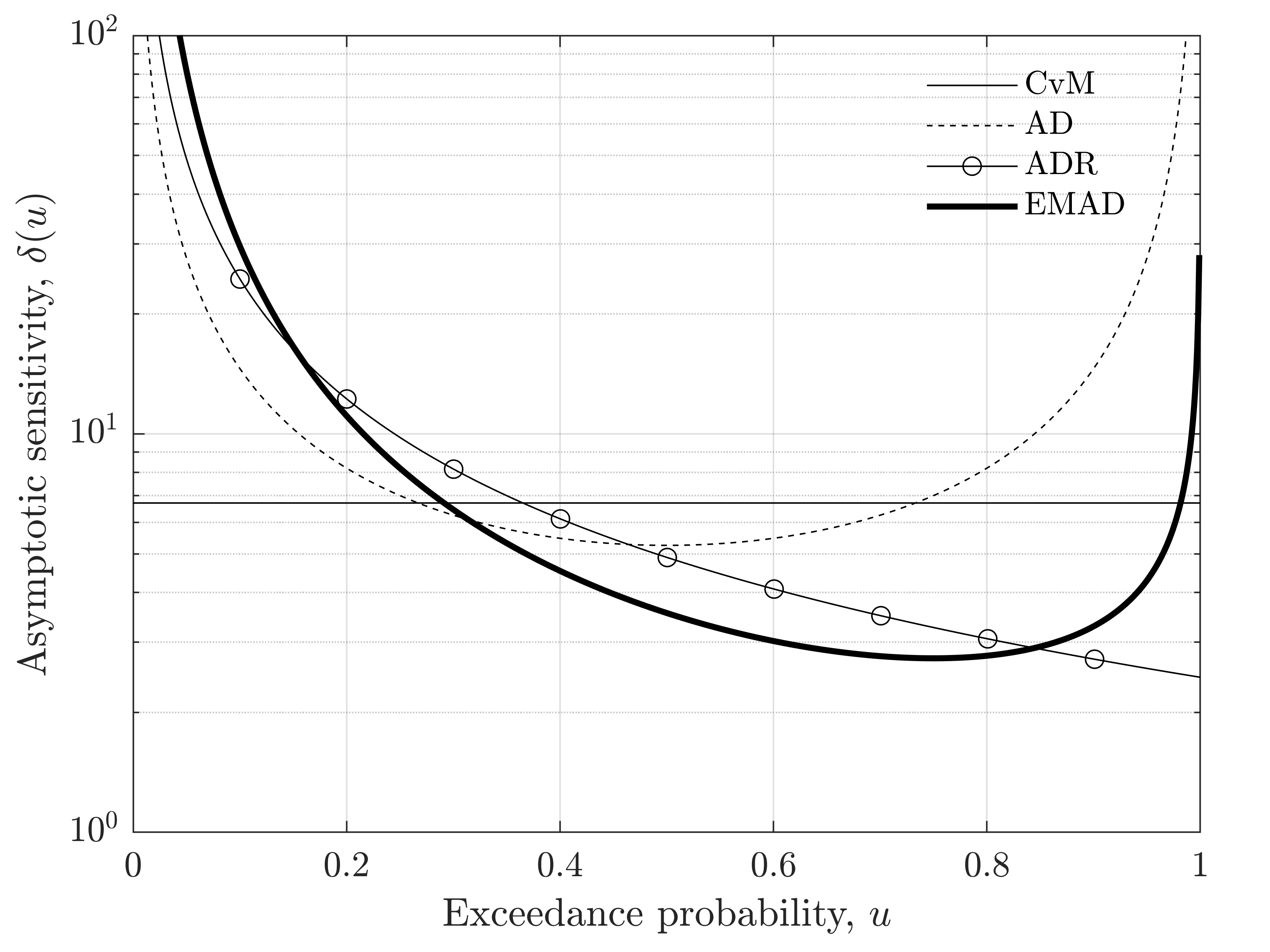}
	\caption{Asymptotic sensitivity of various goodness-of-fit statistics to perturbations in the PIT distribution at various exceedance probabilities.}
	\label{fig:test_stat_sensitivity}
\end{figure}

The choice of test statistic will influence judgement of model performance. A model for a given data sample may fail one test at a certain significance level but pass another test at the same level (see discussion in SM1.2). For the present purposes, we do not advocate the use of goodness-of-fit tests as pass/fail criteria, but use them only as a means of summarising model performance. 

%%%%%%%%%%%%%%%%%%%%%%%%%%%%%%%%%%%%%%%%%%%
\subsection{Performance summary plots} \label{sec:summary_plot}

\begin{figure}[t]
	\centering
    \includegraphics[width=0.8\textwidth]{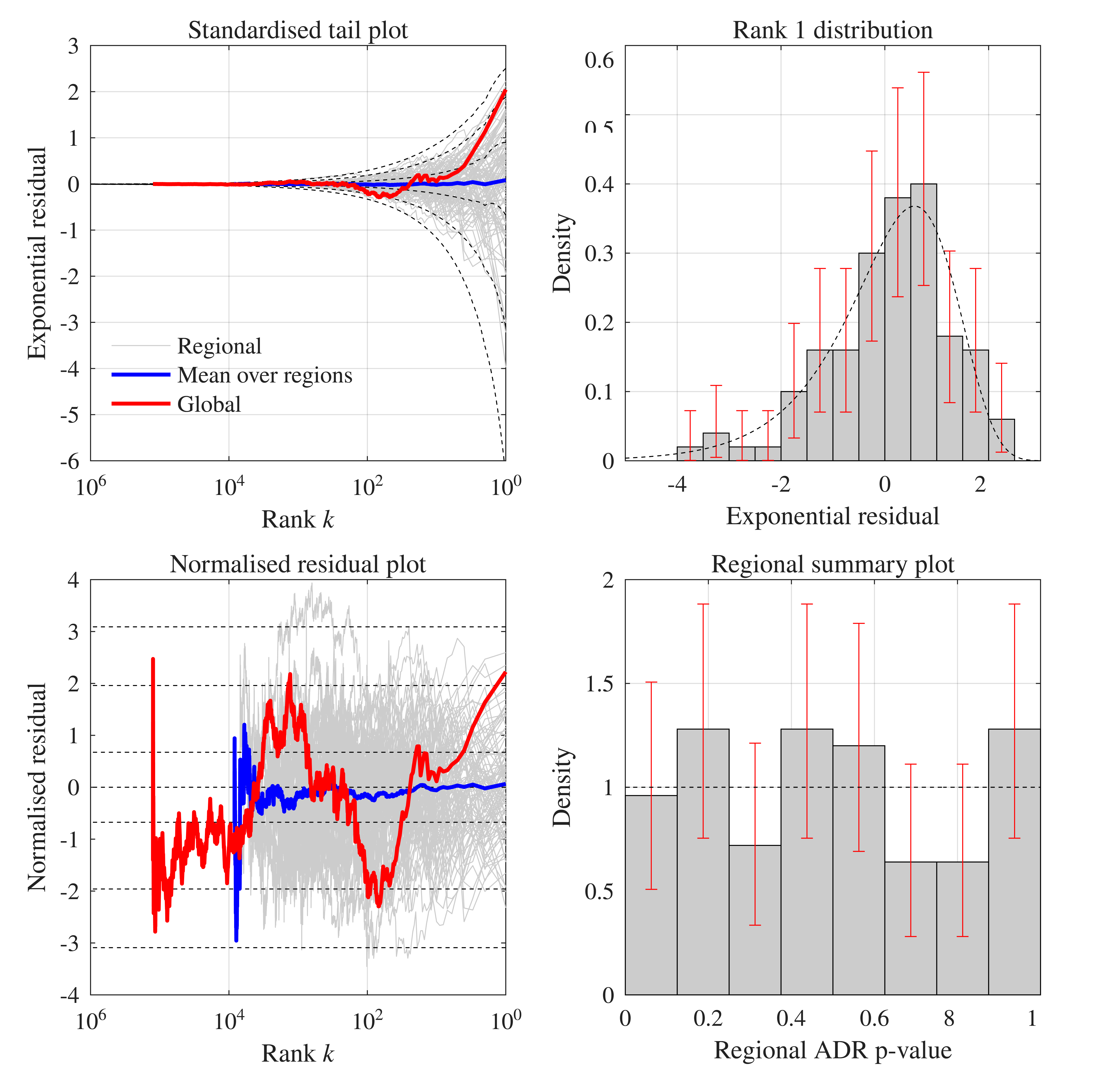}
	\caption{Example summary plots for the data used in Figures \ref{fig:exp_QQ} and \ref{fig:norm_res_plot}. Bold blue lines in standardised tail plot and normalised residual plot are the mean value at a given rank, over all regions. Red lines in these plots are the global diagnostics, obtained by pooling PIT values over all bins. Upper right plot shows a density histogram of the rank 1 exponential residuals over all regions, together with 95\% error bars. The asymptotic sampling distribution is shown as the dashed line. Lower right plot shows a density histogram of the regional ADR p-values, together with 95\% error bars.}
	\label{fig:summary_plots}
\end{figure}

\subsubsection{Regional summary plot}
We now reintroduce the regional binning of the covariate domain, for bins $b=1,\dots,B$. Suppose, for each bin, we calculate a goodness-of-fit statistic (either ADR or EMAD), and we denote the p-values of these statistics as $g_1,\dots,g_B$. Then, under the null hypothesis, these p-values are uniformly distributed in $[0,1]$. We define the \textit{regional summary plot} to be a density histogram of $g_1,\dots,g_B$. This can be used as a visual check on the uniformity of the regional goodness-of-fit p-values. A larger proportion of low p-values indicates a lack of fit overall. An example is shown in the lower right panel of \autoref{fig:summary_plots}, based on the simulated data used in \autoref{fig:exp_QQ} and \autoref{fig:norm_res_plot}. 

%%%%%%%%%%%%%%%%%%%%%%%%%%%%%%%%%%%%%%%%%%%
\subsubsection{Rank-based summaries}
We consider two rank-based summaries of model performance. Firstly, for a given rank $k$, we can calculate the mean value of either the exponential residuals ${d}_k^{(b)}\coloneq z_k^{(b)}-\hat{z}_k^{(b)}$ or normalised residuals $\nu_k^{(b)}$, over bins $b=1,\dots,B$, and add this information to the standardised tail and normalised residual plots. Examples of this are shown in \autoref{fig:summary_plots}, based on the simulated data shown in \autoref{fig:exp_QQ} and \autoref{fig:norm_res_plot}. Since, under the null hypothesis, both $d_k^{(b)}$ and $\nu_k^{(b)}$ have zero mean for any sample size, a shift in the observed mean away from zero indicates a bias in the model at a given rank. 

As well as the mean, we can investigate whether the distribution of $d_k^{(b)}$ or $\nu_k^{(b)}$ follows the expected sampling distribution. An example is shown in the upper right panel of \autoref{fig:summary_plots}, which shows a density histogram of the rank 1 exponential residuals (although we focus here on $k=1$, i.e., the most extreme observation in each bin, other ranks can also be considered). Although the theoretical sampling distribution is based on asymptotic arguments, all regions contain a minimum of 10 data points, so the asymptotic distribution is a close approximation in this case. 

%%%%%%%%%%%%%%%%%%%%%%%%%%%%%%%%%%%%%%%%%%%
\subsection{Global diagnostics} \label{sec:global_summary}
In some cases, we may wish to compare a large number of candidate models, without having to examine detailed visual diagnostics for each. In this case, it is useful to quantify the model performance in terms of a small number of summary values. Although uniformity of the global PIT values is not a sufficient condition for model consistency, it is a necessary one. Therefore, the p-values of the global ADR or EMAD statistics provide a useful summary of model performance. We can also form global diagnostic plots in the same way as described in Sections \ref{sec:standard_tail_plot} and \ref{sec:normalised_residual}, by pooling the model PIT values over the entire covariate domain, i.e., by setting the number of covariate bins to $B=1$. Examples of these global diagnostics are also shown in \autoref{fig:summary_plots}. Whilst the global diagnostics provide an indication of the overall model performance, if occurrences are concentrated in a small number of regions, then global diagnostics will tend to be dominated by quality of model fit in these regions. 

To quantify model performance over all regions, we can quantify the uniformity of the regional goodness-of-fit p-values $g_1,\ldots,g_B$ using an additional test. In this case, an unweighted test statistic, such as the CvM statistic, is appropriate, as we are not testing for specifically for deviations in the tails. We define the \textit{regional uniformity p-value} to be the p-value of the CvM statistic for $g_1,\ldots,g_B$. This can be interpreted as a quantification of the uniformity of the regional summary plot (lower right panel of \autoref{fig:summary_plots}).

%%%%%%%%%%%%%%%%%%%%%%%%%%%%%%%%
\section{Examples} \label{sec:example}
\subsection{Overview}
In this section, we consider two applications of peaks-over-threshold (POT) regression models. Section~\ref{sec:ex_SPAR} describes a multivariate extreme value model fitted to five-dimensional simulated data. In this setting, the multivariate data are expressed in polar coordinates, and a POT regression analysis is conducted for the radial component conditional on the angle. Section~\ref{sec:ex_surrogate} details a practical example of a probabilistic surrogate model for extreme responses of a floating offshore wind turbine to environmental loading, where the turbine response is modelled conditional on three environmental variables. 

Both examples use a deep learning-based inference procedure, in which artificial neural networks (ANNs) are used to represent parameter variation on the covariate domain; see \citet{Richards2024}. A brief overview of the inference procedure is provided in Appendix~\ref{app:POT}. Application of this approach requires the selection of various hyperparameters, such as the number of layers in the neural network, the number of nodes per layer, and the non-exceedance probability used to define the threshold. The approach we take to hyperparameter optimisation is to fit a large number of candidate models with different combinations of hyperparameters, and use our proposed diagnostics to select between the fitted models. 

Our diagnostics can equally be applied for other inference schemes. To illustrate this, in the second example, we compare the performance of the ANN model to a model that uses GAMs (generalised additive models) for the generalised Pareto threshold and parameter functions \citep[see, e.g.,][]{Youngman2019}.

%%%%%%%%%%%%%%%%%%%%%%%%%%%%%%%%
\subsection{SPAR model for multivariate extremes} \label{sec:ex_SPAR}
Let $\bm{X}=(X_1,\dots,X_d)\in\RR^d$ be a random vector with a multivariate normal copula with correlation matrix $\mathrm{S}\in\mathbb{R}^{d\times d}$, and standard Laplace margins. That is, for all $i=1,\dots,d$, $X_i\sim F_L(x)$ where $F_L(x) = \tfrac{1}{2} + \sgn(x) \big(1-\exp(-|x|)\big)$ is the standard Laplace distribution function. Then $\bm{X}$ has joint density function
\begin{equation}\label{eq:norm_laplace_dens}
    f_{\bm{X}}(\bm{x}) = 2^{-d} \lvert \mathrm{S} \rvert^{-1/2} \exp \left( - \|\bm{x}\|_1 - \frac{1}{2} \|\bm{z}\|^2_2 - \frac{1}{2} \bm{z}^\top \mathrm{S}^{-1} \bm{z} \right),
\end{equation}
where $\|\cdot\|_p$, $p\ge 1$, is the $L^p$ norm and $z_i = \Phi^{-1}\left(F_L(x_i)\right)$ for $i=1,\dots,d$. Define pseudo-polar coordinates $R=\|\bm{X}\|_2\in[0,\infty)$ and $\bm{W}=\bm{X}/R \in \dsphere$, where $\dsphere = \{\bm{x}\in\RR^d: \|\bm{x}\|_2 =1\}$ is the unit hypersphere in $\RR^d$. The conditional density of $R|\bm{W}=\bm{w}$ has the asymptotic form (see SM2.1)
\begin{equation} \label{eq:MVN_asymptotic}
    f_{R|\bm{W}}(r|\bm{w}) \propto [1+o(1)] r^{\zeta(\bm{w})- 1} \exp \left( - \frac{r}{\kappa(\bm{w})} \right), \quad r\to\infty, \; \bm{w} \in \dsphere.
\end{equation}
Expressions for $\zeta(\bm{w})$ and $\kappa(\bm{w})$ in terms of $\mathrm{S}$ are given in SM2.1. In this formulation, the conditional radial component converges to a truncated gamma distribution with shape $\zeta(\bm{w})$ and scale $\kappa(\bm{w})$. However, we note that $\zeta(\bm{w})$ can be negative in some cases, as discussed below.

The SPAR model \citep{mackay2025spar} assumes that the tail of the conditional radial distribution can be approximated by a generalised Pareto (GP) random variable, which has distribution function \begin{align*}
F_{\rm GP}(y|\sigma,\xi) := 
\begin{cases}
1-(1+\xi y /\sigma)^{-1/\xi}, \quad & \xi \neq 0, \\
1-\exp(-y/\sigma), \quad & \xi=0.
\end{cases}
\end{align*}
Specifically, the SPAR model assumes that, for some high quantile $u_\tau(\bm{w}):=F^{-1}_{R|\bm{W}}(\tau|\bm{w})$ with non-exceedance probability $\tau$ close to one, the conditional radial excesses $$(R-u_\tau(\bm{w}))|(\bm{W}=\bm{w})$$ follow a GP distribution with scale parameter $\sigma(\bm{w})>0$ and shape parameter $\xi(\bm{w})\in\mathbb{R}$ conditional on pseudo-angle $\bm{w}$. Since the conditional radial distribution converges to a gamma distribution in the case of Eq. \eqref{eq:MVN_asymptotic}, we make the simplifying assumption here that the tail of the radial distribution can be approximated by an exponential distribution (i.e., a GP distribution with shape parameter equal to zero, and hence assuming that $\zeta (\bm{w})=1$ everywhere). The SPAR approach also requires a model for the angular density, but here we concentrate on the conditional radial model; for further details, see \citet{mackay2026spar} and \citet{murphy2026exploring}.

In the present example, we consider $d=5$ with a randomly generated correlation matrix
\begin{equation*}
\mathrm{S} = 
\begin{bmatrix}
    \;\;\;1.0000 & -0.4387 & \;\;\;0.5946 & \;\;\;0.0758 & -0.2198\\
    -0.4387 & \;\;\;1.0000 & -0.5885 & \;\;\;0.0361 & \;\;\;0.3887\\
    \;\;\;0.5946 & -0.5885 & \;\;\;1.0000 & \;\;\;0.0778 & -0.2404\\
    \;\;\;0.0758 & \;\;\;0.0361 & \;\;\;0.0778 & \;\;\;1.0000 & -0.1047\\
    -0.2198 & \;\;\;0.3887 & -0.2404 & -0.1047 & \;\;\;1.0000
    \end{bmatrix}.
\end{equation*}
The range of tail shapes that this correlation matrix produces over the angular domain is illustrated in Figure SM4 of the Supplementary Material. Most values of $\zeta(\bm{w})$ are in the interval $[1,10]$, but some angles have $\zeta(\bm{w})<-100$ or $\zeta(\bm{w})>100$. This indicates that the accuracy of the approximation using an exponential distribution will vary across the domain, although convergence to the asymptotic form \eqref{eq:MVN_asymptotic} can be slow at some angles. The key point is that some differences between the model and observations are to be expected due to the sub-asymptotic approximation that we make with our simplified SPAR model.

One of the more challenging aspects when fitting a POT regression model is obtaining reasonable estimates in regions of sparse observations. As discussed in SM2.2, for this correlation matrix, the angular density varies by approximately two orders of magnitude over the angular domain, making inference challenging.

To infer the angle-dependent threshold and scale parameter, we use the inference scheme proposed by \citet{mackay2026spar}, where both the threshold function $u_\tau(\bm{w})$ and the subsequent exponential scale parameter $\sigma(\bm{w})$ are modelled by multilayer perceptrons. We train models (corresponding to different choices of pre-set model tuning parameters) using a sample of size $n=2\times 10^4$ for training and validation (with a standard 80/20 split), and have a hold-out test set of size $n=5\times 10^4$ for producing goodness-of-fit diagnostics. To apply our regional diagnostics, we first partition the angular domain by forming a Voronoi partition of the hypersphere, relative to a set of pseudo-regularly spaced reference angles (see Section SM3 of the Supplementary Material). One of the hyperparameters we wish to vary is the non-exceedance probability $\tau$ of the conditional quantile used to define the threshold. Varying $\tau$ results in different numbers of observations $n_b$ falling into each bin. To ensure that the number of bins used in the diagnostics does not vary with threshold level (i.e., all bins contain some observations exceeding the threshold $u_\tau(\bm{w})$), we iteratively refine the partition to ensure that all bins contain a minimum of $n_b=100$ observations, using the procedure described in Section SM3 of the Supplementary Material. This ensures that when, for example, $\tau=0.9$, there will be (approximately) 10 or more observations in each bin. The resulting partition has a total of $B=360$ bins.

We consider 5000 candidate deep SPAR models, with the architecture for each candidate model drawn randomly from a set of possible configurations. As training of deep models can be sensitive to the initialisation of their parameters, we also draw random initialisation weights for each candidate. The threshold non-exceedance probability $\tau$ is drawn randomly from a set of 36 equally-spaced candidate values: $\{0.1, 0.125, 0.15, \dots, 0.975\}$. For the neural network architectures that comprise $\sigma(\bm{w})$ and $u_\tau(\bm{w})$ (see Appendix~\ref{app:POT}), we consider multilayer perceptrons with either $L=2$ or $L=3$ layers, and with constant width $\eta$ of $2^4$, $2^5$, or $2^6$ nodes. The neural network models for $\sigma(\bm{w})$ and $u_\tau(\bm{w})$ are distinct and do not share parameters, and a different random configuration is used for each. This gives a total of $36\times(2\times3)^2 = 1296$ candidate model configurations. Therefore, our sample of 5000 candidate models will contain repeats with the same architecture but different random initialisations.

After estimating each of the candidate models, we compute global and regional visual diagnostics, and the summary statistics described in Sections~\ref{sec:standard_regional} and \ref{sec:summary}. We choose a threshold non-exceedance probability $\tau$ that provides both good global and good regional fit, which we quantify with (i) the global ADR p-value, and (ii) the regional uniformity p-value. The use of the ADR statistic for threshold selection has been discussed previously in \citet{solari2017peaks, alif2026extending}. \autoref{fig:ADR_threshold} presents box-plots of the estimated p-values, pooled by the threshold non-exceedance probability $\tau$ for each candidate model. In both plots, we observe a local maxima around $\tau = 0.825$, which we take to be our optimal threshold level for $u_\tau(\cdot)$. 

\begin{figure}[t!]
	\centering
    \includegraphics[width=\textwidth]{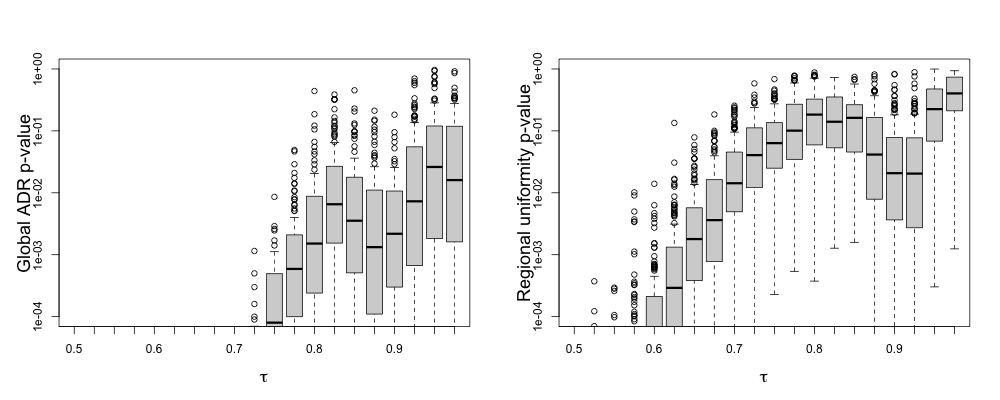}
	\caption{Box-plots of the global ADR p-values (left) and regional uniformity p-values (right), under 5000 estimated candidate models, as described in Section~\ref{sec:ex_SPAR}. The $x$-axis gives the candidate threshold non-exceedance probability~$\tau$. To enhance readability, the $y$-axis has been truncated at $1\times 10^{-4}$; all estimated $p$-values are less than $1\times 10^{-4}$ for $\tau<0.5$, and so the $x$-axis has also been truncated.}
	\label{fig:ADR_threshold}
\end{figure}

With $\tau = 0.825$ fixed, we estimate a further 2500 models with architectures randomly sampled and initialised from the 36 possible configurations, and investigate the summary statistics of their regional and global diagnostics. \autoref{fig:scatter} shows scatter plots of the p-values of three tests: (i) the global EMAD p-value, (ii) the global ADR p-value, and (iii) the regional uniformity p-value. We observe strong positive dependence between the two summary statistics for the global fits, i.e., (i) and (ii). However, we observe near-independence between global and regional summary statistics, suggesting that global goodness-of-fit diagnostics are not sufficient to diagnose goodness-of-fit at a regional or local level. Instead, practitioners should use both regional and global diagnostics in conjunction to determine goodness-of-fit for regression models. 

Of the 2500 candidate models, there are 122 models for which the global EMAD p-value exceeds 0.05, and 371 for which the global ADR p-value exceeds 0.05. This should be considered in light of the complexity of the models being estimated, and the relatively small sample used for training. In the most extremes cases, we have estimated models with approximately 8000 parameters from only 3500 training samples. Moreover, the stochastic gradient descent scheme used for parameter optimisation is not guaranteed to converge to a global optimum. Therefore, the relatively low proportion of models with `good' global diagnostics is to be expected for this `brute-force' strategy of hyperparameter optimisation.

\begin{figure}[t!]
	\centering
    \includegraphics[width=\linewidth]{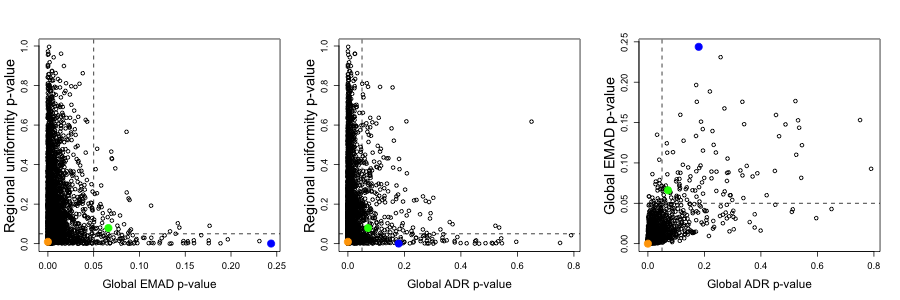}
	\caption{Scatter plots of the p-values associated with three test statistics for 2500 fitted models with non-exceedance probability $\tau=0.825$. Summary statistics include: global EMAD, ADR for the global fit, and the the CvM-based p-values for the uniformity of the regional ADR p-values. Dashed vertical and horizontal lines correspond to a p-value of 0.05. The three models that correspond with coloured points (Model 1 - blue, Model 2 - orange, Model 3 - green) are further investigated.}
	\label{fig:scatter}
\end{figure}

We now take a closer look at the visual diagnostics for individual estimates. Specifically, we focus on the models which provide the best (Model 1 - blue point on \autoref{fig:scatter}) and worst (Model 2 - orange) global fits with respect to the EMAD test, and a model (Model 3 - green) that exhibits good fit both regionally and globally, i.e., passes the global ADR and EMAD tests, and the regional uniformity test, at a 5\% significance level.

\autoref{fig:app1_example} provides visual diagnostics for Model 3. These plots provide further evidence in support of an excellent model fit at both the regional and global level. From the standardised tail (top centre), and normalised residual (bottom centre) plots, we observe good global fits (red lines) contained within 95\% confidence envelopes; the standard tail and normalised residual plots also illustrate good regional fits (grey lines), with regional estimates (grey lines) similarly contained within the 95\% confidence envelopes. The standardised tail and normalised residual plots also include a running empirical mean for each rank (blue), which appears close to the theoretical value (zero) in both cases. Goodness-of-fit for Model 3 is further exemplified at the regional level by the right column of \autoref{fig:app1_example}, which illustrates that, under Model 3, the rank 1 and rank 5 exponential residuals closely follow the theoretical asymptotic log-gamma sampling distribution (other ranks could also be considered, but are omitted here for brevity). Finally, the bottom left panel of \autoref{fig:app1_example} provides a histogram of the regional ADR p-values, which illustrate the expected uniformity (and which passes the CvM uniformity test at a 5\% significance level). 
\begin{figure}[t!]
	\centering
    \begin{minipage}{0.32\linewidth}
    \includegraphics[width=\linewidth]{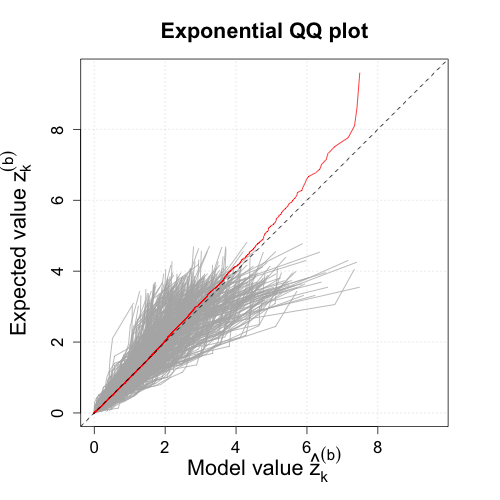}
    \end{minipage}
    \begin{minipage}{0.32\linewidth}
    \includegraphics[width=\linewidth]{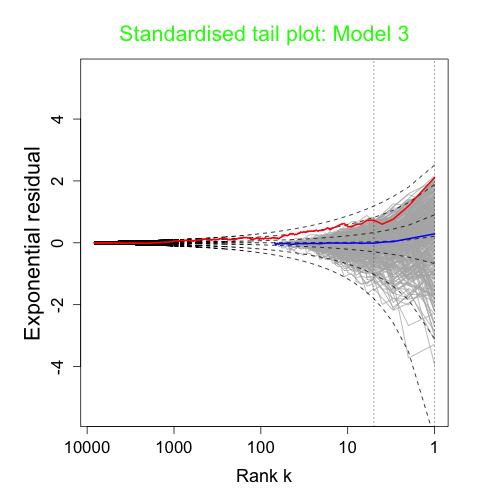}
    \end{minipage}
    \begin{minipage}{0.32\linewidth}
    \includegraphics[width=\linewidth]{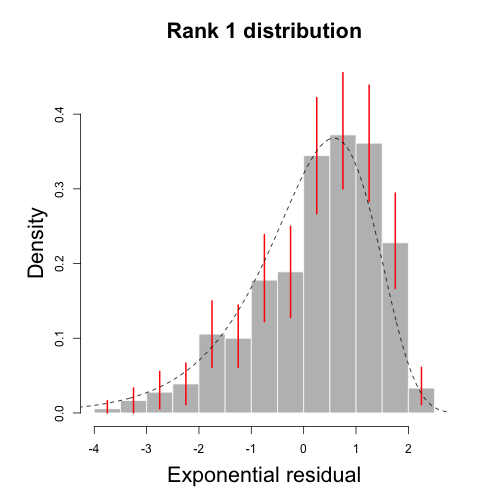}
    \end{minipage}
    \begin{minipage}{0.32\linewidth}
    \includegraphics[width=\linewidth]{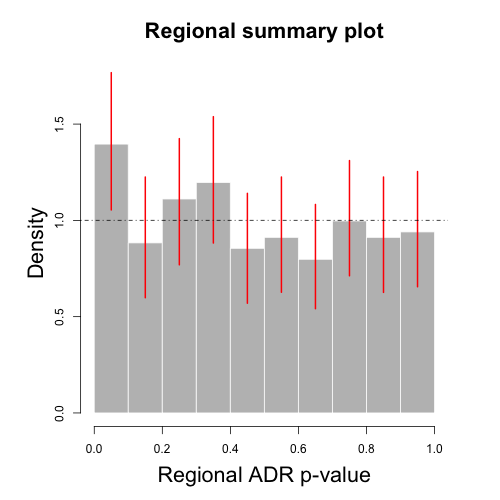}
    \end{minipage}
    \begin{minipage}{0.32\linewidth}
    \includegraphics[width=\linewidth]{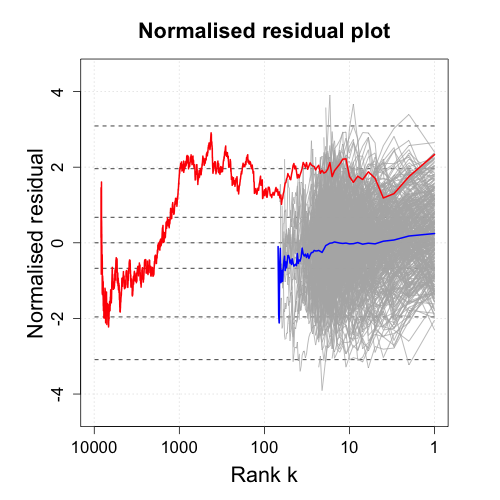}
    \end{minipage}
    \begin{minipage}{0.32\linewidth}
    \includegraphics[width=\linewidth]{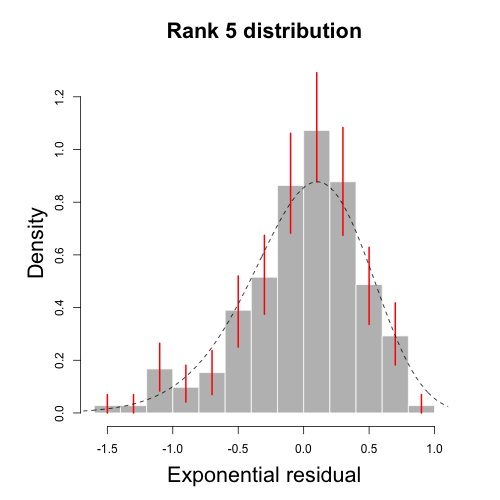}
    \end{minipage}
	\caption{Diagnostics for SPAR Model 3.  Top left and centre: Exponential QQ plot and Standardised tail plot, with grey and red curves denoting regional and global estimates, respectively. Top centre: Dashed curves denote  0.001, 0.025, 0.25, 0.5, 0.75, 0.975, and 0.999 quantiles from the theoretical log-gamma distribution (as a function of rank $k$). Right column: histogram of exponential residuals for ranks $k=1$ (top) and $k=5$ (bottom), with the theoretical density function (dashed line). Bottom left:  Histogram of the regional ADR p-values used in the CvM uniformity test. Bottom centre: Normalised residual for all data (red) and regional samples (grey), with dashed lines indicating normal quantiles at exceedance probabilities 0.001, 0.025, 0.25, 0.75, 0.975, and 0.999.
    Centre column: Blue lines denote running empirical means of regional estimates (in grey). For all histograms, red lines denote 95\% bootstrapped error bars. }
	\label{fig:app1_example}
\end{figure}

In contrast, deficiencies in the fits for Models 1 and 2 can be identified immediately from their own visual diagnostics. In the Supplementary Material, we provide analogues of \autoref{fig:app1_example} for Models 1 and 2; here, in \autoref{fig:app1_compare}, we focus only on a subset of the diagnostic plots. For Model 1, the standardised tail plot illustrates good fit at the global level, with the red curve contained within the 95\% confidence envelope. Good global fit of Model 1 is further supported by the large global ADR and EMAD p-value estimates (see \autoref{fig:scatter}). However, diagnosis of the fit at the regional level is more nuanced. The rank one exponential residuals appear to follow the theoretical asymptotic sampling distribution (top right panel; \autoref{fig:app1_compare}), but there is marked deviation away from uniformity for the regional ADR p-values (with estimated p-value of 0.0005). In particular, there is a significant excess in the number of small p-values, suggesting that the ADR goodness-of-fit test has failed for many regions. In this case, poor regional fits of the deep SPAR model are masked by the exemplary global fit.

Poor regional and global goodness-of-fit for Model 2 is immediately obvious from the visual diagnostics in \autoref{fig:app1_compare}. The global standardised tail plot markedly deviates from the horizontal, and we observe positive bias in the empirical standardised differences at rank $k=1$. While visual inspection of the histogram of regional ADR p-values (bottom left; Figure\ref{fig:app1_compare}) does suggest uniformity here, the p-value for the uniformity test is approximately zero (\autoref{fig:scatter}). This further supports the joint use of visual and quantitative tools for model assessment.
\begin{figure}[t!]
	\centering
        \begin{minipage}{0.3\linewidth}
    \includegraphics[width=\linewidth]{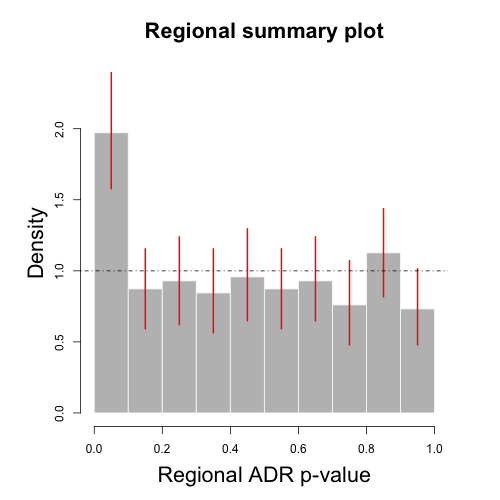}
    \end{minipage}
    \begin{minipage}{0.3\linewidth}
    \includegraphics[width=\linewidth]{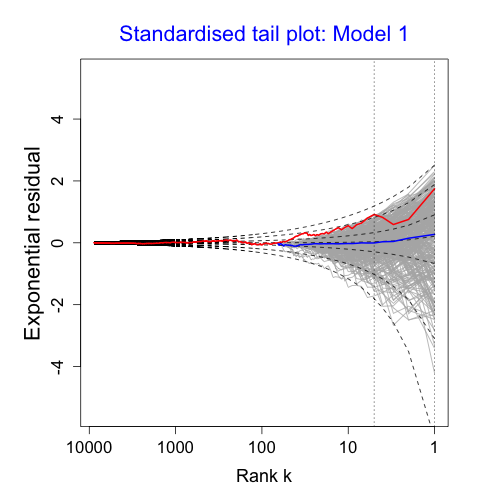}
    \end{minipage}
    \begin{minipage}{0.3\linewidth}
    \includegraphics[width=\linewidth]{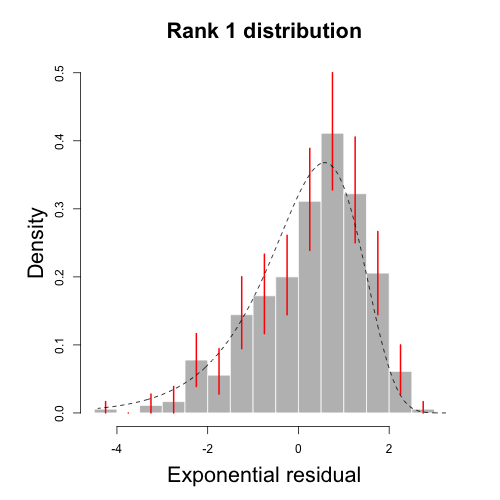}
    \end{minipage}
          \begin{minipage}{0.3\linewidth}
    \includegraphics[width=\linewidth]{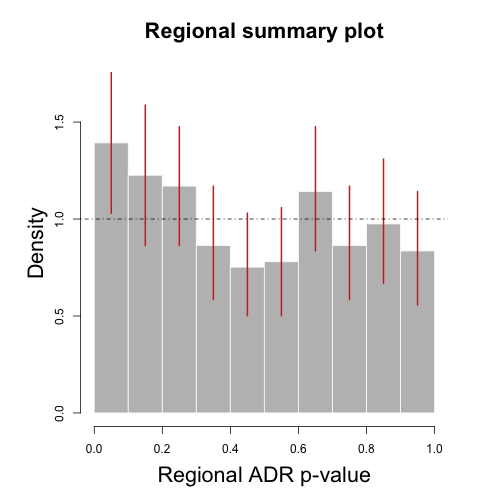}
    \end{minipage}
    \begin{minipage}{0.3\linewidth}
    \includegraphics[width=\linewidth]{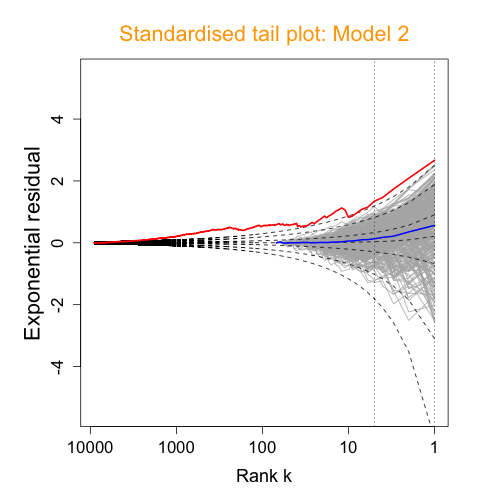}
    \end{minipage}
    \begin{minipage}{0.3\linewidth}
    \includegraphics[width=\linewidth]{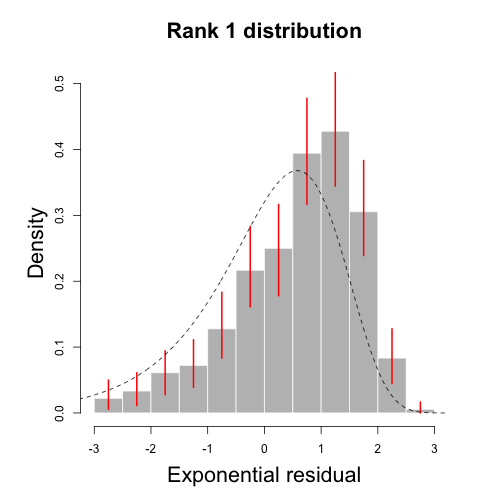}
    \end{minipage}
	\caption{Diagnostics for SPAR Models 1 (top) and 2 (bottom). Left: Histogram of the regional ADR p-values used in the CvM uniformity test. Centre: Standardised tail plot, with grey and red curves denoting regional and global estimates, respectively; Dashed curves denote  0.001, 0.025, 0.25, 0.5, 0.75, 0.975, and 0.999 quantiles from the theoretical log-gamma distribution (as a function of rank $k$). Right: histogram of empirical differences $D_k$ for $k=1$, with the theoretical density function (dashed line). For all histograms, red lines denote 95\% bootstrapped error bars. }
	\label{fig:app1_compare}
\end{figure}

This example illustrates the use of our proposed visual diagnostics in applications of extreme value regression models. Moreover, our regional and global summary statistics provide a simple and efficient approach to comparing fits across a large set of candidate models, where manual assessment is infeasible.

%%%%%%%%%%%%%%%%%%%%%%%%%%%%%%%%
\subsection{Surrogate model for extreme response of a wind turbine} \label{sec:ex_surrogate}
Here we consider the distribution of extreme tension in a mooring line for a floating offshore wind turbine, conditional on environmental condition. For a given environmental condition, described in terms of various wind, wave, and current variables, the responses of the system are stochastic, due to the inherent stochastic variability in the turbulent winds and random irregular waves. In the present example, we consider the distribution of peak mooring line tension $Y$ conditional on a vector of three environmental conditions $\bm{X}=(X_1,X_2,X_3)\in\mathcal{X}$, where $X_1$ is 1-hour mean wind speed, $X_2$ is significant wave height, and $X_3$ is peak wave period, and $\mathcal{X}$ is a bounded subset of $\mathbb{R}^3$. For model training, numerical simulations of the turbine dynamics over a duration of 1 hour were conducted for 851 combinations of $(X_1,X_2,X_3)$ on a regular grid, as described in \citet{mackay2026peak}. For each simulation, peaks in the time series of mooring line tension were defined as local maxima within a moving window of $\pm10$~s.
This yielded between 85 and 223 peaks per simulation, depending on environmental condition, and a total of 117388 peaks for training. For model testing, a further 219 response simulations were conducted at random uniformly distributed values of $(X_1,X_2,X_3)$ within the range used for model training. This yielded a total of 29067 peaks for testing. Unlike Example~1 in Section~\ref{sec:ex_SPAR}, this dataset permits a local (as opposed to regional) binning strategy for model assessment, as we observe multiple realisations of $Y|\bm{X}=\bm{x}$ for a single $\bm{x}$. We thus proceed by assigning each test covariate $\bm{x}_i$ to its own bin $\mathcal{B}_i$, such that we have $B=219$ bins. 

As in Section~\ref{sec:ex_SPAR}, we consider a peaks-over-threshold analysis using a GP regression model fitted to conditional excesses of $Y$ above some high non-stationary threshold $u_\tau(\bm{x}),$ which is the conditional $\tau$-quantile of $Y|\bm{X}=\bm{x}$. We consider two representations of the GP threshold and parameter functions, one using ANNs (see \autoref{app:POT}) and the second using GAMs. As in Section~\ref{sec:ex_SPAR}, we consider a large number of candidate ANN models (2000 in this example), with different architectures and random parameter initialisation, but all with threshold non-exceedance probability $\tau=0.8$. Candidate architectures are generated at random, separately for $u_\tau(\bm{x})$ and $(\sigma(\bm{x}),\xi(\bm{x}))$. We again consider $L=2$ or $L=3$ layers, but here, to reflect the simpler covariate domain, we also consider smaller widths of $\eta$ of $2^2$ and $2^3$. 

For each model, we estimate the p-values for the global test statistics - the EMAD and ADR - and provide their estimates in \autoref{fig:app2_scatter}. Previous applications of diagnostics for extreme value regression models have focused on these global goodness-of-fits, and so we hereafter focus on goodness-of-fit for three models which provide high p-values for the global ADR test and global EMAD test; see \autoref{fig:app2_scatter}. As with the previous example, only a small proportion (approx. 1\%) of the candidate models yield global EMAD and ADR p-values in excess of 0.05, which is to be expected given the random sampling of architectures and initialisations.

\begin{figure}[ht!]
	\centering
    \includegraphics[width=0.5\linewidth]{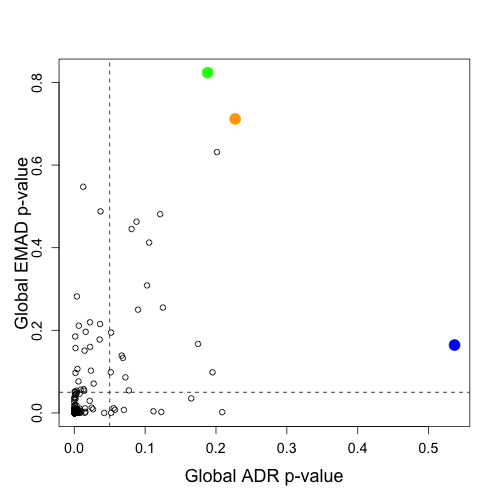}
	\caption{Scatter plot of the global EMAD and global ADR p-values for 2000 estimated deep GP regression models with non-exceedance probability $\tau=0.8$. Dashed vertical and horizontal lines correspond to a p-value of 0.05. The three models that correspond with coloured points (Model 1 - blue, Model 2 - orange, Model 3 - green) are further investigated.}
	\label{fig:app2_scatter}
\end{figure}

\begin{figure}[ht!]
	\centering
        \begin{minipage}{0.45\linewidth}
    \includegraphics[width=\linewidth]{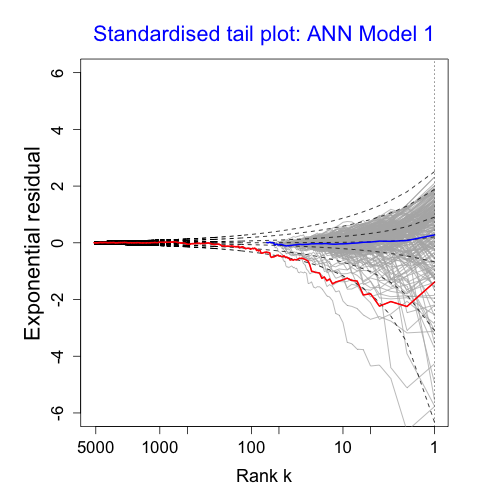}
    \end{minipage}
            \begin{minipage}{0.45\linewidth}
    \includegraphics[width=\linewidth]{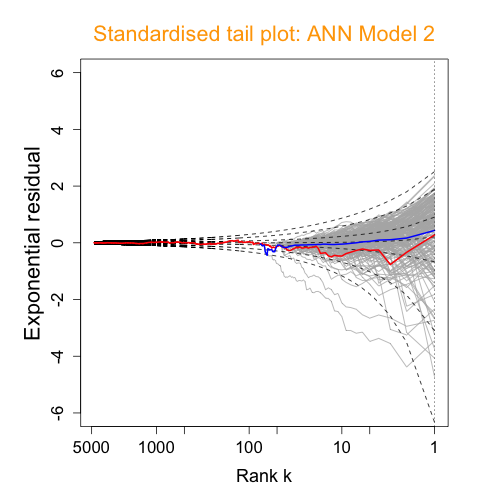}
    \end{minipage}
    \begin{minipage}{0.45\linewidth}
    \includegraphics[width=\linewidth]{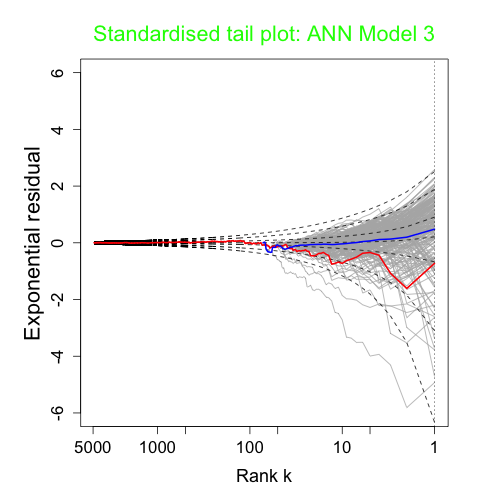}
    \end{minipage}
    \begin{minipage}{0.45\linewidth}
    \includegraphics[width=\linewidth]{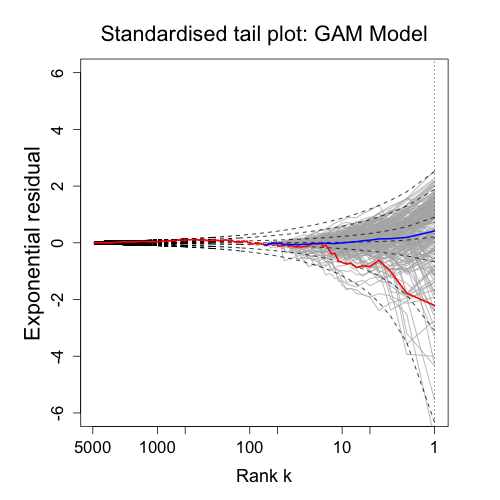}
    \end{minipage}
    
	\caption{Standardised tail plots for deep GP regression (top left, top right, and bottom left panels) and simplified GP-GAM (bottom right) model estimates. The colours of the title text in panels corresponds with the coloured p-values in \autoref{fig:app2_scatter}.  Grey and red curves denote regional and global estimates, respectively. Blue line is running mean over bins. Dashed curves denote 0.001, 0.025, 0.25, 0.5, 0.75, 0.975, and 0.999 quantiles from the theoretical log-gamma distribution (as a function of rank $k$).}
	\label{fig:app2_compare}
\end{figure}

\autoref{fig:app2_compare} shows standardised tail plots for the three models highlighted in \autoref{fig:app2_scatter}. Despite the large estimated p-values for the global ADR and EMAD tests, suggesting good global fits for all three models, we observe outliers in the the standardised tail plots. This suggests that all of the models suffer from poor fits in some regions of the covariate domain, which are not identifiable using global diagnostics. Further investigation revealed that, across the three chosen deep GP regression model fits, the outliers in the local estimates on the standardised tail plot (top of \autoref{fig:app2_compare}) correspond to observations at locations close to the boundary of the covariate domain. This suggests that the deep GP regression models may be experiencing overfitting, and motivates a follow-up analysis which considers a simpler formulation of the POT regression model. 

In a second analysis, we consider a simplified additive GP regression model with: i) the threshold $u_\tau(\cdot)$ and GP parameters $\sigma(\bm{x})$ and $\xi(\bm{x})$ modelled using generalised additive models (GAMs); ii) peak wave period $X_3$ not included in the model; and iii) covariate effects of $X_1$ and $X_2$ modelled using additive smoothing splines with only four knots. For this much simpler model, we estimate the quantile GAM using the qgam \texttt{R} package \citep{fasiolo2021qgam} and the GP-GAM  using the evgam \texttt{R} package \citep{youngman2022evgam}. The lower right plot in \autoref{fig:app2_compare} show the standardised tail plot for this simpler extreme value regression model. We immediately observe better local fits than the alternative deep model. Moreover, the simpler regression model retains a good global fit, as evidenced by the expected behaviour of the global estimate of the standardised tail plot.

This simple example provides a clear illustration of actionable insights that can be derived from our proposed extreme value regression diagnostics. By identifying regions or locations of the covariate space which exhibit poor fit, we can make pragmatic decisions about model tuning that improve the overall fit.

%%%%%%%%%%%%%%%%%%%%%%%%%%%%%%%%
\section{Conclusions} \label{sec:conclusion}
This paper has presented new visual tools and statistical tests for diagnosing goodness-of-fit for extreme value regression models. Using asymptotic theory for normalised exceedance probabilities and centred exponential order statistics, we developed two new visual tools, the standardised tail and normalised residual plots, that facilitate visualisation of goodness-of-fit consistently across multiple regions of the covariate space. We have further proposed the EMAD test statistic as the mean absolute deviation of exponential order statistics, and provided empirical and theoretical justification for its use quantifying goodness-of-fit for extreme value models. Alongside suggestions for summarising global and regional model performance, we have illustrated a practical workflow for assessing model fit in two examples using contemporary extreme value regression models. Our accessible tools can be used for quick and convenient assessment of goodness-of-fit for extreme value regression models, which facilitates model comparison (even with thousands of candidate models) and steers model design. 

The tools we have developed do not make assumptions about the form of the fitted distributional model nor its inference scheme, and are applicable in a wide range of settings. While we focus here on extreme value regression models, where deviance of extreme order statistics is of most interest, our diagnostics can equally be used in applications of traditional statistical regression models. For example, our normalised residual plot highlights poor fit in the bulk of the distribution, which is of general interest to practitioners.

A key assumption of our workflow is that diagnostics are evaluated on a hold-out sample independent of the data used for model fitting. In our examples, this assumption holds by design. However, for applications where strong dependence persists in data, like spatial environmental applications, bespoke algorithms may be required to produce independent hold-out sets \citep[see discussions by, e.g.,][]{roberts2017cross}. Future work may also consider bootstrap methods as an alternative way of producing calibrated confidence envelopes.

Regional goodness-of-fit assessment is contingent on a partition of the covariate domain. In our applications, the choice of partition comes quite naturally. In Section~\ref{sec:ex_SPAR}, since the covariates lie in a hypersphere, we considered regions described by a pseudo-Voronoi partition which assigns a minimum number of samples to bins; in Section~\ref{sec:ex_surrogate}, we considered individual locations of the covariate space, as our data included repeated sampling for a single covariate value. The partitioning should be specific to the application and the covariate domain, and it should be designed to ensure that a sufficient number of samples are available in each bin for reliable production of regional diagnostics. Sensitivity to the regionalisation could be assessed by repeated evaluation of the diagnostics for partitions of different sizes. Choice of the optimal partition, and the definition of optimal in this setting, can be considered in future work.

\section*{Acknowledgements}
The authors thank members of the Glasgow-Edinburgh Extremes Network (GLE$^2$N; \url{glen-scotland.github.io/glen/}) for helpful feedback. This work has made use of the resources provided by the Edinburgh Compute and Data Facility (ECDF) (\url{www.ecdf.ed.ac.uk/})

\section*{Statements and declarations}
EM was funded by the EPSRC Supergen Offshore Renewable Energy Hub, United Kingdom [grant no: EP/Y016297/1]. The authors have no competing interests to declare that are relevant to the content of this article.

\section*{Data availability}
R and MATLAB code for producing the diagnostics described in this paper is available at \url{https://github.com/edmackay/Diagnostics-for-extreme-value-regression}.

%%%%%%%%%%%%%%%%%%%%%%%%%%%%%%%%
\appendix
\section{Proofs} \label{app:proof}
\subsection{Proof of Theorem \ref{thm:normalised_exceedance}}
\label{app:proof_thm:normalised_exceedance}
This is a special case of Theorem 2.2.1 in \citet{leadbetter1983}. However, for completeness, we provide here an explicit derivation of the result. 
\begin{proof}
The density function of $Q_{(k)}$ is
\begin{equation*}
   f_{Q_{(k)}}(q)=\frac{\Gamma(n+1)}{\Gamma(n-k+1)\Gamma(k)}q^{k-1}(1-q)^{n-k}, \quad q\in[0,1], \,k=1,\dots,n,
\end{equation*}
where $\Gamma$ is the gamma function. Define normalised exceedance probabilities $A_{(k)} \coloneq n Q_{(k)}\in[0,n]$. Then the density function of $A_{(k)}$ is 
\begin{align*}
    f_{A_{(k)}}(a) = \frac{1}{n} f_{Q_{(k)}}\left(\frac{a}{n}\right) = g_n(k) \frac{a^{k-1}}{\Gamma(k)} \left(1-\frac{a}{n}\right)^{n-k},
\end{align*}
where $g_n(k)=n^{-k} \, \Gamma(n+1)\, /\, \Gamma(n-k+1)$. First, consider the behaviour of $(1-a/n)^{n-k}$ for fixed $a,k\in \mathbb{R},$ and $n\to \infty$. We have
\begin{align*}
    \left(1-\frac{a}{n}\right)^{n-k} 
    &= \exp \left((n-k) \log \left(1-\frac{a}{n}\right)\right) 
    = \exp \left((n-k) \left(-\frac{a}{n} + 	O\left(\frac{1}{n^2}\right)\right)\right) \\
    &= \exp \left( -a + O\left(\frac{1}{n}\right)\right)
    \to \exp(-a), \quad \text{as} \quad n\to\infty.
\end{align*}
For the limit of $g_n(k)$, we use Stirling's formula, $\Gamma(n+1) =n!= \sqrt{2\pi n} (n/e)^n (1+O(1/n))$, to obtain 
\begin{align*}
    g_n(k) = \frac{1+O(\frac{1}{n})}{1+O(\frac{1}{n-k})} \sqrt{\frac{n}{n-k}} \exp (h_n(k)),
\end{align*}
where $h_n(k) = (n-k)\log(n) - (n-k)\log(n-k) - k$. The first two terms in the expression above have unit limit for fixed $k$ and $n\to\infty$. It remains to show that $h_n(k)\to 0 $ for fixed $k$ and $n\to\infty$. For fixed $k\in\mathbb{R}$ we have
\begin{align*}
    h_n(k) &= (n-k)\log(n) - (n-k)\log\left[n \left(1-\frac{k}{n}\right)\right] - k\\
    &= - (n-k)\left[ -\frac{k}{n} + O\left(\frac{1}{n^2}\right)\right] - k = O\left(\frac{1}{n}\right).
\end{align*}
Hence $\exp(h_n(k))\to 1$ as $n\to\infty$, and therefore $g_n(k)\to 1$ as $n\to\infty$ also. So, for all $a\ge0$ and $k\in \mathbb{N}_{>0}$, we have
\begin{equation} \label{eq:limitdist}
    \lim_{n\to\infty} f_{A_{(k)}}(a) = \frac{a^{k-1}}{\Gamma(k)} \exp (-a).
\end{equation}
Hence, by Scheff{\'e}'s lemma, $A_{(k)}$ converges in distribution to a random variable with density \eqref{eq:limitdist}. This is a gamma density function with shape $k$ and unit scale. 
\end{proof}
%%%%%%%%%%%%%%%%%%%%%%%%%%%%%%%%%%%%%%%%%%%%%%%%%%%%%%%%%%%%%%%%%%
\subsection{Proof of Corollary \ref{cor:Z_difference}}
\label{app:proof_cor:Z_difference}
\begin{proof}
From Theorem \ref{thm:normalised_exceedance} we have $nQ_{(k)} \xrightarrow{\;d\;} A_{k,\infty} \sim \mbox{Gamma}(1, k)$, as $n\to\infty$. By definition, we have $Z_{(k)} = -\log\left(Q_{(k)}\right) = \log(n) -\log\left(n Q_{(k)}\right)$. Therefore, from Equations~\eqref{eq:EZk} and \eqref{eq:Hn_approx}, we have 
\begin{align*}
    D_k &= \mathbb{E}\left[Z_{(k)}\right] - Z_{(k)} = H_n-H_{k-1}  -\log(n) +\log\left(n Q_{(k)}\right) \\
    &= \log\left(n Q_{(k)}\right) - H_{k-1} + \gamma + O(1/n).
\end{align*}
Hence, by the continuous mapping theorem, 
\begin{equation*}
    D_k \xrightarrow{\;d\;} \log\left(A_{k,\infty}\right) - H_{k-1} + \gamma,\qquad n \rightarrow \infty.
\end{equation*}
The required density follows by applying the transformation to the density of $A_{k,\infty}$. 
\end{proof}
%%%%%%%%%%%%%%%%%%%%%%%%%%%%%%%%%%%%%%%%%%%%%%%%%%%%%%%%%%%%%%%%%%
\subsection{Preliminaries for results in Section~\ref{sec:summary}}
The proofs of Propositions \ref{prop:EMAD_limit}--\ref{prop:EMAD_sensitivity} require a number of preliminary results:

\begin{lemma} \label{lem:abs_mom}
Let $X$ and $Y$ be random variables with means $\mu_X$, $\mu_Y$ and variances $\sigma_X^2>0$, $\sigma_Y^2>0$, respectively. Then
\begin{enumerate}[label=(\alph*)]
   \item $\E\left[|X-\mu_X|\right] \le \sigma_X$,
   \item $|\Cov(|X|,|Y|)| \le \sigma_X\sigma_Y$.
\end{enumerate}
\end{lemma}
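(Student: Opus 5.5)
Both parts follow from the Cauchy--Schwarz inequality $\E[|UV|]\le \sqrt{\E[U^2]\,\E[V^2]}$, equivalently $|\Cov(U,V)|\le \sqrt{\Var(U)\Var(V)}$, together with the fact that taking absolute values does not increase second moments. Finite variances are assumed throughout, so every moment used below exists.

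For part (a), apply Cauchy--Schwarz with $U=|X-\mu_X|$ and $V=1$. This gives $\E[|X-\mu_X|]\le \sqrt{\E[(X-\mu_X)^2]}=\sigma_X$. The same conclusion follows from Jensen's inequality for the convex map $t\mapsto t^2$, namely $(\E|X-\mu_X|)^2\le \E[(X-\mu_X)^2]$.

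For part (b), first apply the covariance form of Cauchy--Schwarz to the random variables $|X|$ and $|Y|$:
\begin{equation*}
|\Cov(|X|,|Y|)|\le \sqrt{\Var(|X|)\,\Var(|Y|)}.
\end{equation*}
It then remains to show $\Var(|X|)\le \Var(X)$, and likewise for $Y$. Write $\Var(|X|)=\E[X^2]-(\E|X|)^2$ and $\Var(X)=\E[X^2]-(\E X)^2$. Since $|\E X|\le \E|X|$, we have $(\E X)^2\le (\E|X|)^2$, and so $\Var(|X|)\le\Var(X)=\sigma_X^2$. Substituting this bound and its analogue for $Y$ into the display gives $|\Cov(|X|,|Y|)|\le\sigma_X\sigma_Y$.

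The only step that needs any care is the comparison $\Var(|X|)\le\Var(X)$. It is exactly where the absolute values could cause trouble, and the inequality $|\E X|\le\E|X|$ handles it. Everything else is routine, and the strict positivity of the variances in the hypothesis is not needed for the inequalities themselves.
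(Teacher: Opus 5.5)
Your proof is correct and follows the same route as the paper: Jensen's inequality for (a), and for (b) Cauchy--Schwarz applied to $|X|$ and $|Y|$ followed by $\Var(|X|)\le\Var(X)$. The paper only asserts that last comparison, whereas you justify it in one line from $|\E X|\le \E|X|$.
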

\begin{proof}
The first part follows immediately from Jensen's inequality. For the second part, the Cauchy–Schwarz inequality gives $|\Cov(X,Y)|^2 \le \sigma_X^2 \sigma_Y^2$. The result follows from noting that for any random variable $X$ with finite second moment $\Var(|X|)\le \Var(X)$.
\end{proof}

%%%%%%%%%%%%%%%%%%%%%%%%%%%%%%%%%%%%%%
\begin{lemma}[Absolute moments for normal distributions] \label{lem:norm_abs_mom}
Let $\Phi$ be the standard normal distribution function. For a bivariate normal pair $(X,Y)$ with means $\mu_X,\mu_Y\in\RR$, variances $\sigma_X^2, \sigma_Y^2>0$, and correlation $\rho\in(-1,1)$, we have:
\begin{enumerate}[label=(\alph*)]
    \item $\E[|X|]=\sigma_X \sqrt{\dfrac{2}{\pi}} \exp\left(-\dfrac{\mu_X^2}{2\sigma_X^2}\right) + \mu_X \left(1 - 2 \Phi\left(-\dfrac{\mu_X}{\sigma_X}\right)\right)$, 
    \item $\Cov(|X-\mu_X|,|Y-\mu_Y|) = \tfrac{2}{\pi} \sigma_X \sigma_Y \Bigl( \rho \arcsin \rho +\sqrt{1-\rho^2} - 1\Bigr)$.
\end{enumerate}
\end{lemma}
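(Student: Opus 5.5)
For part (a), I will reduce to the standard normal. Write $X = \mu_X + \sigma_X Z$ with $Z\sim N(0,1)$ and set $m = \mu_X/\sigma_X$. Then
\[
\E[|X|] = \sigma_X\,\E[|Z+m|].
\]
I will split the expectation at $z=-m$:
\[
\E[|Z+m|] = \int_{-m}^\infty (z+m)\phi(z)\dd z - \int_{-\infty}^{-m}(z+m)\phi(z)\dd z .
\]
To evaluate the pieces I will use $\int_a^\infty z\phi(z)\dd z = \phi(a)$ and $\int_{-\infty}^a z\phi(z)\dd z = -\phi(a)$. This turns the expression into
\[
2\phi(m) + m\bigl(1-\Phi(-m)\bigr) - m\,\Phi(-m) = 2\phi(m) + m\bigl(1-2\Phi(-m)\bigr).
\]
Multiplying by $\sigma_X$, and noting $2\sigma_X\phi(m) = \sigma_X\sqrt{2/\pi}\exp(-\mu_X^2/(2\sigma_X^2))$, gives (a). This part is routine.

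For part (b), centring and scaling reduce the claim to standard bivariate normals $(U,V)$ with correlation $\rho$. It then suffices to show
\[
\E[|U||V|] = \tfrac{2}{\pi}\bigl(\rho\arcsin\rho + \sqrt{1-\rho^2}\bigr),
\]
since subtracting $\E|U|\,\E|V| = 2/\pi$ and multiplying by $\sigma_X\sigma_Y$ then gives the result.

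To compute $\E[|UV|]$, I will represent $V = \rho U + \sqrt{1-\rho^2}\,W$ with $W$ independent of $U$, and pass to polar coordinates $(U,W) = (R\cos\theta, R\sin\theta)$. Here $R^2\sim\chi^2_2$, so $\E[R^2]=2$, and $\theta$ is uniform on $[0,2\pi)$ and independent of $R$. Choosing $\alpha$ with $\sin\alpha=\rho$ and $\cos\alpha = \sqrt{1-\rho^2}$, we have $V = R\sin(\theta+\alpha)$, so
\[
\E|UV| = \E[R^2]\cdot \frac{1}{2\pi}\int_0^{2\pi}|\cos\theta\,\sin(\theta+\alpha)|\dd\theta .
\]
I will evaluate this angular integral using $\cos\theta\sin(\theta+\alpha) = \tfrac12\bigl[\sin(2\theta+\alpha)+\sin\alpha\bigr]$. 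The absolute value of $\sin(2\theta+\alpha) + \sin\alpha$ integrates to
\[
\tfrac{1}{2\pi}\cdot\tfrac12\cdot 2\cdot\bigl(4\cos\alpha + 2\alpha\cdot 2\sin\alpha\bigr)/2 \cdot \text{(normalisation)},
\]
which I will check explicitly by locating the sign changes at $2\theta+\alpha = -\alpha$ and $\pi+\alpha$ modulo $2\pi$. The target is $\tfrac{1}{\pi}(\cos\alpha + \alpha\sin\alpha)$, so that multiplying by $\E[R^2]=2$ yields $\tfrac{2}{\pi}\bigl(\sqrt{1-\rho^2}+\rho\arcsin\rho\bigr)$.

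\textbf{Main obstacle.} The delicate step is this angular integral with the absolute value: the bookkeeping of the intervals on which the integrand is positive, and the sign conventions for $\rho<0$. To handle this I will first treat $\rho\ge0$ and extend to $\rho<0$ by the symmetry $V\mapsto -V$, which leaves $|V|$ unchanged and maps $\rho\mapsto-\rho$; the right-hand side is even in $\rho$, so the extension is consistent.

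As a fallback, I will differentiate in $\rho$ using Price's theorem, $\partial_\rho\E[g(U)h(V)] = \E[g'(U)h'(V)]$. With $g=h=|\cdot|$ this gives $\E[\sgn U\,\sgn V] = \tfrac{2}{\pi}\arcsin\rho$. Integrating from the value $2/\pi$ at $\rho=0$ then gives the same formula, since $\tfrac{d}{d\rho}\bigl(\rho\arcsin\rho+\sqrt{1-\rho^2}\bigr) = \arcsin\rho$.
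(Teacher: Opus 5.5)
Your proposal is correct. It differs from the paper in that the paper does not derive the lemma at all: it identifies $|X|$ as folded normal and cites \citet{leone1961folded} for (a) and \citet{nabeya1951} for (b). Your argument makes both parts self-contained.

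Part (a) checks out exactly. For (b), the reduction to $\E|UV|$ for a standard pair is right, as is the representation $V=R\sin(\theta+\alpha)$, and the angular integral does reach your target. However, the display containing the factor \emph{(normalisation)} is not a computation and should be replaced. Substituting $\psi=2\theta+\alpha$ turns the angular average into $\frac{1}{4\pi}\int_0^{2\pi}|\sin\psi+\sin\alpha|\dd\psi$. The integrand is negative exactly on $(\pi+\alpha,2\pi-\alpha)$ (modulo $2\pi$), so the integral equals $2\pi\sin\alpha-2\{-2\cos\alpha+(\pi-2\alpha)\sin\alpha\}=4(\cos\alpha+\alpha\sin\alpha)$. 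This holds for every $\alpha\in(-\pi/2,\pi/2)$, so the symmetry step for $\rho<0$ is not even needed.

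Your Price's-theorem fallback is arguably the cleanest route. It reduces (b) to Sheppard's formula $\E[\sgn U\,\sgn V]=\frac{2}{\pi}\arcsin\rho$ plus the value $2/\pi$ at $\rho=0$. There you should add a remark that the identity holds for the Lipschitz function $|\cdot|$ with a.e.\ derivative $\sgn$ (e.g.\ by mollifying).

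The citation buys the paper brevity. Your derivation buys verifiability, at the cost of about half a page.
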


\begin{proof}
$|X|$ has a folded normal distribution. $\E[|X|]$ is given in \citet{leone1961folded}. The second result can be found in \citet{nabeya1951}.
\end{proof}

%%%%%%%%%%%%%%%%%%%%%%%%%%%%%%%%%%%%%%
\begin{lemma}[Variance of exponential order statistics] \label{lem:exp_var}
Let $Z_1 \ge \cdots \ge Z_n$ be the order statistics of $n$ independent $\mathrm{Exp}(1)$ random variables. Then, for $k=1\ldots,n$,
\begin{equation*}
    \Var(Z_k) = \sum_{i=k}^{n} \frac{1}{i^2} \le \frac{2}{k}.
\end{equation*}
\end{lemma}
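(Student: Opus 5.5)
We use the Rényi representation of exponential order statistics. Let $E_1,\dots,E_n$ be iid $\mathrm{Exp}(1)$. The spacings of the descending order statistics satisfy
\begin{equation*}
Z_n = \frac{E_n}{n}, \qquad Z_{k} - Z_{k+1} = \frac{E_{k}}{k}, \quad k=1,\dots,n-1.
\end{equation*}
Here $Z_n$ is the minimum of $n$ iid unit exponentials, and each successive gap from the bottom is the minimum of the remaining exponentials. In the ranking used in the statement, $Z_k$ is the $k$-th largest value, and $n-k+1$ variables lie at or below it. We will check this indexing carefully, since it is the only place the argument can go wrong. Telescoping then gives
\begin{equation*}
Z_k = \sum_{i=k}^{n} \frac{E_i}{i},
\end{equation*}
a sum of independent terms. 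This matches the mean formula \eqref{eq:EZk}, because $\sum_{i=k}^n 1/i = H_n - H_{k-1}$.

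Independence lets variances add, and $\Var(E_i/i) = 1/i^2$. Hence
\begin{equation*}
\Var(Z_k) = \sum_{i=k}^n \frac{1}{i^2},
\end{equation*}
which is the claimed identity.

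For the bound, we compare the sum with an integral. Since $1/x^2$ is decreasing, $1/i^2 \le \int_{i-1}^{i} x^{-2}\dd x$ for $i\ge 2$. For $k\ge2$ this gives
\begin{equation*}
\sum_{i=k}^n \frac{1}{i^2} \le \int_{k-1}^{\infty} x^{-2}\dd x = \frac{1}{k-1} \le \frac{2}{k}.
\end{equation*}
The last inequality is equivalent to $k \le 2(k-1)$, i.e.\ $k\ge2$. For $k=1$ we use instead $\sum_{i\ge1} i^{-2} = \pi^2/6 < 2$. Alternatively, one can bound uniformly with $1/i^2 \le 1/(i(i-1)) = 1/(i-1) - 1/i$, which telescopes to $1/(k-1)$.

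The main obstacle is purely bookkeeping: confirming that the descending indexing in the lemma matches the Rényi representation, so that the sum runs from $i=k$ to $n$. The rest is routine.
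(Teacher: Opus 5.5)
Your proof is correct and follows the paper's route: the Rényi representation $Z_k \overset{d}{=} \sum_{i=k}^{n} E_i/i$ gives the variance by independence, and an integral comparison gives the bound. The only difference is cosmetic. The paper uses $\sum_{i\ge k} i^{-2} \le 1/k^2 + \int_k^\infty x^{-2}\,\mathrm{d}x = 1/k^2 + 1/k \le 2/k$, which covers every $k \ge 1$ at once and so needs no separate $k=1$ case; your telescoping alternative $1/i^2 \le 1/(i(i-1))$ is only valid for $i \ge 2$, so it also requires $k \ge 2$.
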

\begin{proof}
Exponential order statistics can be expressed as a weighted sum of iid $\mathrm{Exp}(1)$ random variables $E_i$, $i=1,\dots,d$, as $Z_k \overset{d}{=} \sum_{i=k}^n (E_i/i)$ \citep{renyi1953}. The variance follows immediately from the independence of the summands. For the inequality, we apply the integral test to obtain:
\begin{equation*}
    \Var(Z_k) = \sum_{i=k}^{n}\frac{1}{i^2} <\sum_{i=k}^{\infty}\frac{1}{i^2} \le \frac{1}{k^2} + \int_{k}^{\infty}\frac{1}{x^2}\,\mathrm{d}x = \frac{1}{k^2} + \frac{1}{k} \le \frac{2}{k}.
\end{equation*}
\end{proof}

%%%%%%%%%%%%%%%%%%%%%%%%%%%%%%%%
The following result is due to \citet{bahadur1966} and \citet{kiefer1967} -- see, e.g., \citet[][\S10.2]{David2003}.

\begin{theorem}[Bahadur-Kiefer representation of central order statistics]\label{thm:Bahadur}
Consider iid $X_1, \ldots, X_n$ with distribution $F$ and density $f$, and let $X_{(k)}$ be the $k$-th (descending) order statistic. Let $0<p<1$, and assume $k=\lfloor np\rfloor$ and $0 < f(\xi_{p}) < \infty$, where $\xi_{p}$ is the population $(1-p)$-quantile of $F$ at exceedance probability $p$. Then 
\begin{equation*}
X_{(k)} - \xi_p = -\frac{\mathbb{F}_n(\xi_p)- p}{f(\xi_p)} + R_n,	
\end{equation*}
where $\mathbb{F}_n$ is the empirical distribution function and $R_n=O_p(n^{-3/4} (\log n)^{1/2}(\log \log n)^{1/2})$ as $n\to\infty$.
\end{theorem}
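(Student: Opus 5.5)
The plan is to follow Bahadur's original argument: reduce the representation to a statement about the increment of the empirical process over a shrinking interval around $\xi_p$, and then control that increment with a Bernstein-type exponential bound plus a grid and union-bound argument.

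Throughout, I read $\mathbb{F}_n$ as the empirical \emph{exceedance} function $\mathbb{F}_n(x)=n^{-1}\sum_i\mathbf{1}(X_i>x)$, since $X_{(k)}$ is ranked in descending order and $p$ is an exceedance probability. Write $\bar F=1-F$, so that $\bar F(\xi_p)=p$. I also strengthen the hypotheses slightly, as Bahadur does: $f$ is continuously differentiable with bounded derivative in a neighbourhood of $\xi_p$. The theorem as stated assumes only $0<f(\xi_p)<\infty$, and that alone does not seem enough for this rate.

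\textbf{Step 1: localisation.} I show that $|X_{(k)}-\xi_p|\le a_n\coloneq c\,n^{-1/2}(\log n)^{1/2}$ with probability tending to one.
\begin{itemize}
\item The event $\{X_{(k)}>\xi_p+a_n\}$ means at least $k$ observations exceed $\xi_p+a_n$.
\item That count is Binomial$(n,\bar F(\xi_p+a_n))$, with mean about $n(p-f(\xi_p)a_n)$, which falls short of $k\approx np$ by order $\sqrt{n\log n}$.
\item Hoeffding's inequality then bounds the probability by $n^{-c'}$; the lower deviation is handled the same way.
\end{itemize}

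\textbf{Step 2: algebraic identity.} By definition of the order statistic, $\mathbb{F}_n(X_{(k)})=(k-1)/n$ (with no ties almost surely). Since $k=\lfloor np\rfloor$, this equals $p+O(1/n)$. I then decompose
\begin{equation*}
\mathbb{F}_n(X_{(k)})-\mathbb{F}_n(\xi_p) = \bigl[\bar F(X_{(k)})-\bar F(\xi_p)\bigr] + \Delta_n, \qquad \Delta_n\coloneq \bigl[\mathbb{F}_n-\bar F\bigr](X_{(k)})-\bigl[\mathbb{F}_n-\bar F\bigr](\xi_p).
\end{equation*}
A Taylor expansion gives $\bar F(X_{(k)})-\bar F(\xi_p)=-f(\xi_p)(X_{(k)}-\xi_p)+O(a_n^2)$ on the event from Step 1, and $a_n^2=O(n^{-1}\log n)$ is negligible. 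Rearranging yields
\begin{equation*}
X_{(k)}-\xi_p=-\frac{\mathbb{F}_n(\xi_p)-p}{f(\xi_p)}+\frac{\Delta_n+O(n^{-1}\log n)}{f(\xi_p)}.
\end{equation*}
It therefore suffices to show that $\Delta_n=O_p\bigl(n^{-3/4}(\log n)^{1/2}(\log\log n)^{1/2}\bigr)$.

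\textbf{Step 3: oscillation bound (main obstacle).} I need to bound the modulus of continuity
\begin{equation*}
\sup_{|x-\xi_p|\le a_n}\bigl|(\mathbb{F}_n-\bar F)(x)-(\mathbb{F}_n-\bar F)(\xi_p)\bigr|.
\end{equation*}
\begin{itemize}
\item Cover $[\xi_p-a_n,\xi_p+a_n]$ by a grid of about $n^{1/4}$ points.
\item At each grid point $x_j$, the increment is an average of centred Bernoulli variables whose variance is of order $|\bar F(x_j)-\bar F(\xi_p)|\lesssim a_n$. Bernstein's inequality gives a tail of order $\exp\bigl(-n t^2/(C a_n)\bigr)$.
\item Taking $t\asymp (a_n\log n/n)^{1/2}\asymp n^{-3/4}(\log n)^{3/4}$ makes the union bound over the grid summable.
\item Between grid points, monotonicity of $\mathbb{F}_n$ and $\bar F$ bounds the fluctuation by the grid increment, which is $O(a_n n^{-1/4})=o(n^{-3/4}\log n)$.
\end{itemize}
This already gives $\Delta_n=O_p(n^{-3/4}(\log n)^{3/4})$, which is Bahadur's rate. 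The sharper Kiefer rate stated in the theorem, with $(\log n)^{1/2}(\log\log n)^{1/2}$, comes from a different localisation. Instead of Step 1, use the law of the iterated logarithm for the binomial to get $|X_{(k)}-\xi_p|=O(n^{-1/2}(\log\log n)^{1/2})$. This shrinks the variance to $a_n\asymp n^{-1/2}(\log\log n)^{1/2}$, and the same Bernstein-plus-grid argument then yields the stated remainder. The delicate part is tuning the grid size against the exponential bound so that only a single $\log n$ factor appears, and I expect this to be where most of the work lies.
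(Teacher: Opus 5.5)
The concrete problem is a sign error in Step~2. Your route (localise, Taylor-expand $\bar F$, bound the local oscillation of the empirical process by Bernstein plus a grid) is Bahadur's original argument. That is all the paper relies on: it gives no proof and simply cites Bahadur and Kiefer via David and Nagaraja, \S10.2.

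With your convention $\mathbb{F}_n(x)=n^{-1}\sum_i \mathbf{1}(X_i>x)$, your identity reads $(k-1)/n-\mathbb{F}_n(\xi_p)=-f(\xi_p)(X_{(k)}-\xi_p)+\Delta_n+O(a_n^2)$. Solving for $X_{(k)}-\xi_p$ gives
\begin{equation*}
X_{(k)}-\xi_p=\frac{\mathbb{F}_n(\xi_p)-p}{f(\xi_p)}+\frac{\Delta_n+O(n^{-1}\log n)}{f(\xi_p)},
\end{equation*}
with a plus sign on the leading term, not a minus. The plus sign is correct: if more than $np$ observations exceed $\xi_p$, the $k$-th largest lies above $\xi_p$. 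So under your reading the displayed statement is false. It differs from the true representation by $2(\mathbb{F}_n(\xi_p)-p)/f(\xi_p)$, which is of exact order $n^{-1/2}$, and your ``rearranging'' silently flips a sign to land on the target.

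The printed statement mixes the descending/exceedance set-up with the sign and centring of the usual ascending form. The minus sign is right only if $\mathbb{F}_n$ is the ordinary empirical distribution function centred at $1-p$, i.e.\ $X_{(k)}-\xi_p=-\{\mathbb{F}_n(\xi_p)-(1-p)\}/f(\xi_p)+R_n$. Equivalently, use the standard ascending form with $F(\xi_p)=p$, which is how the paper actually applies it to uniforms in the proof of Proposition~\ref{prop:AD_equiv}. Read literally, with the empirical distribution function centred at $p$, the display fails for every $p\neq 1/2$. You need to fix the convention explicitly and state the sign your algebra delivers. The paper's downstream uses involve only $|\cdot|$ or squares of a Brownian-bridge limit, so they are unaffected.

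Two further remarks. First, you are right that $0<f(\xi_p)<\infty$ is not enough. With $F(\xi_p+x)-F(\xi_p)=f(\xi_p)x+\sgn(x)|x|^{6/5}$ near $\xi_p$, the Taylor remainder alone is of exact order $n^{-3/5}$, so a hypothesis such as $F$ twice differentiable at $\xi_p$ must be added.

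Second, since the claim is only $O_p$, neither the LIL nor any delicate tuning is needed. Tightness of $\sqrt{n}(X_{(k)}-\xi_p)$ lets you localise at $a_n=Mn^{-1/2}$. Your Bernstein bound with $t\asymp(a_n\log n/n)^{1/2}$ over a grid of $n^{1/4}$ points, plus the monotonicity step, then gives $\Delta_n=O_p(n^{-3/4}(\log n)^{1/2})$, already inside the stated rate. Your $(\log n)^{1/2}$ localisation is the almost-sure, Borel--Cantelli version, and that is what costs the extra $(\log n)^{1/4}$. The LIL variant does work, giving $(\log n)^{1/2}(\log\log n)^{1/4}$, but that is not Kiefer's sharp almost-sure order $(\log\log n)^{3/4}$.
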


%%%%%%%%%%%%%%%%%%%%%%%%%%%%%%%%%%%%%%%%%%%%%%%%%%%%%%%%%%%%%%%%%%
\subsection{Proof of Proposition \ref{prop:EMAD_limit}}
\label{app:proof_prop:EMAD_limit}
\begin{proof}
Under the null hypothesis, the EMAD test statistic is 
\begin{equation*}
    S_n = \frac{1}{\sqrt{n}} \sum_{k=1}^{n} \left\lvert Z_{(k)} - z_k\right\rvert,
\end{equation*}
where $Z_{(1)} \ge \cdots \ge Z_{(n)}$ are the order statistics of $n$ independent $\mbox{Exp}(1)$ random variables, and $z_k=\E[Z_{(k)}]$. To establish the asymptotic distribution, we split $S_n$ into terms involving the lower and upper tails, and the central terms. Let $0<\varepsilon\ll 1$, and write 
$S_n =S_n^{low}(\varepsilon) + S_n^{cent}(\varepsilon) + S_n^{high}(\varepsilon)$, where
\begin{align*}
	S_n^{low}(\varepsilon) &= \frac{1}{\sqrt{n}} \sum_{k=1}^{\lfloor \varepsilon n \rfloor} \left\lvert Z_{(k)} - z_k \right\rvert,\\
	S_n^{cent}(\varepsilon) &= \frac{1}{\sqrt{n}} \sum_{k=\lfloor \varepsilon n \rfloor + 1}^{\lfloor (1-\varepsilon) n \rfloor} \left\lvert Z_{(k)} - z_k \right\rvert,\\
	S_n^{high}(\varepsilon) &= \frac{1}{\sqrt{n}} \sum_{k=\lfloor (1-\varepsilon) n \rfloor + 1}^{n} \left\lvert Z_{(k)} - z_k \right\rvert.
\end{align*}
We will show that $\E[S_n^{low}(\varepsilon)]$ and $\E[S_n^{high}(\varepsilon)]$ are $O(\sqrt{\varepsilon})$ as $n\to\infty$. Hence, by Markov's inequality, $S_n^{low}(\varepsilon)$ and $S_n^{high}(\varepsilon)$ converge in probability to zero as $\varepsilon \to 0$.
Using Lemmas~\ref{lem:abs_mom}(a) and \ref{lem:exp_var}, we have
\begin{align*}
    \E[S_n^{low}(\varepsilon)] &= \frac{1}{\sqrt{n}} \sum_{k=1}^{\lfloor \varepsilon n \rfloor} \E\left[\left\lvert Z_{(k)} - z_k \right\rvert\right] \le \frac{1}{\sqrt{n}} \sum_{k=1}^{\lfloor \varepsilon n \rfloor} \sqrt{\Var(Z_{(k)})} \le \frac{1}{\sqrt{n}} \sum_{k=1}^{\lfloor \varepsilon n \rfloor} \sqrt{\frac{2}{k}}.
\end{align*}
Note that $\sum_{k=1}^{n} k^{-1/2} \sim 2\sqrt{n}$ as $n\to\infty$, so $\E[S_n^{low}(\varepsilon)]$ is $O(\sqrt{\varepsilon})$ as $n\to\infty$. A similar argument shows the same bound applies to $S_n^{high}(\varepsilon)$ also, since the sum is over the same number of terms and each summand in $S_n^{high}(\varepsilon)$ is strictly less than that in $S_n^{low}(\varepsilon)$. 

For the central terms, we apply the Bahadur-Kiefer representation of central quantiles (Theorem \ref{thm:Bahadur}) with $k=\lfloor pn \rfloor$, for $\varepsilon<p<1-\varepsilon$, to obtain
\begin{equation} \label{eq:BK_exp}
	Z_{(k)} - \xi_p = - \frac{\mathbb{F}_n(\xi_p) - p}{p} + R_n,
\end{equation}
where $R_n=O(n^{-3/4} (\log n)^{1/2}(\log \log n)^{1/2})$ as $n\to\infty$. We have $z_k = H_n - H_{k-1} = \log(n) - \log(k-1) + O(1/n) = -\log(p) + O(1/n) = \xi_p + O(1/n)$. Therefore, we can replace $\xi_p$ with $z_k$ on the LHS of \eqref{eq:BK_exp}, with $R_n$ having the same order. From Donsker's theorem, we have $\sqrt{n}(\mathbb{F}_n(\xi_p) - p) \xrightarrow{\hspace{2mm}d\hspace{2mm}} B(p)$ uniformly on $(0,1)$, where $B$ is a standard Brownian bridge on $[0,1]$ \citep{shorack1986}. Therefore, expressing $S_n^{cent}(\varepsilon)$ as a Riemann sum, we have
\begin{equation*}
	S_n^{cent}(\varepsilon) = \frac{1}{n} \sum_{k=\lfloor \varepsilon n \rfloor + 1}^{\lfloor (1-\varepsilon) n \rfloor} \sqrt{n} \left\lvert Z_{(k)} - z_k \right\rvert \xrightarrow{\hspace{2mm}d\hspace{2mm}} \int_{\varepsilon}^{1-\varepsilon} \frac{\lvert B(p)\rvert}{p} \dd p, \quad n\to\infty.
\end{equation*}
Letting $\varepsilon\to 0$, we have $S_n^{cent}(\varepsilon) \xrightarrow{\hspace{2mm}d\hspace{2mm}} \int_{0}^{1} (\lvert B(p)\rvert/p) \dd p = S$, and $S_n^{low}(\varepsilon), S_n^{high}(\varepsilon) \xrightarrow{p} 0$. Hence, by Slutsky's theorem $S_n \xrightarrow{\hspace{2mm}d\hspace{2mm}} S$.

To calculate the expected value and variance, we note that $\{B(p): p\in(0,1)\}$ is a zero-mean Gaussian process with covariance $\Cov(B(p),B(q)) = \min(p,q) - pq$. Therefore, by Lemma \ref{lem:norm_abs_mom}(a), we have $\E[\lvert B(p)\rvert] =  \sqrt{2\Var{(B(p))}/\pi}=\sqrt{2p(1-p)/\pi}$. Hence
\begin{equation*}
	\E[S] = \int_0^1 \frac{\E[|B(p)|]}{p} \dd p = \sqrt{\frac{2}{\pi}} \int_0^1 \sqrt{\frac{1-p}{p}} \dd p = \sqrt{\frac{\pi}{2}}.
\end{equation*}
For the variance, applying Lemma \ref{lem:norm_abs_mom}(b) gives 
\begin{align*}
	\Var(S) = \int_0^1 \int_0^1 \frac{\Cov(|B(p)|,|B(q)|)}{pq} \dd p \dd q =	\frac{2}{\pi}\int_0^1\!I(q)\dd q,
\end{align*}
where
\begin{align*}
  I(q) = \int_0^1
  \sqrt{\frac{(1-p)(1-q)}{pq}}\Bigl(\rho(p,q)\arcsin\rho(p,q)+\sqrt{1-\{\rho(p,q)\}^2}-1\Bigr)
  \dd p,
\end{align*}
and where the correlation function $\rho(p,q)$ of the Brownian bridge can be written as 
\begin{equation*}
    \rho(p,q) = \sqrt{\frac{\min(p,q) (1-\max(p,q))}{\max(p,q) (1-\min(p,q))}}.
\end{equation*}
Evaluating this double integral gives the required solution, $\Var(S) = 4 \log(2)-\tfrac{\pi}{2} - 1$. As the calculations are standard but lengthy, we present this information in the Supplementary Material, SM5.
\end{proof}

%%%%%%%%%%%%%%%%%%%%%%%%%%%%%%%%%%%%%%%%%%%%%%%%%%%%%%%%%%%%%%%%%%
\subsection{Proof of Proposition \ref{prop:AD_equiv}}
\label{app:proof_prop:AD_equiv}
\begin{proof}

For simplicity of notation, we hereafter drop the `bin' index $b$ from notation, and replace $\nu_k^{(b)}$ and $n_b$ with $\nu_k$ and $n$, respectively. Under the null hypothesis:
\begin{equation*}
    \nu_k = \Phi^{-1}\left(F_k \left(U_{(k)}\right)\right), \quad k=1,\dots,n,
\end{equation*}
where $U_{(1)} \le \cdots \le U_{(n)}$ are order statistics of iid $\mbox{Uniform}(0,1)$ random variables, $F_k$ is the distribution function of the $\mbox{Beta}(k,n-k+1)$ distribution, and $\Phi^{-1}$ is the quantile function of the standard normal distribution. To calculate the limit distribution of $\mathcal{A}_n^2$, we follow a similar approach to the proof of Proposition~\ref{prop:EMAD_limit}, and set $0<\varepsilon\ll 1$ and split the sum into central and boundary terms. That is, we write $\mathcal{A}_n^2 = \mathcal{A}_n^{low}(\varepsilon) + \mathcal{A}_n^{cent}(\varepsilon) + \mathcal{A}_n^{high}(\varepsilon)$, where
\begin{align*}
    \mathcal{A}_n^{low} (\varepsilon) = \frac{1}{n} \sum_{k=1}^{\lfloor \varepsilon n \rfloor} \nu_k^2, 
    \quad \mathcal{A}_n^{cent} (\varepsilon) = \frac{1}{n} \sum_{k=\lfloor \varepsilon n \rfloor+1}^{\lfloor (1-\varepsilon) n \rfloor} \nu_k^2,
    \quad \mathcal{A}_n^{high} (\varepsilon) = \frac{1}{n} \sum_{k=\lfloor (1-\varepsilon) n \rfloor+1}^{n} \nu_k^2.
\end{align*}
Note that, under the null hypothesis, $\nu_k \sim N(0,1)$ for all $n\in\mathbb{N}$ and $k=1,\ldots,n$. Therefore, $\E[\nu_k^2]=1$ for all $k,n$. Hence, we have
\begin{align*}
    \E[\mathcal{A}_n^{low} (\varepsilon)] = \frac{1}{n} \sum_{k=1}^{\lfloor \varepsilon n \rfloor} \E[\nu_k^2] = \frac{\lfloor \varepsilon n \rfloor}{n} \le \varepsilon.
\end{align*}
So, Markov's inequality shows that $\mathcal{A}_n^{low} (\varepsilon) \xrightarrow{p} 0$ as $\varepsilon\to0$. A similar argument shows that $\mathcal{A}_n^{high} (\varepsilon) \xrightarrow{p} 0$ as $\varepsilon\to0$

For the central terms, we set $k=\lfloor tn \rfloor$ for $\varepsilon<t<1-\epsilon$. Applying the Bahadur–Kiefer representation (Theorem \ref{thm:Bahadur}) for uniform data, we have:
\begin{equation*}
U_{(k)} - t = -(\mathbb{F}_n(t)- t) + R_n,	
\end{equation*}
where $R_n=O_p(n^{-3/4} (\log n)^{1/2}(\log \log n)^{1/2})$ as $n\to\infty$. From Donsker's theorem, we have $\sqrt{n}(\mathbb{F}_n(t) - t) \xrightarrow{\hspace{2mm}d\hspace{2mm}} B(t)$ uniformly on $(0,1)$, where $B$ is a standard Brownian bridge on $[0,1]$ \citep{shorack1986}. Thus, 
\begin{equation*}
    U_{(k)} = t + \frac{B(t)}{\sqrt{n}} + o_p(n^{-1/2}).
\end{equation*}
Next, we derive a limiting expression for $F_k(t+u/\sqrt{n})$ for fixed $u\in\RR$ and $n\to\infty$. Note that the event $\{U_{(k)} \le x\}$ is equivalent to the event that at least $k$ out of the $n$ $\mbox{Uniform}(0,1)$ variables are less than or equal to $x$. Therefore, the CDF of $U_{(k)}$ evaluated at $x$ is equal to the probability that a Binomial random variable with $n$ trials and success probability $x$ yields $k$ or more successes. Let $p_n = t + u/\sqrt{n}$. We can rewrite the distribution function $F_k(p_n)$ as:
\begin{equation*}
    F_k(p_n) = \Pr(U_{(k)} \le p_n) = \Pr(Y_n \ge k)
\end{equation*}
where $Y_n \sim \mbox{Bin}(n, p_n)$. Because $Y_n$ is a sequence of $\mbox{Bin}(n, p_n)$ random variables we have
\begin{equation*}
  \frac{Y_n - \mathbb{E}[Y_n]}{\sqrt{\text{Var}(Y_n)}} \xrightarrow{\hspace{2mm}d\hspace{2mm}} Z \sim \mathcal{N}(0,1), \quad n\to\infty.
\end{equation*}
The mean and variance of $Y_n$ are
\begin{align*}
  \mathbb{E}[Y_n] &= np_n = n\left(t + \frac{u}{\sqrt{n}}\right) = nt + u\sqrt{n},\\
  \text{Var}(Y_n) &= np_n(1-p_n) = n\left(t + \frac{u}{\sqrt{n}}\right)\left(1 - t - \frac{u}{\sqrt{n}}\right) = nt(1-t) + O(\sqrt{n}).
\end{align*}
Therefore, the inequality for $Y_n$ is equivalent to
\begin{equation*}
    Y_n \ge k \iff \frac{Y_n - \mathbb{E}[Y_n]}{\sqrt{\text{Var}(Y_n)}} \ge \frac{\lfloor tn \rfloor- (nt + u\sqrt{n})}{\sqrt{nt(1-t) + O(\sqrt{n})}}.
\end{equation*}
The limit of the right-hand deterministic sequence as $n \to \infty$ is $-u/\sqrt{t(1-t)}$. Finally, taking the limit of the probability yields
\begin{equation*}
    \lim_{n \to \infty} \Pr(Y_n \ge k) = \Pr\left(Z \ge \frac{-u}{\sqrt{t(1-t)}}\right) = \Phi\left(\frac{u}{\sqrt{t(1-t)}}\right),
\end{equation*}
where the final equality follows from the symmetry of the standard normal distribution. Combining these yields 
\begin{equation*}
    \lim_{n \to \infty} \Phi^{-1}\left( F_k\left(U_{(k)}\right)\right) = \frac{B(t)}{\sqrt{t(1-t)}}.
\end{equation*}
Noting that $\mathcal{A}_n^{cent} (\varepsilon)$ is a Riemann sum, we have
\begin{equation*}
    \mathcal{A}_n^{cent} (\varepsilon) \xrightarrow{\hspace{2mm}d\hspace{2mm}} \int_\varepsilon^{1-\varepsilon} \frac{(B(t))^2}{t(1-t)} \dd t, \quad n \to \infty.
\end{equation*}
Finally, letting $\varepsilon\to 0$ gives the required result.
\end{proof}

%%%%%%%%%%%%%%%%%%%%%%%%%%%%%%%%%%%%%%%%%%%%%%%%%%%%%%%%%%%%%%%%%%
\subsection{Proof of Proposition \ref{prop:EMAD_sensitivity}}
\label{app:proof_prop:EMAD_sensitivity}

\begin{proof}
Let $k=\lfloor u n\rfloor$ for $u\in(0,1)$. Using the Bahadur-Kiefer representation in \eqref{eq:BK_exp} (and text below), we have
\begin{equation*}
	\hat{z}_k - z_k = - \frac{\mathbb{F}_n(\xi_u) - u}{u} + R_n,
\end{equation*}
where $R_n=O(n^{-3/4} (\log n)^{1/2}(\log \log n)^{1/2})$ as $n\to\infty$. Under the local alternative distribution, the empirical process $\sqrt{n}(\mathbb{F}_n(\xi_u) - u)$ has a deterministic drift, and converges in distribution to $B(u) + \lambda h(u)$ \citep[see \S4.2 of][]{shorack1986}. Hence,
\begin{equation*}
    \sqrt{n}(\hat{z}_k - z_k) \xrightarrow{\hspace{2mm}d\hspace{2mm}} -\frac{ B(u) + \lambda h(u)}{u}, \quad n\to\infty.
\end{equation*}
So we have
\begin{equation*}
    S_{n,\theta} \xrightarrow{\hspace{2mm}d\hspace{2mm}} \int_0^1 \frac{\lvert B(u) + \lambda h(u)\rvert}{u} \dd u, \quad n\to\infty.
\end{equation*}
The Brownian bridge is a zero-mean Gaussian process, with $\sigma^2 \coloneq \Var(B(u)) = u(1-u)$, so $\lvert B(u) + \lambda h(u)\rvert$ has a folded normal distribution, so by Lemma \ref{lem:norm_abs_mom}(a) 
\begin{equation*}
    \E\left[\lvert B(u) + \lambda h(u)\rvert\right] = \sigma \sqrt{\frac{2}{\pi}} \exp\left(-\frac{\lambda^2 h^2(u)}{2\sigma^2}\right) + \lambda h(u) \left(1 - 2 \Phi\left(-\frac{\lambda h(u)}{\sigma}\right)\right).
\end{equation*}
Expanding the exponential function and $\Phi$ as Taylor series about $0$ gives,
\begin{equation*}
    \E\left[\lvert B(u) + \lambda h(u)\rvert\right] \sim \sigma \sqrt{\frac{2}{\pi}} + \frac{\lambda^2 h^2(u)}{\sigma\sqrt{2\pi}}, \quad \lambda\to 0.
\end{equation*}
Taking expectations of the EMAD statistics under the null and local alternative distributions then gives
\begin{align*}
    \E[{S_{n,\theta}}] - \E[{S_{n,0}}] &\to \int_{0}^1 \frac{\E\left[\lvert B(u) + \lambda h(u)\rvert\right]}{u} \dd u - \int_{0}^1 \frac{\E\left[\lvert B(u)\right]}{u} \dd u\\
    &=\lambda^2 \int_{0}^1 \frac{h^2(u)}{u \sqrt{2\pi u (1-u)}} \dd u.
\end{align*}
\end{proof}

%%%%%%%%%%%%%%%%%%%%%%%%%%%%%%%%
\section{Inference for deep Generalised Pareto regression models} \label{app:POT}
Following \citet{Richards2024}, we construct deep Generalised Pareto regression models using multilayered perceptrons (MLPs). We construct  MLPs as parametric functions $\mathbf{g}_{\boldsymbol{\psi}}:\mathbb{R}^d \mapsto \mathbb{R}^p$ that map an input vector of covariates $\bm{x}\in\mathbb{R}^d$ to a $p$-dimensional output through estimable parameters contained in $\boldsymbol{\psi}$. For the deep SPAR model in Section~\ref{sec:ex_SPAR}, we consider two MLPs with $p=1$: one for the exceedance threshold $u_\tau(\cdot)$ (i.e., where $u_\tau(\bm{w}):=\mathbf{g}_{\boldsymbol{\psi}}(\bm{w})$) and one for the exponential scale parameter  (i.e., where $ \sigma(\bm{w}):=\mathbf{g}_{\boldsymbol{\psi}}(\bm{w})$). For the surrogate modelling in Section~\ref{sec:ex_surrogate}, we further consider an MLP with $p=2$; here $(\sigma(\bm{x}),\xi(\bm{x})):=\mathbf{g}_{\boldsymbol{\psi}}(\bm{x}).$

In this work, we construct the neural networks $\mathbf{g}_{\boldsymbol{\psi}}$ as a composition of $L\in\mathbb{N}$ equal-width {hidden layers}, $\mathbf{g}^{(l)}$ for $l=1,\dots,L,$ and an {output} layer, $\mathbf{g}^{(L+1)},$ such that $\mathbf{g}_{\boldsymbol{\psi}}(\cdot) := \mathbf{g}^{(L+1)}\circ\dots \circ \mathbf{g}^{(1)}(\cdot)$. Each hidden layer, $l=1,\dots,L$, has fixed width $\eta>0$ and ReLU hidden activation function, such that  
\begin{equation}
\label{Eq:MLP}
\bm{x}^{(l)}:=\mathbf{g}^{(l)}(\bm{x}^{(l-1)})=\rm{ReLU}\left( \bm{W}^{(l)}\bm{x}^{(l-1)}+\bm{b}^{(l)}\right)\in \mathbb{R}^\eta,
\end{equation}
where $\rm{ReLU}(\bm{x})=(\max(x_1,0),\max(x_2,0),\dots)$, $\bm{x}^{(0)}:=\bm{x}$ is the input covariates with dimension $d$ and $\bm{x}^{(L+1)}$ is the output with dimension $p$. The parameter set ${\boldsymbol{\psi}}$ comprises estimable weights and biases, $ \bm{W}^{(l)}$ and $ \bm{b}^{(l)},l=1,\dots,L$, with their dimensions determined by the relative input and output of layer $l$. The output layer $L+1$ takes a similar form to \eqref{Eq:MLP}, but with the ReLU function replaced with an appropriate link function to constrain $\bm{x}^{(L+1)}$: when $\bm{x}^{(L+1)}$ corresponds to $u_\tau$ or $\sigma$, we use the exponential function to ensure strict positivity; where the shape parameter $\xi$ is required, a compressed hyperbolic tangent function is used to constrain the output of the MLP to $(-0.5,0.1)$.

Estimation of ${\boldsymbol{\psi}}$ proceeds via minimisation of an empirical loss function using the ADAM algorithm \citep{kingma2014adam}, a variant of stochastic gradient descent. The loss function differs with the target parameter: for $u_\tau$ we use the quantile loss and, otherwise, we use the negative log-likelihood implied by the considered model. To mitigate overfitting during training, we partition data into training (80\%) and validation (20\%) sets with the latter used to check for parameter convergence. In particular, at each iteration of ADAM, we evaluate the loss on the validation data and perform early-stopping \citep{prechelt2002early} of the training scheme if the validation loss has not decreased in five iterations. Note that, for model training, we do not use the test data required for producing the goodness-of-fit diagnostics.

%%%%%%%%%%%%%%%%%%%%%%%%%%%%%%%%%%%%%%%%%%%%%%%%%%%%%%%%%%%%%%%%%%

\baselineskip=14pt
\begingroup

\begin{refcontext}[sorting=nyt]
    \printbibliography
\end{refcontext}

\endgroup
\clearpage

%%%%%%%%%%%%%%%%%%%%%%%%%%%%%%%%%%%%%%%%%%%%%%%%%%%%%%%%%%%
%%%%%%%%%%%%%%%%%  Supplementary %%%%%%%%%%%%%%%%%%%%%
%%%%%%%%%%%%%%%%%%%%%%%%%%%%%%%%%%%%%%%%%%%%%%%%%%%%%%%%%%%
% Redefine numbering format
\renewcommand{\thesection}{SM\arabic{section}}
\renewcommand{\thefigure}{SM\arabic{figure}}
\renewcommand{\thetable}{SM\arabic{table}}
\renewcommand{\theequation}{SM\arabic{equation}}
\setcounter{section}{0}
\setcounter{page}{1}
\setcounter{footnote}{0}
\setcounter{figure}{0}
\setcounter{equation}{0}

%%%%%%%%%%%%%%%%%%%%%%%%%%%%%%%%%%%%%%%%%%%%%%%%%%%%%%%%%%%
\bookmarksetup{startatroot} 
\pdfbookmark[0]{Supplementary Material}{supplement_bookmark}

\clearpage
\begin{center}
	% --- SUPPLEMENT TITLE ---
	{\LARGE \bfseries Supplementary material for ``Diagnostic tools for extreme value regression models''} \\[1.5em]
	
	% --- DATE ---
	{\large \today} \\[1.5em]
	
	% --- AUTHORS WITH SUPERSCRIPTS ---
	{\large E. Mackay$^1$, J. Richards$^2$, and P. Jonathan$^3$}\\[0.8em]
	
	% --- AFFILIATIONS & MAILS ---
	{\small $^1$Department of Engineering, University of Exeter, Penryn, TR10 9FE, UK. \\ 
		\texttt{e.mackay@exeter.ac.uk}. ORCID: 0000-0001-7121-4231} \\[0.8em]
	
	{\small $^2$School of Mathematics and Maxwell Institute for Mathematical Sciences, \\ 
		University of Edinburgh, Edinburgh, EH9 3FD, UK. \\ 
		\texttt{jordan.richards@ed.ac.uk}. ORCID: 0000-0002-0697-2551} \\[0.8em]
	
	{\small $^3$School of Mathematical Sciences, Lancaster University, Lancaster, LA1 4YF, UK. \\ 
		\texttt{p.jonathan@lancaster.ac.uk}. ORCID: 0000-0001-7651-9181} 
\end{center}

\vspace{2.5em} % 

%%%%%%%%%%%%%%%%%%%%%%%%%%%%%%%%
\setcounter{section}{0}

\section{Goodness-of-fit tests} 

\subsection{Sampling distributions}
As noted in Section 4 of the main text, the sampling distributions of the EMAD and Cram\'{e}r-von Mises (CvM) family of goodness-of-fit statistics do not have closed-form expressions, but converge to an asymptotic form. \autoref{fig:test_stat_dist} shows the exceedance probability of $W_n$ (CvM statistic), $A_n$ (Anderson-Darling statistic), $A_{R,n}$ (right-tail-weighted Anderson-Darling statistic), and $S_n$ (EMAD) for sample sizes $n=10,30,100,300$, and 1000, estimated by Monte Carlo simulation using $10^8$ trials at each value of $n$. To avoid overlap, with the EMAD distribution, the distributions of $W_n$, $A_n$, and $A_{R,n}$ are shown, rather than those of $W_n^2$, $A_n^2$, and $A_{R,n}^2$. For the CvM family with sample sizes $n\ge 10$, the distribution is relatively constant for lower exceedance probabilities, with the largest changes in the upper tail. The sampling distribution of the EMAD statistic exhibits larger changes with sample size and is slower to converge to the asymptotic form. 

\begin{figure}[h!]
	\centering    
	\includegraphics[scale=0.7]{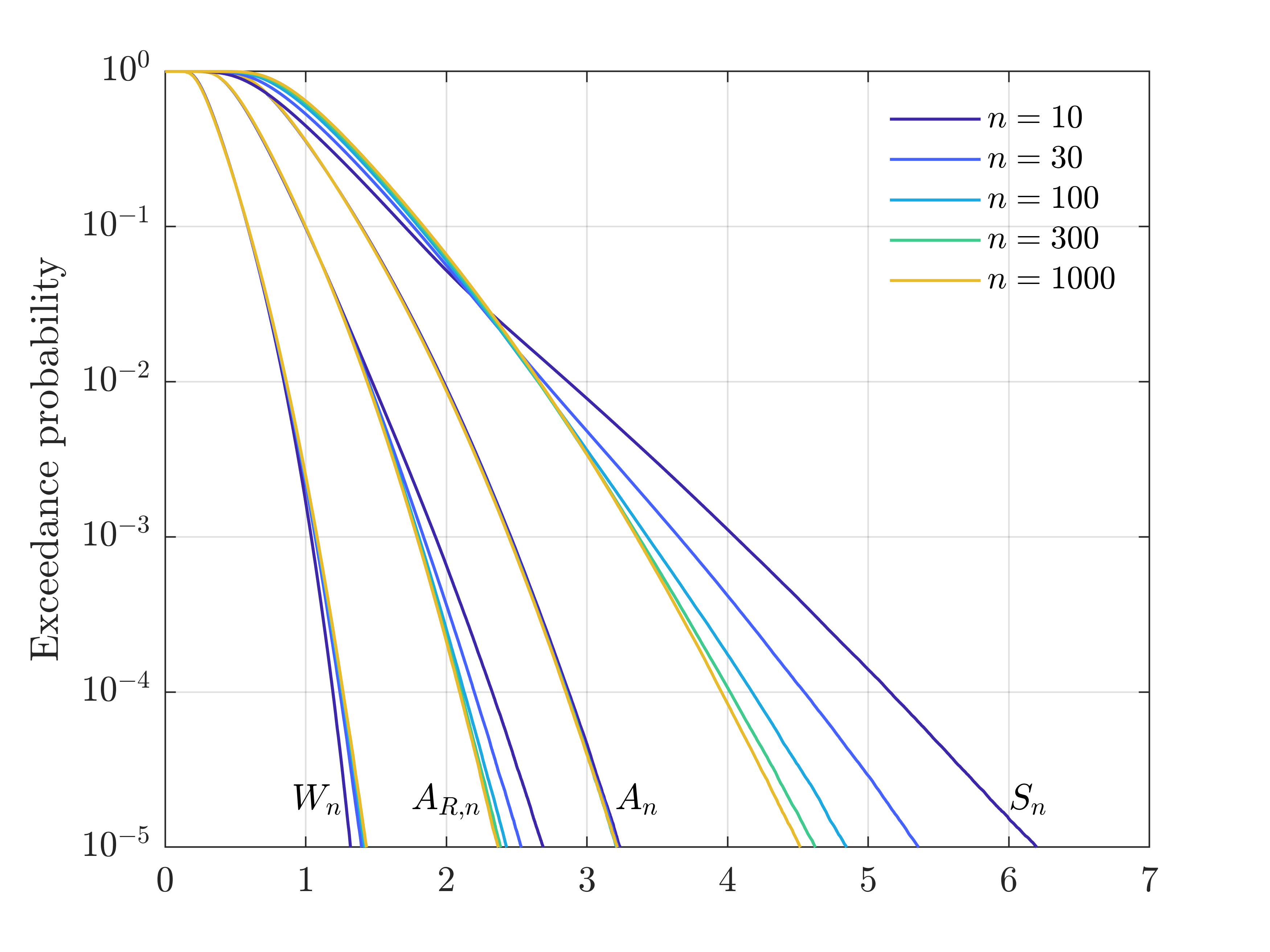}
	\caption{Exceedance probability for goodness-of-fit test statistics, $W_n$, $A_n$, $A_{R,n}$, and $S_n$ under the null hypothesis, for various sample sizes $n$.}
	\label{fig:test_stat_dist}
\end{figure}

Section 4.1.3 also showed that, under the null hypothesis, the distribution of $\mathcal{A}_n^2$ (the mean-square value of the normalised residuals $\nu_k$) converges to the same asymptotic form as that of the AD statistic. \autoref{fig:AD_vs_gaussian} compares the sampling distributions of the two statistics for finite sample sizes of $n=10$, 30, and 100. The distributions are in close agreement for relatively small sample sizes of $n=30$.

\begin{figure}[h!]
	\centering    
	\includegraphics[width=\textwidth]{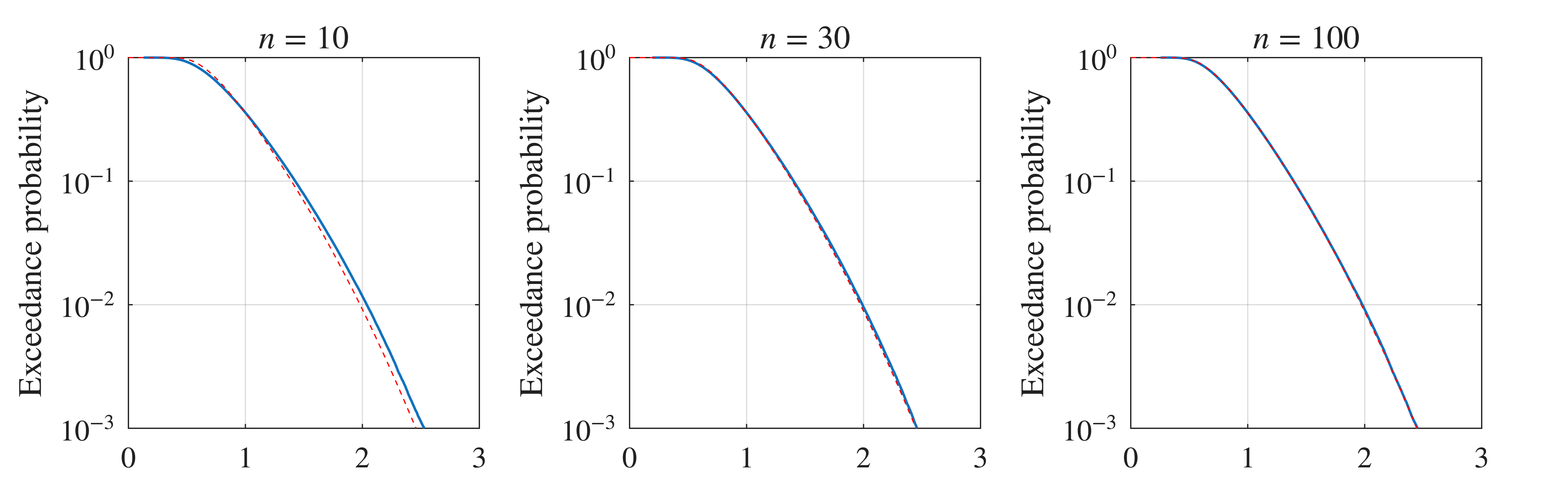}
	\caption{Exceedance probability for RMS value of normalised residuals $\mathcal{A}_n$ (solid lines) and AD statistic $A_n$ (dashed lines), for various sample sizes $n$.}
	\label{fig:AD_vs_gaussian}
\end{figure}
%%%%%%%%%%%%%%%%%%%%%%%%%%%%%%%%
\subsection{Comparison of test statistics}
The choice of test statistic used to quantify goodness of fit influences judgements about the quality of a given model. Due to the different weights that test statistics place on various parts of the distribution, a model for a given sample may fail one test at a certain significance level, but pass another test at the same level. We can examine the agreement between test statistics under the null hypothesis by generating samples from the uniform distribution on $[0,1]$, and comparing p-values for each sample. \autoref{fig:test_stat_joint_dens} shows empirical joint densities of $10^6$ test statistic p-values estimated from samples of size $n=100$. Whilst some pairs of test statistics exhibit reasonably strong correlation (e.g. CvM and AD, or ADR and EMAD), other pairs of statistics have much weaker correlation and exhibit a large degree of scatter. 

\begin{figure}[h!]
	\centering    
	\includegraphics[scale=0.7]{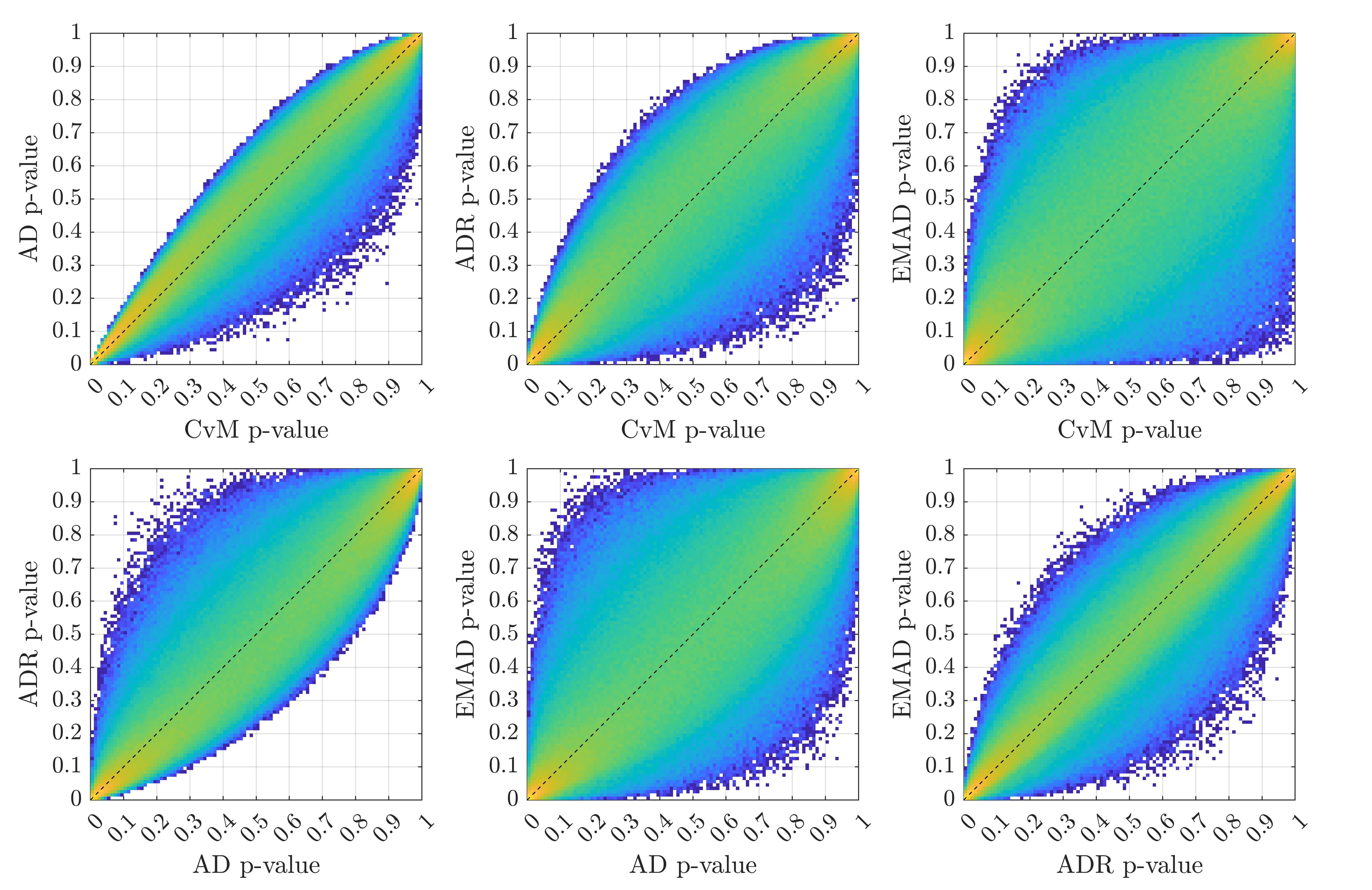}
	\caption{Empirical joint densities of  $10^6$ test statistic p-values estimated from samples of size $n=100$ from a uniform distribution. Colour scale indicates joint density on a logarithmic scale.}
	\label{fig:test_stat_joint_dens}
\end{figure}

\begin{figure}[h!]
	\centering    
	\includegraphics[width=0.6\textwidth]{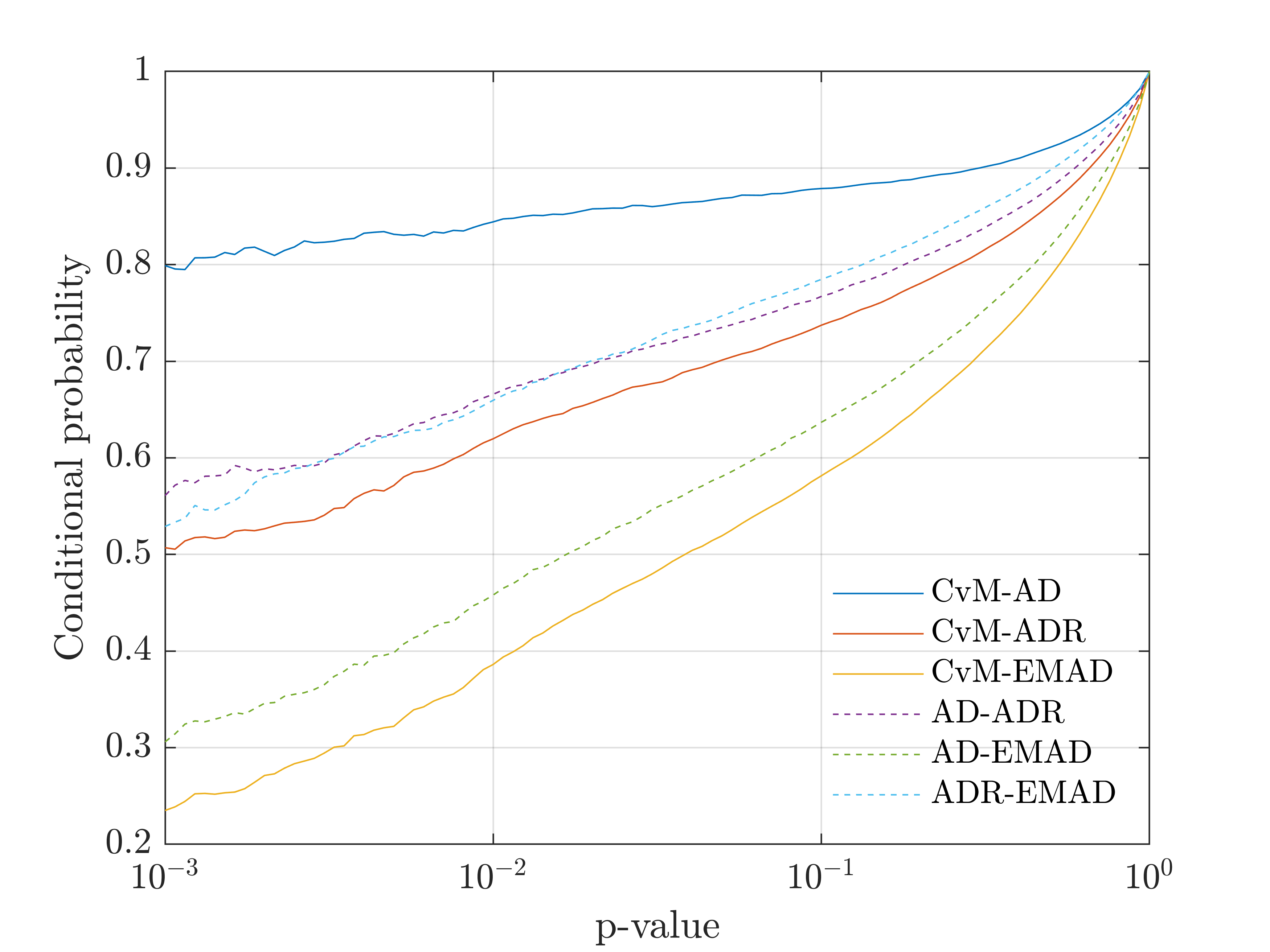}
	\caption{Probability that both test statistics have p-value less than a given level, conditional on one test statistic having a p-value less than this level. This is the conditional probability that a pair of test statistics both reject a sample at a given p-value.}
	\label{fig:test_stat_comp}
\end{figure}

\autoref{fig:test_stat_comp} shows the probability that pairs of test statistics both have p-value less than a given level, conditional on one test statistic having a p-value less than a given level. This is the conditional probability that a pair of test statistics both reject a sample at a given p-value. The CvM and AD statistics have the strongest joint rejection probability, which is in excess of 0.8 for p-values above $10^{-3}$. In contrast, the CvM and EMAD statistics have the lowest joint rejection probability, at less than 0.4 for a p-value of 0.01. 

\subsection{Sensitivity to perturbations}
In Section 4.1.3 of the main text, we consider the asymptotic sensitivity of various test statistics to perturbations in the uniform distribution of PIT values. Here we consider finite sample size sensitivity, for a specific example. For a sample size $n$, we generate IID $\mbox{Uniform}(0,1)$ variables $U_1,\ldots,U_n$, representing the PIT values under the null hypothesis. We then apply a perturbation to these values, representing an error in a model. As the interest is in extreme value models, we apply the perturbation to the upper tail of the distribution for $u\le u_0$, where $u$ is the exceedance probability and $0<u_0\le 1$. For $a>0$, we define
\begin{equation*}
	V_k = 
	\begin{cases}
		U_k, & U_k>u_0,\\
		u_0^{1-a} U_k^a, & U_k\le u_0,
	\end{cases}
	\qquad k=1,\ldots,n.
\end{equation*}
An example of the relationship between $V_k$ and $U_k$ is shown in \autoref{fig:PIT_perturbation} for the case $u_0=0.5$ and $a=0.5$, 0.75, 1, 1.25 and 1.5. Values of $a>1$ correspond to a negative bias in the model, i.e. the model predicts that a given quantile is less likely to be exceeded. Similarly, values of $a<1$ correspond to positive model bias.

\begin{figure}[h!]
	\centering    
	\includegraphics[width=0.6\textwidth]{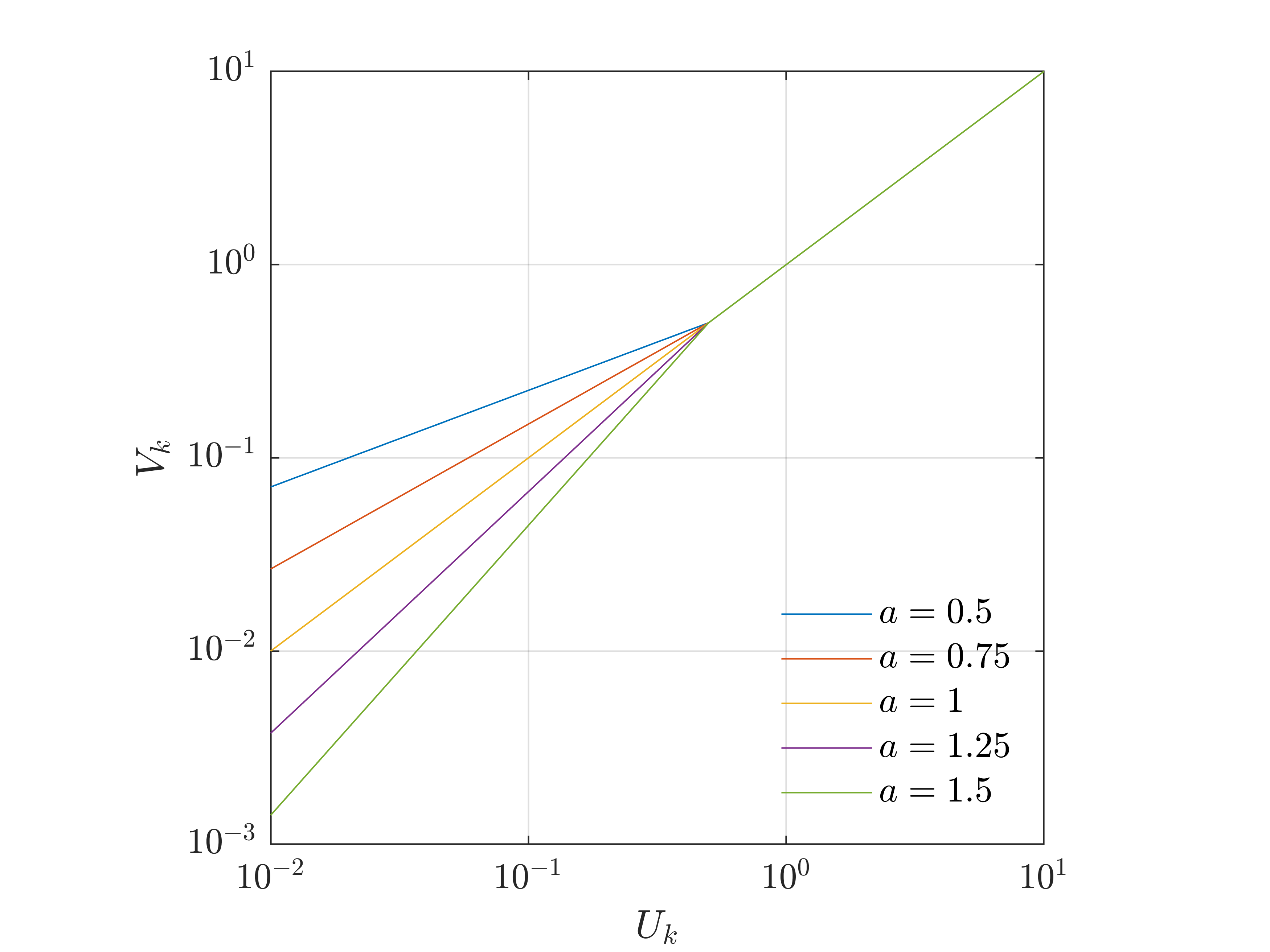}
	\caption{Relation between perturbed PIT values $V_k$ and uniform PIT values $U_k$ used in simulation study, for $u_0=0.5$ and various values of perturbation parameter $a$.}
	\label{fig:PIT_perturbation}
\end{figure}

Simulation studies were conducted for perturbation parameters in the range $0.25\le a\le 2$ and $0.3\le u_0 \le 1$ and sample sizes $n=25$, 50 and 100. For each pair of parameter values and sample size, $10^5$ samples were generated and the p-values of the ADR and EMAD statistics were calculated. \autoref{fig:EMAD_vs_ADR} shows contour plots of the mean p-values for each sample size, as functions of $u_0$ and $a$, as well as the ratios of the mean p-values. The change in p-value is not symmetric about $a=1$. For example, for $n=25$, there is a small region of values of $(u_0,a)$ where the mean p-values of the ADR and EMAD statistics increase slightly above 0.5. These occur for values of $a$ just less than 1, and lower values of $u_0$. In the simulation study, the maximum p-values were found to be only slightly larger than 0.5, with a maximum of 0.52 for the EMAD statistic for a sample size $n=25$ and parameters $u_0=0.3$ and $a=0.8$, and a maximum of 0.51 for the ADR statistic for a sample size $n=25$ and parameters $u_0=0.3$ and $a=0.85$. 

\begin{figure}[h!]
	\centering    
	\includegraphics[width=1\textwidth]{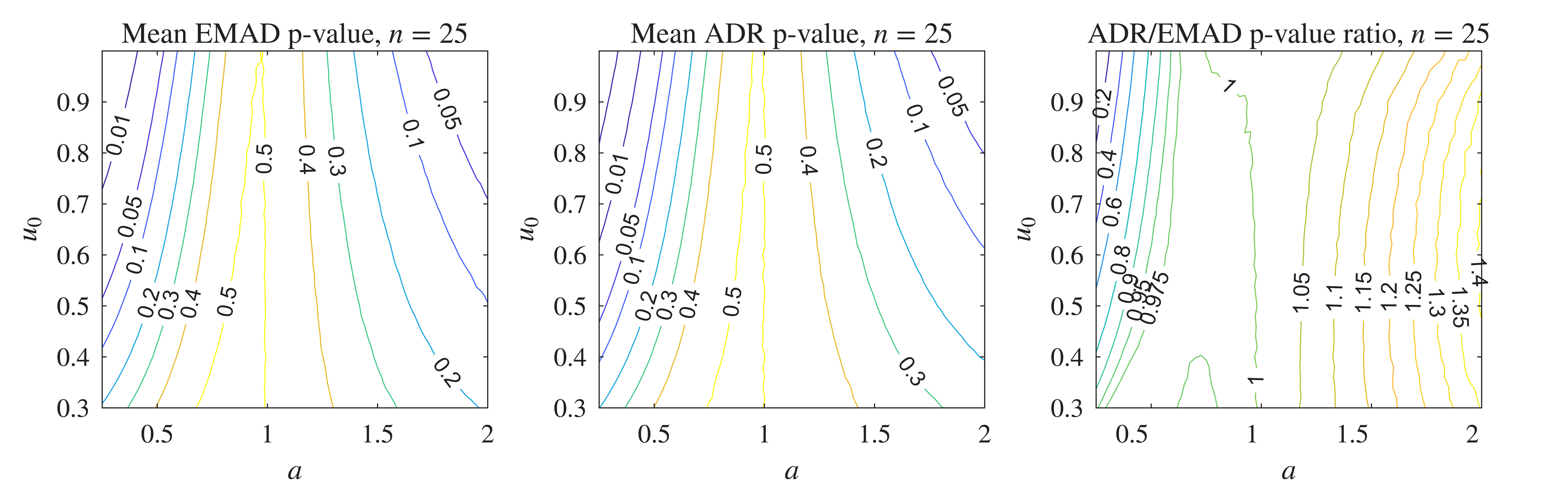}\\
	\includegraphics[width=1\textwidth]{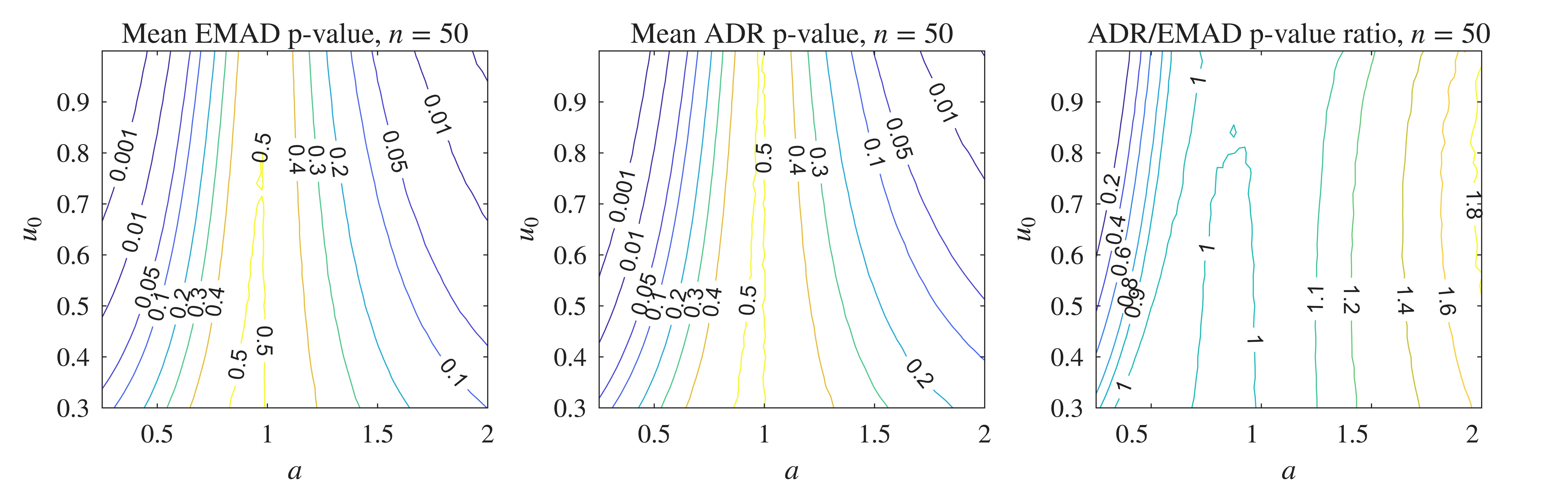}\\
	\includegraphics[width=1\textwidth]{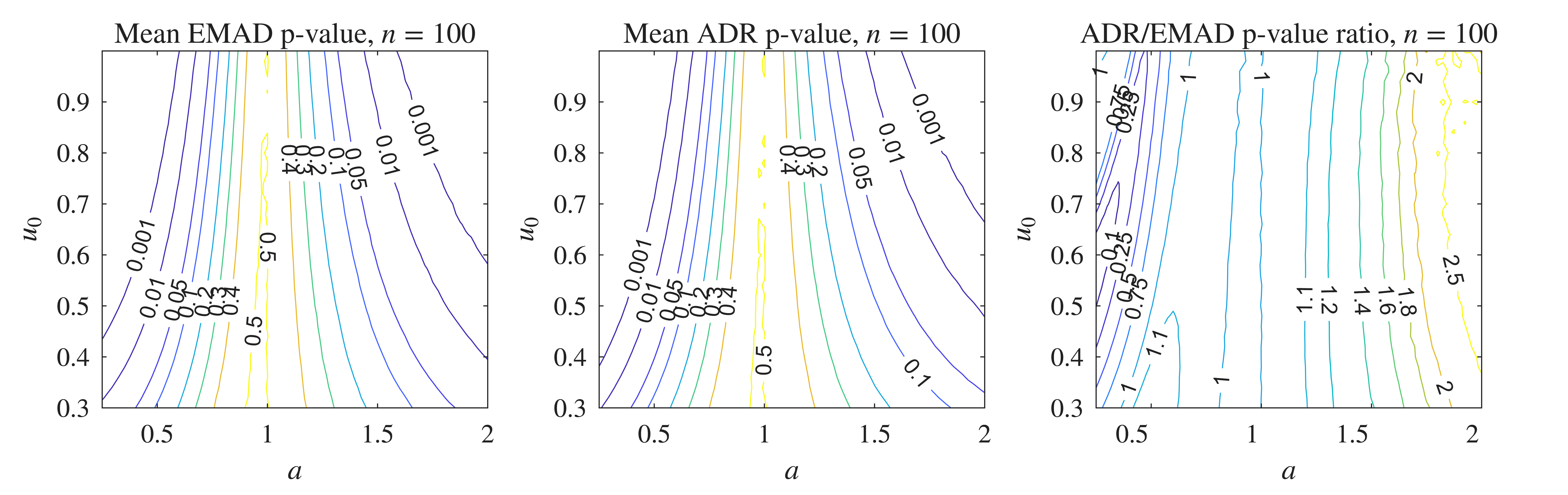}
	\caption{Mean p-values for EMAD (left) and ADR (middle) statistics, for perturbed PIT distributions with various perturbation parameters of $u_0$ and $a$. Right column shows ratio of mean p-values for ADR and EMAD statistics. Results shown for sample sizes $n=25$ (top), $n=50$ (middle), and $n=100$ (bottom).}
	\label{fig:EMAD_vs_ADR}
\end{figure}

For the rest of the parameter space investigated here, the mean p-value is less than 0.5 for the perturbed PIT values. For values of $a>1$ (negative model bias), the ratio of the mean ADR to EMAD p-values is positive for the three sample sizes and parameter ranges tested, indicating that the EMAD statistic is more sensitive to detecting these errors. For values of $a<1$ (positive model bias), the ADR statistic is more sensitive for smaller values of $a$. However, for $0.6<a<1$ the performance is similar, with the p-value ratio (ADR/EMAD) above 0.975 for the three sample sizes considered. These results suggest that the EMAD statistic is potentially more useful than the ADR statistic for detecting errors in the upper tails of a distribution, since it is more likely to flag models with negative bias as having a poor fit. Negative errors are potentially more important, as these correspond to models which predict large values are less likely than they in fact are.

%%%%%%%%%%%%%%%%%%%%%%%%%%%%%%%%%%%
\section{Multivariate normal copula on Laplace margins}
\subsection{Asymptotic density}
The asymptotic form of the density of a multivariate normal copula on exponential margins was derived by \citet{wadsworth2024statistical}. Here we consider the case of Laplace margins. Let $\bm{u}=(u_1,...u_d)\in(0,1)^d$ and $\bm{z} = \left(\Phi^{-1}(u_1), ..., \Phi^{-1}(u_d)\right)$, where $\Phi$ is the standard normal CDF. The copula density function for the multivariate normal distribution with correlation matrix $\mathrm{S}$ is
\begin{equation*}
	c_{\mathrm{S}}(\bm{u}) = \lvert \mathrm{S} \rvert^{-1/2} \exp \left( - \frac{1}{2} \|\bm{z}\|^2_2 - \frac{1}{2} \bm{z}^\top \mathrm{S}^{-1} \bm{z} \right),
\end{equation*}
where $\|\bm{z}\|_p = \left(\sum_{i=1}^d \lvert z_i \rvert^p\right)^{1/p}$ is the $L^p$ norm. Let $\bm{X}=(X_1,...,X_d)\in\RR^d$ be a random variable with copula density $c_{\mathrm{S}}$ and standard Laplace margins. Then $\bm{X}$ has joint density function
\begin{equation}\label{eq:norm_laplace_dens_SM}
	f_{\bm{X}}(\bm{x}) = 2^{-d} \lvert \mathrm{S} \rvert^{-1/2} \exp \left( - \|\bm{x}\|_1 - \frac{1}{2} \|\bm{z}\|^2_2 - \frac{1}{2} \bm{z}^\top \mathrm{S}^{-1} \bm{z} \right),
\end{equation}
where $z_i = \Phi^{-1}\left(F_L(x_i)\right)$ for $i=1,...,d$, and $F_L(x) = \tfrac{1}{2} + \sgn(x) \big(1-\exp(-|x|)\big)$ is the CDF of the standard Laplace distribution. From the symmetry of the normal and Laplace distribution we can write $z_i = -\sgn (x_i) \Phi^{-1}\left( \tfrac{1}{2} \exp\left(-\lvert x_i\rvert\right) \right)$ for $x_i\in \RR$, $i=1,...,d$. From the asymptotic properties of the inverse complementary error function \citep[][\S7.17(iii)]{NIST:DLMF}, we have for $u\to 0$,
\begin{equation*}
	\Phi^{-1}(u) = - \sqrt{-2\log(2u)} + \frac{\log(-\pi\log(2u))}{2\sqrt{-2\log(2u)}} + O\left(\frac{\log(-\log(2u))}{(-\log(2u))^{3/2}}\right).
\end{equation*}
Therefore, for $\lvert x_i\rvert \to \infty$ we have
\begin{equation*}
	z_i = \sgn (x_i) \left[ \sqrt{2\lvert x_i\rvert} - \frac{\log(\pi\lvert x_i\rvert )}{2\sqrt{2\lvert x_i\rvert}} \right] + O\left(\frac{\log(\lvert x_i\rvert)}{\lvert x_i\rvert^{3/2}}\right).
\end{equation*}
From here onwards it is useful to work in pseudo-polar coordinates $\bm{x}=r\bm{w}$, where $\bm{w}=(w_1,...,w_d)^\top$. For $r\to\infty$ we have
\begin{equation*}
	z_i^2 = \begin{cases}
		2r\lvert w_i\rvert - \log(\pi r \lvert w_i\rvert ) + O\left(\frac{\log(r)}{r}\right), & w_i\ne 0,\\
		0, &w_i=0.
	\end{cases}
\end{equation*}
Let $\mathcal{I}=\{i\in\{1,...,d\} : w_i\ne 0\}$. The first two terms in the exponential in \eqref{eq:norm_laplace_dens_SM} can then be written as
\begin{align*}
	\exp \left( - \|\bm{x}\|_1 - \frac{1}{2} \|\bm{z}\|^2_2  \right) &=  \exp \left( - \frac{1}{2} \sum_{i\in\mathcal{I}}  \log(\pi r \lvert w_i\rvert ) + O\left(\frac{\log(r)}{r}\right) \right)\\ 
	&= \left[1 +o(1)\right] r^{-d/2} \prod_{i\in\mathcal{I}} (\pi \lvert w_i\rvert )^{-1/2}.
\end{align*}
Now consider the terms $\bm{z}^\top \mathrm{S}^{-1} \bm{z} = \sum_{i=1}^d \sum_{i=1}^d \mathrm{S}^{-1}_{ij} z_i z_j$. For $i,j\in\mathcal{I}$ we have
\begin{align*}
	z_i z_j &= \sgn (w_i) \sgn (w_j) \left[2r\sqrt{\lvert w_i w_j \rvert} - \frac{1}{2}\sqrt{\frac{\lvert w_i\rvert}{\lvert w_j\rvert}} \log(\pi r \lvert w_j \rvert)  - \frac{1}{2}\sqrt{\frac{\lvert w_j\rvert}{\lvert w_i\rvert}} \log(\pi r \lvert w_i \rvert)\right] \\
	& \qquad + O\left(\frac{\log(r)}{r}\right).
\end{align*}
Therefore, the quadratic form can be written as
\begin{equation*}
	\frac{1}{2} \bm{z}^\top \mathrm{S}^{-1} \bm{z} = r\,\bm{a}^\top \mathrm{S}^{-1} \bm{a} - \frac{1}{2} \sum_{i\in\mathcal{I}} \sum_{j\in\mathcal{I}} \mathrm{S}^{-1}_{ij} \sgn (w_i) \sgn (w_j) \sqrt{\frac{\lvert w_i\rvert}{\lvert w_j\rvert}} \log(\pi r \lvert w_j \rvert) + O\left(\frac{\log(r)}{r}\right),
\end{equation*}
where $\bm{a}=(a_1,...,a_d)^\top$ and $a_i=\sgn(w_i) \sqrt{\lvert w_i\rvert}$, $i=1,...,d$. Collecting terms we arrive at the following asymptotic expression for the density
\begin{equation*}
	f_{\bm{X}}(r\bm{w}) \propto [1+o(1)] r^{\zeta(\bm{w})} \exp \left( - r \lambda(\bm{w}) \right), \quad r\to\infty,
\end{equation*}
where $\lambda(\bm{w}) = \bm{a}^\top \mathrm{S}^{-1} \bm{a}$ and $\zeta(\bm{w}) = -\tfrac{d}{2}+\tfrac{1}{2}\bm{a}^\top \mathrm{S}^{-1} \bm{b}$, with $b_i = \sgn(w_i) /\sqrt{\lvert w_i\rvert}$ for $i\in\mathcal{I}$ and $b_i = 0$ otherwise.

Defining pseudo-polar coordinates $R=\|\bm{X}\|_2$ and $\bm{W}=\bm{X}/R$, where $\|\cdot\|_2$ is the $L^2$ norm, the joint density of $(R,\bm{W})$ is given by 
\begin{equation*}
	f_{R,\bm{W}}(r,\bm{w}) = r^{d-1} f_{\bm{X}}(r\bm{w}) \propto [1+o(1)] r^{\zeta(\bm{w})+d-1} \exp \left( - r \lambda(\bm{w}) \right), \quad r\to\infty.
\end{equation*}
In this formulation, the radial component converges to a truncated gamma distribution with shape $\zeta(\bm{w})+d+1$ and scale $\kappa(\bm{w}) \coloneqq 1/\lambda(\bm{w})$. 

\subsection{Example}
In the example considered in the paper, we consider a five-dimensional case with correlation matrix
\begin{equation*}
	\mathrm{S} = 
	\begin{bmatrix}
		\;\;\;1.0000 & -0.4387 & \;\;\;0.5946 & \;\;\;0.0758 & -0.2198\\
		-0.4387 & \;\;\;1.0000 & -0.5885 & \;\;\;0.0361 & \;\;\;0.3887\\
		\;\;\;0.5946 & -0.5885 & \;\;\;1.0000 & \;\;\;0.0778 & -0.2404\\
		\;\;\;0.0758 & \;\;\;0.0361 & \;\;\;0.0778 & \;\;\;1.0000 & -0.1047\\
		-0.2198 & \;\;\;0.3887 & -0.2404 & -0.1047 & \;\;\;1.0000
	\end{bmatrix}.
\end{equation*}
To illustrate the range of tail shapes that this correlation matrix produces over the domain, we generate a sample of size $10^6$ and calculate the values of $\zeta(\bm{w})$ and $\lambda(\bm{w})$ at each point. The results are shown in \autoref{fig:MVN_characteristics}. The range of scales is relatively narrow, with most observations falling in the interval $[0.2, 1.2]$. Approximately 95\% of the values of $\zeta(\bm{w})$ are in the interval $[1,10]$, but the distribution has long tails (not shown), with some angles having $\zeta(\mathbf{w})<-100$ or $\zeta(\mathbf{w})>100$.

\begin{figure}[h!]
	\centering
	\includegraphics[scale=0.7]{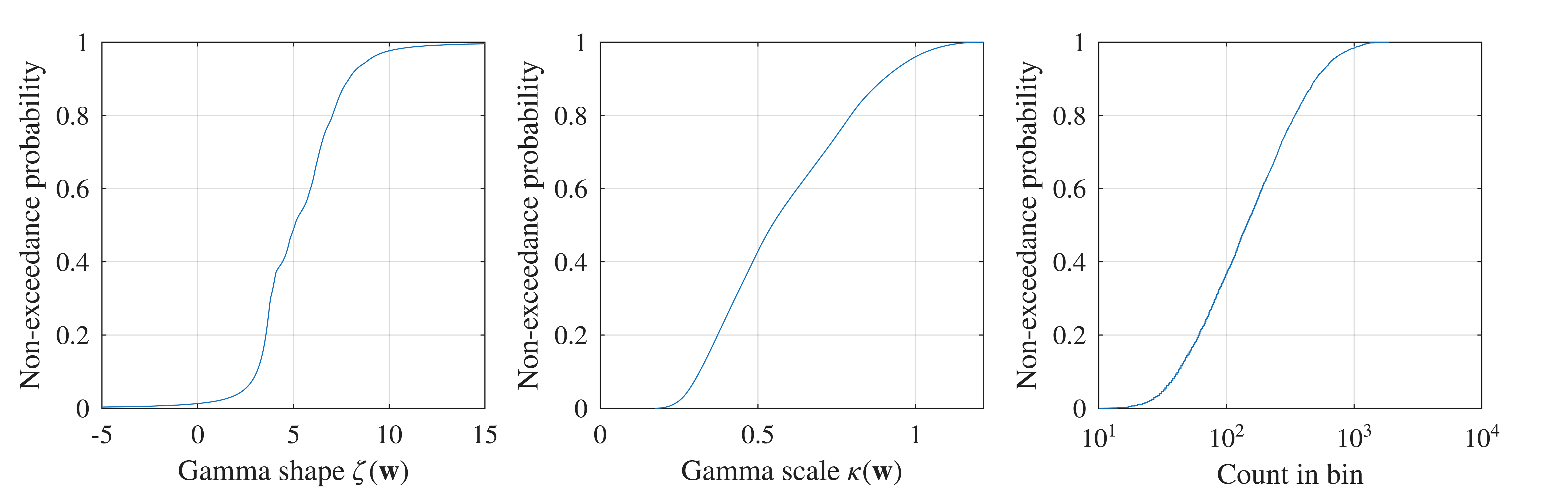}
	\caption{Empirical distributions of values of gamma shape parameter $\zeta(\bm{w})$ (left) and gamma scale parameter $\kappa(\bm{w})$ (middle) over the hypersphere, for the multivariate normal copula on Laplace margins. Right: Number of observations falling within $\pi/18$ radians of a set of 5890 pseudo-regularly spaced angles on $\mathbb{S}^4$, based on a sample of size $10^6$ from density \eqref{eq:norm_laplace_dens_SM}. This indicates the variation in the angular density over the hypersphere.}
	\label{fig:MVN_characteristics}
\end{figure}

To give an indication of the variation of the angular density over the covariate domain, we generate a set of 5890 pseudo-regularly spaced reference angles on $\mathbb{S}^4$ (see Section \ref{sec:partition}), then count the number of observations falling within $\pi/18$ radians of each reference angles, from a sample of size $10^6$. The resulting counts are proportional to an estimate of the angular density. The results are shown in the right hand plot of \autoref{fig:MVN_characteristics}. The angular density varies by approximately two orders of magnitude over the covariate domain, indicating that some regions contain very few observations relative to others.

%%%%%%%%%%%%%%%%%%%%%%%%%%%%%%%%

\section{Partitioning of the hypersphere} \label{sec:partition}
The method for partitioning the hypersphere, used in Example 1, is based on forming a Voronoi partition relative to a set of pseudo-regularly spaced reference directions vectors. The reference directions are defined using the method proposed in \citet{mackay2023diform}. The first step is to create a regular grid of points $\bm{u}_1=(u_1,...,u_d)\in [-1,1]^d$, with spacing $1/m$, $m\in\mathbb{N}_{>0}$, and $mu_i \in\{-m,...,m\}$ for $i=1,\dots,d$. We then keep only points that lie on the $L^1$ unit sphere, such that $\sum_{i=1}^d|u_i|=1$. Finally, these points are projected onto the $L^2$ unit sphere, by defining $\bm{u}_2 = \bm{u}_1/\|\bm{u}_1\|_2$. We denote the set of pseudo-regularly spaced reference directions as $\mathcal{U}$. 

\begin{figure}[h!]
	\centering
	\includegraphics[scale=0.6]{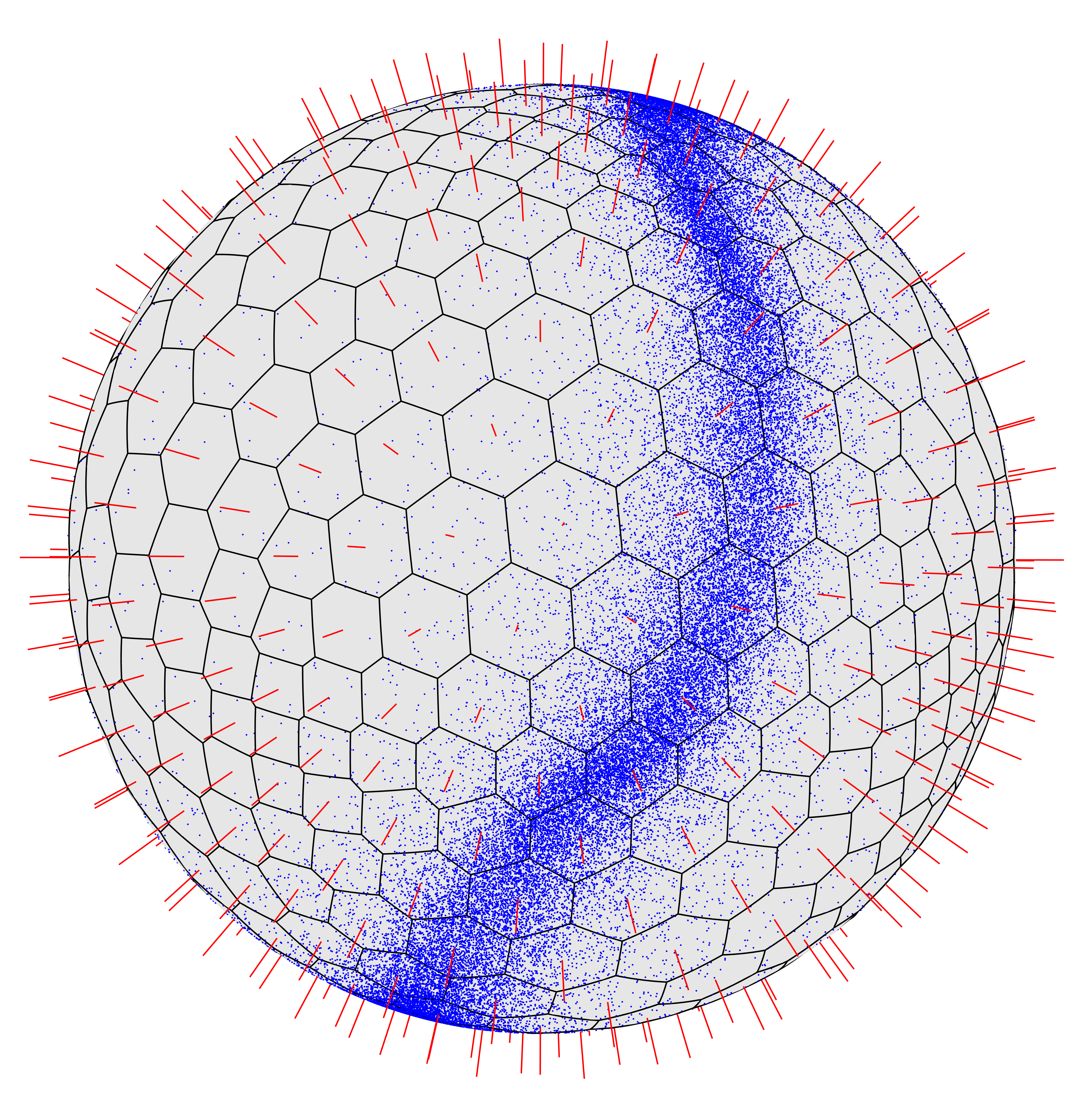}
	\includegraphics[scale=0.6]{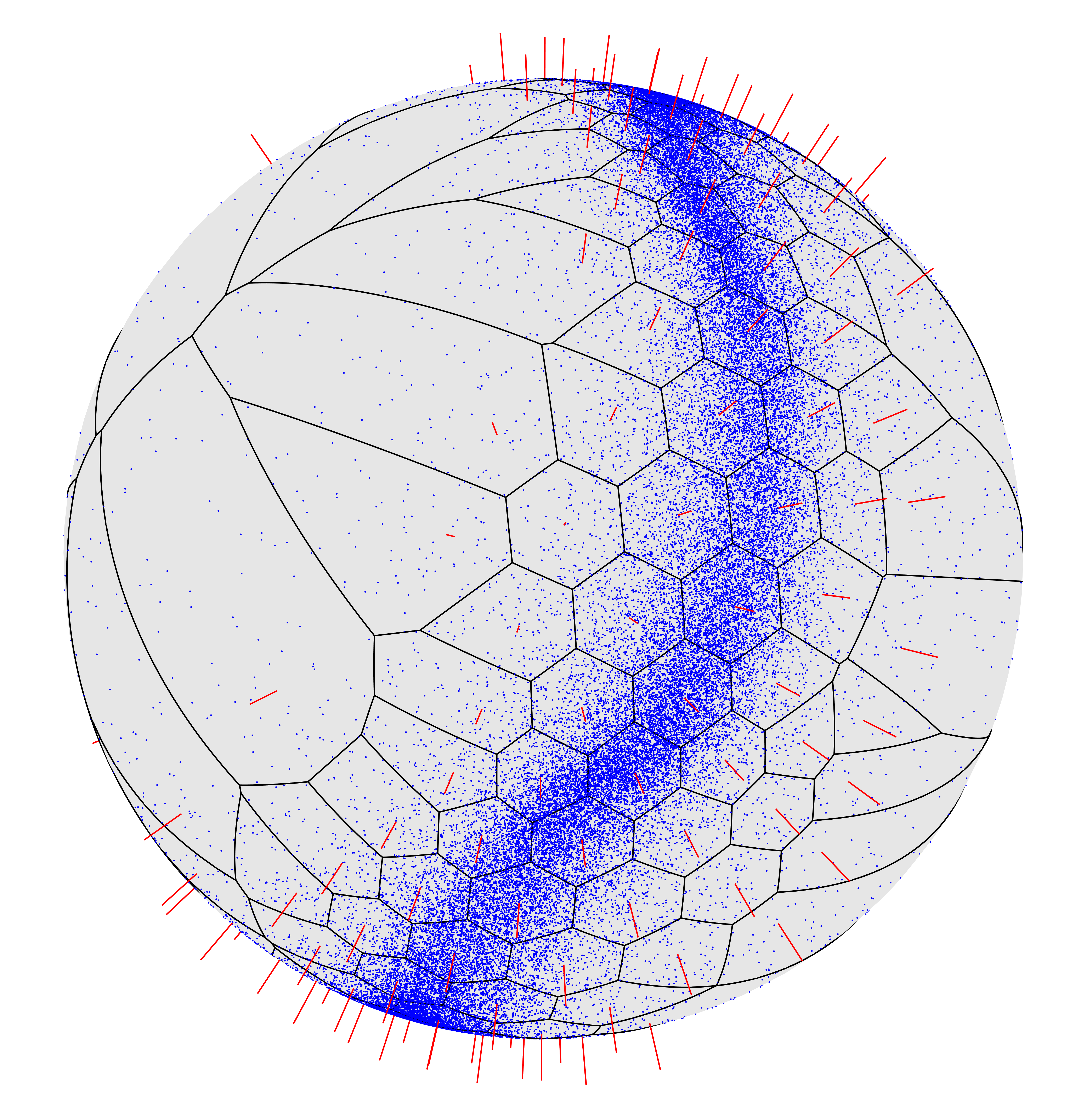}
	\caption{Illustration of initial (left) and final (right) Voronoi partitions of the sphere, relative to a set of pseudo-regularly spaced direction vectors (red lines). The angular component of a sample of $10^5$ points from a multivariate normal copula with standard Laplace margins is shown in blue. The partition on the right has been iteratively refined so that there are a minimum of $100$ observations in each bin.}
	\label{fig:sphere_voronoi}
	\centering
	\includegraphics[scale=0.7]{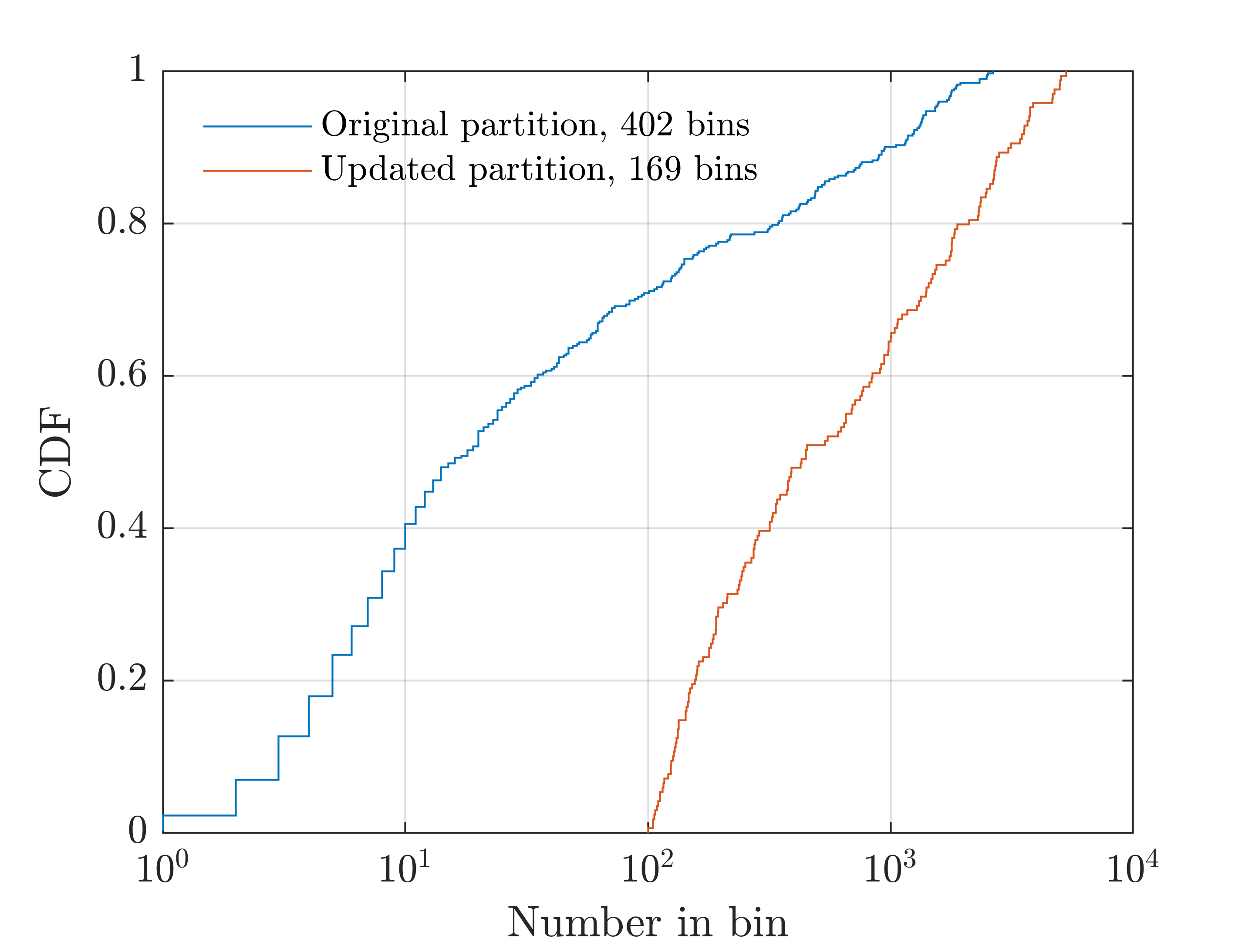}
	\caption{Empirical distributions of the number of observations in each bin for the Voronoi partitions shown in \autoref{fig:sphere_voronoi}.}
	\label{fig:sphere_voronoi_dist}
\end{figure}

For a given set of angles $\{\bm{w}_i\}_{1:n}$, a Voronoi partition, relative to $\mathcal{U}$, can be formed by assigning each angle $\bm{w}_i$ to the closest reference angle $\bm{u}_j\in\mathcal{U}$, i.e. the reference angle that minimises $\arccos(\bm{u}_j \cdot \bm{w}_i)$. To enforce a minimum number of observations, $n_0$, associated with each reference angle, we iteratively remove reference angles corresponding to bins with less than $n_0$, as follows:
\begin{enumerate}
	\item If any bins contain less than $n_0$ observations, remove direction vector corresponding to bin with fewest observations. Otherwise finish.
	\item Reassign each observation in the removed bin to the bin for the nearest remaining reference direction.
	\item Go back to step 1.
\end{enumerate}
An example of the initial and final partitions created using this method is shown in \autoref{fig:sphere_voronoi} for a three-dimensional example, with a sample of $10^5$ points from joint density \eqref{eq:norm_laplace_dens_SM} with $d-3$ and partial correlations $(S_{12}, S_{13}, S_{23}) = (0.8,0.2,-0.4)$. The initial set of reference directions is created using a grid with $m=10$, yielding 402 bins for the initial partition. After iterative refinement with a minimum bin size of $n_0=100$, the final partition has 169 bins. The empirical distributions of the number of observations per bin for the initial and final distributions are shown in \autoref{fig:sphere_voronoi_dist}.

This partitioning method is different from clustering algorithms on the hypersphere \citep[e.g.,][]{banerjee2005clustering, hornik2012spherical}. The method described above does not attempt to quantify any sort of `similarity' within bins. Instead, we just create a partitioning of the hypersphere into `bins' and refine this until we have a specified minimum number of observations in each bin. The advantage of this approach is that it is much faster to run -- the example above takes a few seconds to run on a laptop. As the diagnostics proposed do not assume any sort of stationarity across bins, similarity within bins does not matter in our application.

%%%%%%%%%%%%%%%%%%%%%%%%%%%%%%%%
\section{Supplementary figures}
This section includes figures that are supplementary to the main text.

\begin{figure}[t!]
	\centering
	\begin{minipage}{0.3\linewidth}
		\includegraphics[width=\linewidth]{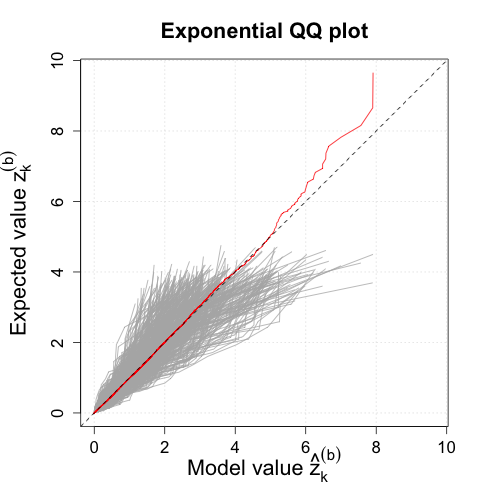}
	\end{minipage}
	\begin{minipage}{0.3\linewidth}
		\includegraphics[width=\linewidth]{plot_7497_2.png}
	\end{minipage}
	\begin{minipage}{0.3\linewidth}
		\includegraphics[width=\linewidth]{plot_7497_3.png}
	\end{minipage}
	\begin{minipage}{0.3\linewidth}
		\includegraphics[width=\linewidth]{plot_7497_4.png}
	\end{minipage}
	\begin{minipage}{0.3\linewidth}
		\includegraphics[width=\linewidth]{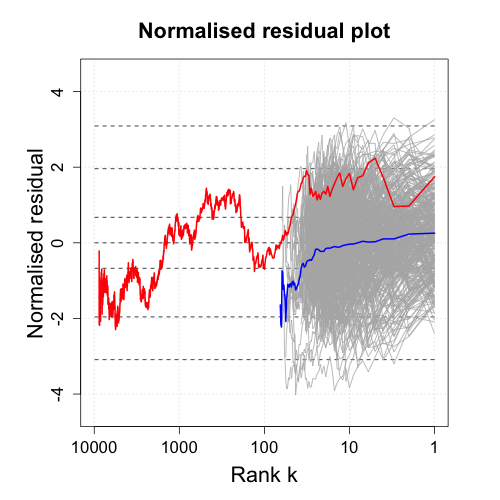}
	\end{minipage}
	\begin{minipage}{0.3\linewidth}
		\includegraphics[width=\linewidth]{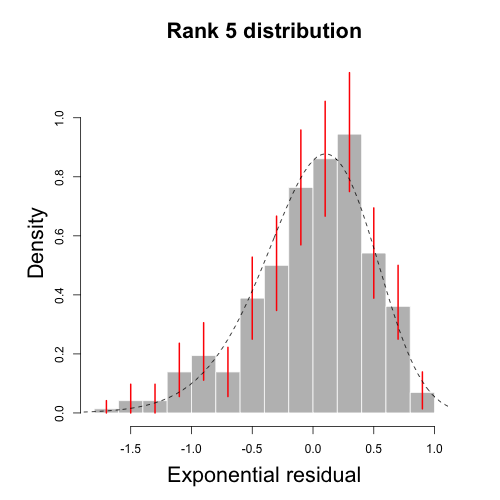}
	\end{minipage}
	\caption{Diagnostics for the fit of SPAR Model 1. Top left and centre: Exponential QQ plot and Standardised tail plot, with grey and red curves denoting regional and global estimates, respectively. Top centre: Dashed curves denote  0.001, 0.025, 0.25, 0.5, 0.75, 0.975, and 0.999 quantiles from the theoretical log-gamma distribution (as a function of rank $k$). Right column: histogram of empirical differences $D_k$ for $k=1$ (top) and $k=5$ (bottom), with the corresponding density function $f_{D_{k,\infty}}$ (dashed line). Bottom left:  Histogram of the regional ADR p-values used in the CvM uniformity test. Bottom centre: Normalised residual for all data (red) and regional samples (grey), with dashed lines indicating normal quantiles at exceedance probabilities 0.001, 0.025, 0.25, 0.75, 0.975, and 0.999. }
\end{figure}
\begin{figure}[t!]
	\centering
	\begin{minipage}{0.3\linewidth}
		\includegraphics[width=\linewidth]{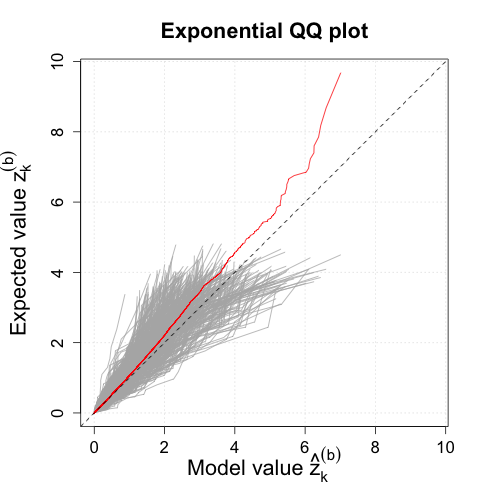}
	\end{minipage}
	\begin{minipage}{0.3\linewidth}
		\includegraphics[width=\linewidth]{plot_6272_2.png}
	\end{minipage}
	\begin{minipage}{0.3\linewidth}
		\includegraphics[width=\linewidth]{plot_6272_3.png}
	\end{minipage}
	\begin{minipage}{0.3\linewidth}
		\includegraphics[width=\linewidth]{plot_6272_4.png}
	\end{minipage}
	\begin{minipage}{0.3\linewidth}
		\includegraphics[width=\linewidth]{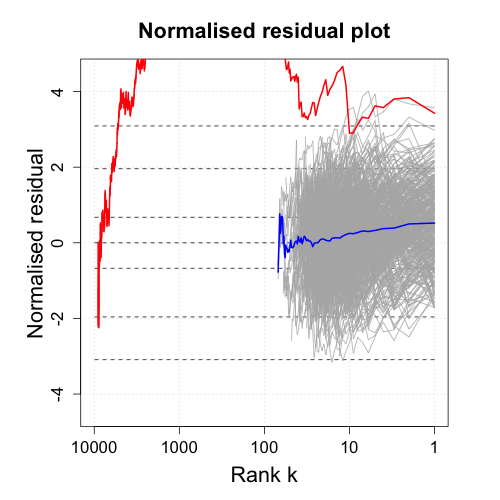}
	\end{minipage}
	\begin{minipage}{0.3\linewidth}
		\includegraphics[width=\linewidth]{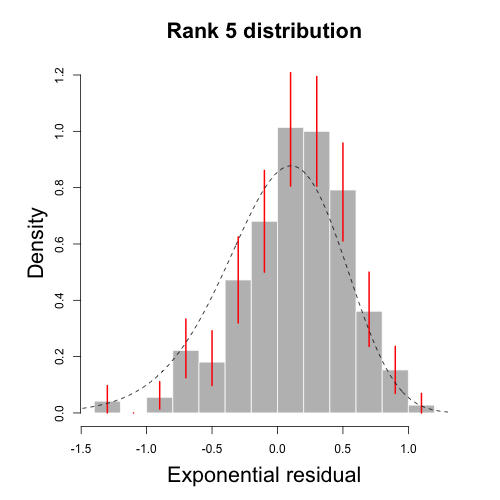}
	\end{minipage}
	\caption{Diagnostics for the fit of SPAR Model 2. Top left and centre: Exponential QQ plot and Standardised tail plot, with grey and red curves denoting regional and global estimates, respectively. Top centre: Dashed curves denote  0.001, 0.025, 0.25, 0.5, 0.75, 0.975, and 0.999 quantiles from the theoretical log-gamma distribution (as a function of rank $k$). Right column: histogram of empirical differences $D_k$ for $k=1$ (top) and $k=5$ (bottom), with the corresponding density function $f_{D_{k,\infty}}$ (dashed line). Bottom left:  Histogram of the regional ADR p-values used in the CvM uniformity test. Bottom centre: Normalised residual for all data (red) and regional samples (grey), with dashed lines indicating normal quantiles at exceedance probabilities 0.001, 0.025, 0.25, 0.75, 0.975, and 0.999. Centre column: Blue lines denote running empirical means of regional estimates (in grey). For all histograms, red lines denote 95\% bootstrapped error bars for each bin. }
\end{figure}

%%%%%%%%%%%%%%%%%%%%%%%%%%%%%%%%
\section{Integrals for EMAD asymptotic variance}
In the proof of Proposition 4.1, it was shown that, under the null hypothesis, the asymptotic variance of the EMAD statistic is given by $\Var(S) = \tfrac{2}{\pi}\int_0^1\!I(q)\dd q$, where
\begin{align*}
	I(q) &= \int_0^1
	\sqrt{\frac{(1-p)(1-q)}{pq}}\Bigl(\rho(p,q)\arcsin\rho(p,q)+\sqrt{1-\{\rho(p,q)\}^2}-1\Bigr)
	\dd p, \\
	\rho(p,q) &= \sqrt{\frac{\min(p,q) (1-\max(p,q))}{\max(p,q) (1-\min(p,q))}}.
\end{align*}
For notational simplicity, we hereafter drop the dependency of $(p,q)$ on the correlation function $\rho$. To simplify the calculation, we split the inner integral into three parts with $p<q$ and $p>q$, and define $I(q) = I_0(q) + I_1(q)-I_2(q)$, where
\begin{align*}
	I_0(q) &= \int_0^q
	\sqrt{\frac{(1-p)(1-q)}{pq}}\Bigl(\rho_0 \arcsin\rho_0+\sqrt{1-\rho_0^2}\Bigr)
	\dd p,\\
	I_1(q) &= \int_q^1
	\sqrt{\frac{(1-p)(1-q)}{pq}}\Bigl(\rho_1 \arcsin\rho_1+\sqrt{1-\rho_1^2}\Bigr)
	\dd p,\\
	I_2(q) &= \int_0^1
	\sqrt{\frac{(1-p)(1-q)}{pq}}
	\dd p = \frac{\pi}{2}\sqrt{\frac{1-q}{q}},
\end{align*}
and $\rho_0^2 = p(1-q)/(q(1-p))$, $\rho_1^2 = q(1-p)/(p(1-q))$. Multiplying through, we obtain
\begin{align*}
	I_0(q) &= \frac{1-q}{q} \int_0^q
	\arcsin \left(\sqrt{\frac{p(1-q)}{q(1-p)}}\right)
	\dd p + \frac{\sqrt{1-q}}{q}
	\int_0^q
	\sqrt{\frac{q-p}{p}}
	\dd p,\\
	I_1(q) &= \int_q^1
	\frac{1-p}{p} \arcsin \left(\sqrt{\frac{q(1-p)}{p(1-q)}} \right)
	\dd p
	+ \frac{1}{\sqrt{q}} \int_q^1
	\frac{\sqrt{(1-p)(p-q)}}{p} \dd p.
\end{align*}
Next, we evaluate the four individual integrals above.

%%%%%%%%%%%%%%%%%%%%%%%%%%%%
\subsection{Integral A}
To calculate the integral 
\begin{equation*}
	I_A = \int_0^q \arcsin\left(\sqrt{\frac{p(1-q)}{q(1-p)}}\right) \dd p, 
\end{equation*}
we substitute $u = \arcsin\left( \sqrt{\frac{p(1-q)}{q(1-p)}}\right)$. Solving for $p$ gives
\begin{equation*}
	p := p(u) = \frac{q \sin^2(u)}{(1-q) + q\sin^2(u)} = 1 - \frac{1-q}{1 - q \cos^2(u)}.
\end{equation*}
So, we can write the integral as $I_A = \int_0^{\pi/2} u \dd p(u)$. Integrating by parts, 
\begin{equation*}
	I_A = \left[ u \cdot p(u) \right]_0^{\pi/2} - \int_0^{\pi/2} p(u) \dd u = \frac{\pi}{2}(q-1) + (1 - q) J, 
\end{equation*}
where $J := \int_0^{\pi/2} (1 - q \cos^2(u))^{-1} \dd u$. To evaluate $J$, we divide the numerator and denominator by $\cos^2(u)$ and substitute $t = \tan(u)$, $\dd dt = \sec^2(u) \dd u$:
\begin{align*}
	J &= \int_0^{\pi/2} \frac{\sec^2(u)}{\sec^2(u) - q} \dd u = \int_0^{\pi/2} \frac{\sec^2(u)}{1 + \tan^2(u) - q} \dd u \\
	&= \int_0^\infty \frac{dt}{t^2 + (1-q)} = \left[ \frac{1}{\sqrt{1-q}} \arctan\left(\frac{t}{\sqrt{1-q}}\right) \right]_0^\infty = \frac{1}{\sqrt{1-q}} \cdot \frac{\pi}{2}
\end{align*}
Substituting this back gives
\begin{equation*}
	I_A = \frac{\pi}{2} \left( q - 1 + \sqrt{1-q} \right).
\end{equation*}

%%%%%%%%%%%%%%%%%%%%%%%%%%%%
\subsection{Integral B}
To calculate the integral 
\begin{equation*}
	I_B = \int_0^q \sqrt{\frac{q-p}{p}} dp,
\end{equation*} 
we substitute $p = q \sin^2(t)$ and $\dd p = 2q \sin(t) \cos(t) \dd t$. This gives
\begin{equation*}
	I_B = 2q \int_0^{\pi/2} \sqrt{\frac{1-\sin^2(t)}{\sin^2(t)}} \sin(t) \cos(t) \dd t = 2q \int_0^{\pi/2} \cos^2 (t) \dd t = \frac{\pi q}{2}.
\end{equation*} 

%%%%%%%%%%%%%%%%%%%%%%%%%%%%
\subsection{Integral C}
To calculate the integral 
\begin{equation*}
	I_C = \int_q^1 \frac{1-p}{p} \arcsin \sqrt{\frac{q(1-p)}{p(1-q)}}
	\dd p,
\end{equation*}
we start by substituting $x = \sqrt{(1-p)/(cp)}$, where $c = (1-q)/q$. The differential is
\begin{equation*}
	\dd p = \frac{-2cx}{(1 + cx^2)^2} \dd x
\end{equation*}
Substituting these elements back into the original integral gives
\begin{equation*}
	I_C = \int_0^1 \arcsin(x) \frac{2c^2 x^3}{(1 + cx^2)^2} \dd x.
\end{equation*}
Next, we apply integration by parts, with $u = \arcsin(x)$, $\dd u = \left(\sqrt{1-x^2}\right)^{-1} \dd x$ and $dv = 2c^2 x^3 (1 + cx^2)^{-2} \dd x$, to give
\begin{equation*}
	I_C = \left[u v\right]_0^1 - \int_0^1 v \dd u.
\end{equation*}
To find $v = \int \dd v$, let $y = 1 + cx^2$, so that $\dd y = 2cx \dd x$ and $x^2 = (y-1)/c$:
\begin{equation*}
	v = \int \frac{c \left(\frac{y-1}{c}\right)}{y^2} \dd y = \int \frac{y-1}{y^2} \dd y = \int \left(\frac{1}{y} - \frac{1}{y^2}\right) \dd y = \log|y| + \frac{1}{y} = \log(1 + cx^2) + \frac{1}{1 + cx^2}.
\end{equation*}
Therefore, the boundary term is
\begin{equation*}
	\left[u v\right]_0^1 = \left[\arcsin(x) \left(\log(1 + cx^2) + \frac{1}{1 + cx^2}\right)\right]_0^1 = \frac{\pi}{2} \left[q - \log(q)\right].
\end{equation*}
For the remaining integral we substitute $x = \sin\theta$ to give
\begin{align*}
	\int_0^1 v \dd u &= \int_0^1 \frac{\log(1 + cx^2) + \frac{1}{1 + cx^2}}{\sqrt{1-x^2}} \dd x \\
	&= \int_0^{\pi/2} \frac{\log(1 + c\sin^2\theta) + \frac{1}{1 + c\sin^2\theta}}{\sqrt{1 - \sin^2 \theta}} (\cos\theta \dd \theta) \\
	&= \int_0^{\pi/2} \frac{1}{1 + c\sin^2\theta} \dd \theta + \int_0^{\pi/2} \log(1 + c\sin^2\theta) \dd \theta .
\end{align*}
For the first integral, we divide both the numerator and the denominator by $\cos^2(x)$, to give
\begin{align*}
	J_1(c) = \int_0^{\pi/2} \frac{1}{1 + c\sin^2\theta} \dd \theta = \int_0^{\pi/2} \frac{\sec^2\theta}{\sec^2\theta + c \tan^2\theta} \dd \theta = \int_0^{\pi/2}\frac{\sec^2(x)}{1 + (c+1)\tan^2(x)} \dd \theta.
\end{align*}
Now, let $z = \tan\theta$, so that $\dd z = \sec^2\theta \dd \theta$ and 
\begin{align*}
	J_1(c) = \int_0^{\infty}\frac{1}{1 + (c+1)z^2} \dd z.
\end{align*}
Noting that $(\dd/\dd z) \arctan(az) = a/ (1+(az)^2)$, we have
\begin{align*}
	J_1(c) = \frac{\pi}{2 \sqrt{1+c}} = \frac{\pi}{2} \sqrt{q}.
\end{align*}
For the second integral, $J_2(c) = \int_0^{\pi/2} \log(1 + c\sin^2(x)) \dd x$, we use Feyman's trick, and define
\begin{align*}
	J_2'(c) &= \frac{\dd}{\dd c} \left[\int_0^{\pi/2} \log(1 + c\sin^2(x)) \dd x\right] = \int_0^{\pi/2} \frac{\partial}{\partial c} \log(1 + c\sin^2(x)) \dd x \\
	&= \int_0^{\pi/2} \frac{\sin^2(x)}{1 + c\sin^2(x) } \dd x = \frac{1}{c} \int_0^{\pi/2} \left( 1 - \frac{1}{1 + c\sin^2(x)} \right) \dd x\\
	&= \frac{1}{c} \left(\frac{\pi}{2} - J_1 \right) = \frac{\pi}{2c} \left( 1 - \frac{1}{\sqrt{c+1}} \right).
\end{align*}
Now we integrate $J_2'(c)$ to find $J_2(c)$:
\begin{equation*}
	J_2(c) = \frac{\pi}{2} \int \frac{1}{c} \left( 1 - \frac{1}{\sqrt{c+1}} \right) \dd c = \pi  \log\left( \sqrt{1+c} +1 \right) + K.
\end{equation*}
We use the initial condition at $c = 0$, of $J_2(0)=0$, to find $K=-\pi \log(2)$. Therefore, 
\begin{equation*}
	\int_0^1 v \dd u = J_1(c) + J_2(c) = \frac{\pi}{2} \sqrt{q} + \pi  \log\left( 1/\sqrt{q} +1 \right) -\pi \log(2).
\end{equation*}
Combining everything, we obtain
\begin{align*}
	I_C &= \frac{\pi}{2} \left[q - \log(q)\right] - \left[\frac{\pi}{2} \sqrt{q} + \pi  \log\left( 1/\sqrt{q} +1 \right) -\pi \log(2)\right]\\
	&=\frac{\pi}{2} \left(q-\sqrt{q}\right) - \pi \log\left(\frac{1+\sqrt{q}}{2}\right).
\end{align*}

%%%%%%%%%%%%%%%%%%%%%%%%%%%%
\subsection{Integral D}
The final integral is of the form given in Eq.~3.228.3 of \citet{gradshteyn2007}, and can be evaluated as
\begin{align*}
	I_D = \int_q^1 \frac{\sqrt{(1-p)(p-q)}}{p} \dd p = \frac{\pi}{2} (1-\sqrt{q})^2.
\end{align*}

%%%%%%%%%%%%%%%%%%%%%%%%%%%%
\subsection{Outer integral}
Substituting integrals $I_A$, $I_B$, $I_C$, and $I_D$ back in to $I_0(q)$ and $I_1(q)$ gives
\begin{align*}
	I_0(q) &= \frac{\pi}{2} \left[\frac{1-q}{q} \left(\sqrt{1-q}-(1-q)\right) + \sqrt{1-q}\right],\\
	I_1(q) &= \frac{\pi}{2} \left[ q-\sqrt{q}
	+ \frac{(1-\sqrt{q})^2}{\sqrt{q}} \right] - \pi \log\left(\frac{1+\sqrt{q}}{2}\right).
\end{align*}
and
\begin{align*}
	\frac{2}{\pi} I(q) &= \frac{\sqrt{1-q} - 1}{q} - \sqrt{\frac{1-q}{q}} + \frac{1}{\sqrt{q}} - 2 \log\left(\frac{1+\sqrt{q}}{2} \right).
\end{align*}
Returning to the original expression, we have $\Var(S) = \tfrac{2}{\pi}\int_0^1 I(q)\dd q = V_1 - V_2 + V_3 - 2 V_4$, where
\begin{align*}
	V_1 &= \int_0^1 \frac{\sqrt{1-q} - 1}{q} \dd q = 2\log(2)-2,\\
	V_2 &= \int_0^1 \sqrt{\frac{1-q}{q}} \dd q = \frac{\pi}{2},\\
	V_3 &= \int_0^1 \frac{1}{\sqrt{q}}\dd q = 2,\\
	V_4 &= \int_0^1 \log\left(\frac{1+\sqrt{q}}{2} \right)\dd q = \frac{1}{2} - \log(2).
\end{align*}
Combining the results gives the required solution, $\Var(S) = 4 \log(2)-\tfrac{\pi}{2} - 1$.

\end{document}